\newcommand{\bigo}[1]{\ensuremath{\mathcal{O}(#1)}}
\renewcommand{\Pr}[1]{\ensuremath{\text{Pr}\left[#1\right]}}
\newcommand{\true}{\ensuremath{\textsc{true}}}
\newcommand{\false}{\ensuremath{\textsc{false}}}
\newcommand{\nil}{\ensuremath{\textsc{null}}}
\newcommand{\Gtri}{\ensuremath{G_{\Delta}}}
\newcommand{\Connected}{\ensuremath{\textsc{Connected}}}
\newcommand{\Read}{\ensuremath{\textsc{Read}}}
\newcommand{\Write}{\ensuremath{\textsc{Write}}}
\newcommand{\Contract}{\ensuremath{\textsc{Contract}}}
\newcommand{\Expand}{\ensuremath{\textsc{Expand}}}
\newcommand{\Push}{\ensuremath{\textsc{Push}}}
\newcommand{\Pull}{\ensuremath{\textsc{Pull}}}
\newcommand{\Lock}{\ensuremath{\textsc{Lock}}}
\newcommand{\Unlock}{\ensuremath{\textsc{Unlock}}}
\newcommand{\alg}{\ensuremath{\mathcal{A}}}
\newcommand{\sched}{\ensuremath{\mathcal{S}}}
\newcommand{\shape}{\ensuremath{\texttt{shape}}}
\newcommand{\lock}{\ensuremath{\texttt{lock}}}
\newcommand{\activity}{\ensuremath{\texttt{act}}}
\newcommand{\awaken}{\ensuremath{\texttt{awaken}}}
\newcommand{\xflag}{\ensuremath{\texttt{flag}}}
\newcommand{\algHex}{\textsf{Hexagon-Formation}}
\newcommand{\state}{\ensuremath{\texttt{state}}}
\newcommand{\parent}{\ensuremath{\texttt{parent}}}
\newcommand{\dir}{\ensuremath{\texttt{dir}}}
\algrenewcommand\ALG@beginalgorithmic{\small}
\algrenewcommand\alglinenumber[1]{\scriptsize #1:}
\newcommand{\multiline}[1]{%
  \begin{tabularx}{\dimexpr\linewidth-\ALG@thistlm}[t]{@{}X@{}}
    #1
  \end{tabularx}
}
\newtheorem{convention}{Convention}
\newif\ifcomment
\newif\iffigabbrv
\newcommand{\figtext}{\iffigabbrv Fig.\else Figure\fi}
\title{The Canonical Amoebot Model: Algorithms and Concurrency Control}
\titlerunning{The Canonical Amoebot Model}
\author{Joshua J. Daymude}{School of Computing and Augmented Intelligence\\Biodesign Center for Biocomputing, Security and Society\\Arizona State University, Tempe, AZ, USA}{jdaymude@asu.edu}{https://orcid.org/0000-0001-7294-5626}{NSF (CCF-1733680), U.S.\ ARO (MURI W911NF-19-1-0233), the Momental Foundation's Mistletoe Research Fellowship, and the ASU Biodesign Institute.}
\author{Andr\'ea W. Richa}{School of Computing and Augmented Intelligence, Arizona State University, Tempe, AZ, USA}{aricha@asu.edu}{https://orcid.org/0000-0003-3592-3756}{NSF (CCF-1733680) and U.S.\ ARO (MURI W911NF-19-1-0233).}
\author{Christian Scheideler}{Department of Computer Science, Paderborn University, Paderborn, Germany}{scheideler@upb.de}{https://orcid.org/0000-0002-5278-528X}{DFG Project SCHE 1592/6-1.}
\authorrunning{J.\ J.\ Daymude, A.\ W.\ Richa, and C.\ Scheideler}
\keywords{Programmable matter, self-organization, distributed algorithms, concurrency}
\begin{document}

\maketitle

\begin{abstract}
    The \textit{amoebot model} abstracts active programmable matter as a collection of simple computational elements called \textit{amoebots} that interact locally to collectively achieve tasks of coordination and movement.
    Since its introduction at SPAA 2014, a growing body of literature has adapted its assumptions for a variety of problems; however, without a standardized hierarchy of assumptions, precise systematic comparison of results under the amoebot model is difficult.
    We propose the \textit{canonical amoebot model}, an updated formalization that distinguishes between core model features and families of assumption variants.
    A key improvement addressed by the canonical amoebot model is \textit{concurrency}.
    Much of the existing literature implicitly assumes amoebot actions are isolated and reliable, reducing analysis to the sequential setting where at most one amoebot is active at a time.
    However, real programmable matter systems are concurrent.
    The canonical amoebot model formalizes all amoebot communication as message passing, leveraging adversarial activation models of concurrent executions.
    Under this granular treatment of time, we take two complementary approaches to \textit{concurrent algorithm design}.
    We first establish a set of \textit{sufficient conditions} for algorithm correctness under any concurrent execution, embedding concurrency control directly in algorithm design.
    We then present a \textit{concurrency control framework} that uses locks to convert amoebot algorithms that terminate in the sequential setting and satisfy certain conventions into algorithms that exhibit equivalent behavior in the concurrent setting.
    As a case study, we demonstrate both approaches using a simple algorithm for \textit{hexagon formation}.
    Together, the canonical amoebot model and these complementary approaches to concurrent algorithm design open new directions for distributed computing research on programmable matter.
\end{abstract}

\section{Introduction} \label{sec:intro}

The vision of \textit{programmable matter} is to realize a material that can dynamically alter its physical properties in a programmable fashion, controlled either by user input or its own autonomous sensing of its environment~\cite{Toffoli1991-programmablematter}.
Towards a formal characterization of the minimum capabilities required by individual modules of programmable matter to achieve a given system behavior, many abstract models have been proposed over the last several decades~\cite{Angluin2006-computationnetworks,Chalk2018-freezingsimulates,Chirikjian1994-kinematicsmetamorphic,DAngelo2020-asynchronoussilent,Flocchini2019-distributedcomputing,Michail2016-simpleefficient,Patitz2014-introductiontilebased,Piranda2018-designingquasispherical,Woods2013-activeselfassembly}.
We focus on the \textit{amoebot model}~\cite{Daymude2019-computingprogrammable,Derakhshandeh2014-amoebotba} which is motivated by micro- and nano-scale robotic systems with strictly limited computational and locomotive capabilities~\cite{Blackiston2021-cellularplatform,Hines2017-softactuators,Kriegman2020-scalablepipeline,Liu2021-autoperforationtwodimensional,Xie2019-reconfigurablemagnetic,Yang2019-syntheticcells}.
The amoebot model abstracts active programmable matter as a collection of simple computational elements called \textit{amoebots} that utilize local interactions to collectively achieve tasks involving coordination, movement, and reconfiguration.
Since its introduction at SPAA 2014, the amoebot model has been used to study both fundamental problems---such as leader election~\cite{Bazzi2019-stationarydeterministic,DAngelo2020-asynchronoussilent,Daymude2017-improvedleader,Derakhshandeh2015-leaderelection,DiLuna2020-shapeformation,Dufoulon2021-efficientdeterministic,Emek2019-deterministicleader,Gastineau2019-distributedleader,Gastineau2020-leaderelection} and shape formation~\cite{Cannon2016-markovchain,Derakhshandeh2015-algorithmicframework,Derakhshandeh2016-universalshape,DiLuna2020-shapeformation,DiLuna2020-mobileram,Nokhanji2020-linereconfiguration}---as well as more complex behaviors including object coating~\cite{Daymude2018-runtimeuniversal,Derakhshandeh2017-universalcoating}, convex hull formation~\cite{Daymude2020-convexhull}, bridging~\cite{AndresArroyo2018-stochasticapproach}, spatial sorting~\cite{Cannon2019-localstochastic}, and fault tolerance~\cite{Daymude2021-bioinspiredenergy,DiLuna2018-linerecovery}.

With this growing body of amoebot model literature, it is evident that the model has evolved---and, to some extent, fractured---during its lifetime as assumptions were updated to support individual results, capture more realistic settings, or better align with other models of programmable matter.
This makes it difficult to conduct any systematic comparison between results under the amoebot model (see, e.g., the overlapping but distinct features used for comparison of leader election algorithms in~\cite{Bazzi2019-stationarydeterministic} and~\cite{Emek2019-deterministicleader}), let alone between amoebot model results and those of related models (e.g., those from the established \textit{autonomous mobile robots} literature~\cite{Flocchini2019-distributedcomputing}).
To address the ways in which the amoebot model has outgrown its original rigid formulation, we propose the \textit{canonical amoebot model} that includes a standardized, formal hierarchy of assumptions for its features to better facilitate comparison of its results.
Moreover, such standardization will more gracefully support future model generalizations by distinguishing between core features and assumption variants.

A key area of improvement addressed by the canonical amoebot model is \textit{concurrency}.
The original model treats concurrency at a high level, implicitly assuming an isolation property that prohibits concurrent amoebot actions from interfering with each other.
Furthermore, amoebots are usually assumed to be \textit{reliable}; i.e., they cannot crash or exhibit Byzantine behavior.
Under these simplifying assumptions, most existing algorithms are analyzed for correctness and runtime as if they are executed \textit{sequentially}, with at most one amoebot acting at a time.
Notable exceptions include the recent work of Di Luna et al.~\cite{DiLuna2020-shapeformation,DiLuna2018-linerecovery,DiLuna2020-mobileram} and Nokhanji and Santoro~\cite{Nokhanji2020-linereconfiguration} that adopt ideas from the ``look-compute-move'' paradigm used in autonomous mobile robots to bring the amoebot model closer to a realistic, concurrent setting.
Our canonical amoebot model furthers these efforts by formalizing all communication and cooperation between amoebots as message passing while also addressing the complexity of potential conflicts caused by amoebot movements.
This careful formalization allows us to use standard adversarial activation models from the distributed computing literature to describe concurrency~\cite{Altisen2019-introductiondistributed}.

This fine-grained treatment of concurrency in the canonical amoebot model lays the foundation for the design and analysis of \textit{concurrent amoebot algorithms}.
Concurrency adds significant design complexity, allowing concurrent amoebot actions to mutually interfere, conflict, affect outcomes, or fail in ways far beyond what is possible in the sequential setting.
As a tool for controlling concurrency, we introduce a \Lock\ operation in the canonical amoebot model enabling amoebots to attempt to gain exclusive access to their neighborhood.

We then take two complementary approaches to concurrent amoebot algorithm design: a direct approach that embeds concurrency control directly into the algorithm's design without requiring locks, and an indirect approach that relies on the \Lock\ operation to mitigate issues of concurrency.
In the first approach, we establish a set of \textit{general sufficient conditions} for amoebot algorithm correctness under any adversary---sequential or asynchronous, fair or unfair---using the \textit{hexagon formation problem} (see, e.g.,~\cite{Daymude2019-computingprogrammable,Derakhshandeh2015-algorithmicframework}) as a case study.
Our \algHex\ algorithm demonstrates that locks are not necessary for correctness even under an unfair, asynchronous adversary.
However, this algorithm's asynchronous correctness relies critically on its actions succeeding despite any concurrent action executions, which may be a difficult property to obtain in general.

For our second approach, we present a \textit{concurrency control framework} using the \Lock\ operation that, given an amoebot algorithm that terminates under any sequential execution and satisfies some basic conventions, produces an algorithm that exhibits equivalent behavior under any asynchronous execution.
This framework establishes a general design paradigm for concurrent amoebot algorithms: one can first design an algorithm with correct behavior in the simpler sequential setting and then, by ensuring it satisfies our framework's conventions, automatically obtain a correct algorithm for the asynchronous setting.
The convenience of this approach comes at the cost of limiting the full generality of the canonical amoebot model to comply with the framework's conventions.
Nevertheless, we prove that the \algHex\ algorithm satisfies these conventions and thus is compatible with the framework.

\subparagraph*{Our Contributions.}

We summarize our contributions as follows.
\begin{itemize}
    \item The \textit{canonical amoebot model}, an updated formalization that treats amoebot actions at the fine-grained level of message passing and distinguishes between core model features and hierarchies of assumption variants (Section~\ref{sec:model}).
    
    \item General \textit{sufficient conditions} for amoebot algorithm correctness under any adversary and an algorithm for \textit{hexagon formation} that satisfies these conditions (Section~\ref{sec:hexagon}).
    
    \item A \textit{concurrency control framework} that converts amoebot algorithms that terminate under any sequential execution and satisfy certain conventions into algorithms that exhibit equivalent behavior under any asynchronous execution (Section~\ref{sec:framework}), and an application of this framework to the algorithm for hexagon formation (Section~\ref{subsec:conventions}).
\end{itemize}

\subparagraph*{Relationship to Prior Versions.}

This work improves over its conference version published at DISC 2021~\cite{Daymude2021-canonicalamoebot} in several aspects.
First, this work contains all details and proofs that were omitted due to conference space constraints, including the message passing implementations of amoebot operations.
Second, whereas the original publication treated the newly added \Lock\ and \Unlock\ operations as black boxes, this work suggests a possible implementation based on the recent algorithm for local mutual exclusion in dynamic networks~\cite{Daymude2022-localmutual}.
Third, this work improves the usability of the concurrency control framework.
Of the three algorithm conventions required for compatibility with the framework in~\cite{Daymude2021-canonicalamoebot}, the most difficult to understand and verify is \textit{monotonicity}.
In fact, it was not known whether any amoebot algorithm involving movement could satisfy monotonicity, posing a serious limitation to the framework's use.
This work replaces monotonicity with a more general and more easily-understood convention, \textit{expansion-robustness}, without changing the framework's guarantees.
Finally, this work proves that the algorithm for hexagon formation (Section~\ref{sec:hexagon}) is expansion-robust, thus identifying the first algorithm involving movement that is compatible with the concurrency control framework and resolving the previously open question.

\subsection{Related Work} \label{subsec:relwork}

There are many theoretical models of programmable matter, ranging from the non-spatial \textit{population protocols}~\cite{Angluin2006-computationnetworks} and \textit{network constructors}~\cite{Michail2016-simpleefficient} to the tile-based models of \textit{DNA computing} and \textit{molecular self-assembly}~\cite{Chalk2018-freezingsimulates,Patitz2014-introductiontilebased,Woods2013-activeselfassembly}.
Most closely related to the amoebot model studied in this work is the well-established literature on \textit{autonomous mobile robots}, and in particular those using discrete, graph-based models of space (see Chapter 1 of~\cite{Flocchini2019-distributedcomputing} for a recent overview).
Both models assume anonymous individuals that can actively move, lacking a global coordinate system or common orientation, and having strictly limited computational and sensing capabilities.
In addition, stronger capabilities assumed by the amoebot model also appear in more recent variants of mobile robots, such as persistent memory in the \textit{$\mathcal{F}$-state} model~\cite{Barrameda2008-deploymentasynchronous,Flocchini2016-rendezvousconstant} and limited communication capabilities in \textit{luminous robots}~\cite{Das2012-powerlights,Das2016-autonomousmobile,DiLuna2017-mutualvisibility}.

There are also key differences between the amoebot model and the standard assumptions for mobile robots, particularly around their treatment of physical space, the structure of individuals' actions, and concurrency.
First, while the discrete-space mobile robots literature abstractly envisions robots as agents occupying nodes of a graph---allowing multiple robots to occupy the same node---the amoebot model assumes \textit{physical exclusion} that ensures each node is occupied by at most one amoebot at a time, inspired by the real constraints of self-organizing micro-robots and colloidal state machines~\cite{Blackiston2021-cellularplatform,Hines2017-softactuators,Kriegman2020-scalablepipeline,Liu2021-autoperforationtwodimensional,Xie2019-reconfigurablemagnetic,Yang2019-syntheticcells}.
Physical exclusion introduces conflicts of movement (e.g., two amoebots concurrently moving into the same space) that must be handled carefully in algorithm design.

Second, mobile robots are assumed to operate in \textit{look-compute-move} cycles, where they take an instantaneous snapshot of their surroundings (look), perform internal computation based on the snapshot (compute), and finally move to a neighboring node determined in the compute stage (move).
While it is reasonable to assume robots may instantaneously snapshot their surroundings due to all information being visible, the amoebot model---and especially the canonical version presented in this work---treats all inter-amoebot communication as asynchronous message passing, making snapshots nontrivial. Moreover, amoebots have \textit{read and write} operations allowing them to access or update variables stored in the persistent memories of their neighbors that do not fit cleanly within the look-compute-move paradigm.

Finally, the mobile robots literature has a well-established and carefully studied hierarchy of \textit{adversarial schedulers} capturing assumptions on concurrency that the amoebot model has historically lacked.
In fact, other than notable recent works that adapt look-compute-move cycles and a semi-synchronous scheduler from mobile robots to the amoebot model~\cite{DiLuna2020-shapeformation,DiLuna2018-linerecovery,DiLuna2020-mobileram,Nokhanji2020-linereconfiguration}, most amoebot literature assumes only sequential activations.
A key contribution of our canonical amoebot model presented in this work is a hierarchy of concurrency and fairness assumptions similar in spirit to that of mobile robots, though our underlying message passing design and lack of explicit action structure require different formalizations.

\section{The Canonical Amoebot Model} \label{sec:model}

\begin{table}[t]
    \centering
    \caption{Summary of assumption variants in the canonical amoebot model, each organized from most to least general.
    Variants marked with $*$ have been considered in existing literature, and variants marked with $\dag$ are the focus of the algorithmic results in this work.}
    \label{tab:variants}
    \begin{tabular}{llp{96mm}}
        \toprule
        & \textbf{Variant} & \textbf{Description} \\
        \midrule
        \parbox[t]{3mm}{\multirow{2}{*}{\rotatebox[origin=c]{90}{Space}}} & General$^*$ & $G$ is any infinite, undirected graph. \\
        & Geometric$^{*,\dag}$ & $G = \Gtri$, the triangular lattice. \\
        \midrule
        \parbox[t]{3mm}{\multirow{4}{*}{\rotatebox[origin=c]{90}{Orientation}}} & Assorted$^{*,\dag}$ & Assorted direction and chirality. \\
        & Common Chirality$^*$ & Assorted direction but common chirality. \\
        & Common Direction & Common direction but assorted chirality. \\
        & Common & Common direction and chirality. \\
        \midrule
        \parbox[t]{3mm}{\multirow{4}{*}{\rotatebox[origin=c]{90}{Memory}}} & Oblivious & No persistent memory. \\
        & Constant-Size$^{*,\dag}$ & Memory size is $\bigo{1}$. \\
        & Finite & Memory size is $\bigo{f(n)}$, some function of the system size. \\
        & Unbounded & Memory size is unbounded. \\
        \midrule
        \parbox[t]{3mm}{\multirow{5}{*}{\rotatebox[origin=c]{90}{Concurrency}}} & Asynchronous$^\dag$ & Any amoebots can be simultaneously active. \\
        & Synchronous$^*$ & Any amoebots can simultaneously execute a single action per discrete step. Each step has an evaluation phase and an execution phase. \\
        & $k$-Isolated & No amoebots within hop distance $k$ can be simultaneously active. \\
        & Sequential$^*$ & At most one amoebot is active per time. \\
        \midrule
        \parbox[t]{3mm}{\multirow{3}{*}{\rotatebox[origin=c]{90}{Fairness}}} & Unfair$^\dag$ & Some enabled amoebot is eventually activated. \\
        & Weakly Fair$^*$ & Every continuously enabled amoebot is eventually activated. \\
        & Strongly Fair & Every amoebot enabled infinitely often is activated infinitely often. \\
        \bottomrule
    \end{tabular}
\end{table}

We introduce the \textit{canonical amoebot model} as an update to the model's original formulation~\cite{Daymude2019-computingprogrammable,Derakhshandeh2014-amoebotba}.
This update has two main goals.
First, we model all amoebot actions and operations using message passing, leveraging this finer level of granularity for a formal treatment of concurrency.
Second, we clearly delineate which assumptions are fixed features of the model and which have stronger and weaker variants, providing unifying terminology for future amoebot model research.
Unless variants are explicitly listed, the following description of the canonical amoebot model details its core, fixed assumptions.
The variants are summarized in Table~\ref{tab:variants}; we anticipate that this list will grow as future research develops new adaptations and generalizations of the model.

In the canonical amoebot model, programmable matter consists of individual, homogeneous computational elements called \textit{amoebots}.
The structure of an amoebot system is represented as a subgraph of an infinite, undirected graph $G = (V,E)$ where $V$ represents all relative positions an amoebot can occupy and $E$ represents all atomic movements an amoebot can make.
Each node in $V$ can be occupied by at most one amoebot at a time.
There are many potential variants with respect to space; the most common is the \textit{geometric} variant that assumes $G = \Gtri$, the triangular lattice (\figtext~\ref{fig:modellattice}).

\begin{figure}[t]
    \centering
    \begin{subfigure}{.3\textwidth}
    	\centering
    	\includegraphics[width=\textwidth]{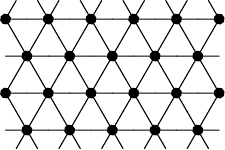}
    	\caption{\centering}
    	\label{fig:modellattice}
    \end{subfigure} \hfill
    \begin{subfigure}{.3\textwidth}
    	\centering
    	\includegraphics[width=\textwidth]{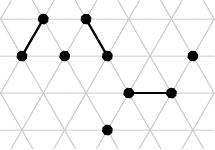}
    	\caption{\centering}
    	\label{fig:modelparticles}
    \end{subfigure} \hfill
    \begin{subfigure}{.3\textwidth}
    	\centering
    	\includegraphics[width=\textwidth]{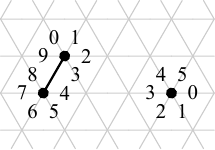}
    	\caption{\centering}
    	\label{fig:modellabels}
    \end{subfigure}
    \caption{The Canonical Amoebot Model.
    (a) A section of the triangular lattice $\Gtri$ used in the geometric variant; nodes of $V$ are shown as black circles and edges of $E$ are shown as black lines.
    (b) Expanded and contracted amoebots; $\Gtri$ is shown in gray, and amoebots are shown as black circles.
    Amoebots with a black line between their nodes are expanded.
    (c) Two amoebots that agree on their chirality but not on their direction, using different offsets for their clockwise-increasing port labels.}
    \label{fig:model}
\end{figure}

An amoebot has two \textit{shapes}: \textsc{contracted}, meaning it occupies a single node in $V$, or \textsc{expanded}, meaning it occupies a pair of adjacent nodes in $V$ (\figtext~\ref{fig:modelparticles}).
For a contracted amoebot, the unique node it occupies is considered its \textit{head}; for an expanded amoebot, the node it has most recently come to occupy (due to movement) is considered its head and the other is its \textit{tail}.
Each amoebot keeps a collection of ports---one for each edge incident to the node(s) it occupies---that are labeled consecutively according to its own local, persistent \textit{orientation}.
For any space variant where $G$ is a planar graph (i.e., those that can be thought of as ``two-dimensional''), an amoebot's orientation depends on its \textit{direction}---i.e., which incident edge it perceives as ``north''---and its \textit{chirality}, or sense of clockwise and counter-clockwise rotation.
Different variants may assume that amoebots share one, both, or neither of their directions and chiralities in common (see Table~\ref{tab:variants}); \figtext~\ref{fig:modellabels} gives an example of the \textit{common chirality} variant where amoebots share a sense of clockwise rotation but have different directions.

Two amoebots occupying adjacent nodes are said to be \textit{neighbors}.
Although each amoebot is \textit{anonymous}, lacking a unique identifier, we assume an amoebot can locally identify its neighbors using their port labels.
In particular, we assume that amoebots $A$ and $B$ connected via ports $p_A$ and $p_B$ each know one another's orientations and labels for $p_A$ and $p_B$.
If $A$ is expanded, we also assume $B$ knows the direction $A$ is expanded in with respect to its own local direction, and vice versa.
This is sufficient for an amoebot to reconstruct which adjacent nodes are occupied by the same neighbor and to translate its local orientation into those of its neighbors, but is not so strong so as to collapse the hierarchy of orientation assumptions.
More details on an amoebot's anatomy are given in Section~\ref{subsec:modelanatomy}.

An amoebot's functionality is partitioned between a higher-level \textit{application layer} and a lower-level \textit{system layer}.
Algorithms controlling an amoebot's behavior are designed from the perspective of the application layer.
The system layer is responsible for an amoebot's core functions and exposes a limited programming interface of \textit{operations} to the application layer that can be used in amoebot algorithms.
The operations are defined in Section~\ref{subsec:modeloperations} and their organization into algorithms is described in Section~\ref{subsec:modelalgs}.
Throughout, we assume amoebots execute their algorithms \textit{reliably}, without crash or Byzantine faults.\footnote{As we discuss in Section~\ref{sec:discuss}, designing \textit{fault tolerant} algorithms is an important research direction for programmable matter. We leave the formalization of different fault models under the canonical amoebot model for future work.}
Although theoretical models usually abstract away from a system layer, we describe it in detail to justify the interface to the application layer since amoebots are not a standard computing platform.
In future publications, one may abstract from the system layer and focus only on the interface.

\subsection{Amoebot Anatomy} \label{subsec:modelanatomy}

Each amoebot has memory whose size is a model variant; the standard assumption is \textit{constant-size} memory.
An amoebot's memory consists of two parts: a persistent \textit{public memory} that is read-writeable by the system layer but only accessible to the application layer via communication operations (see Section~\ref{subsubsec:operationscomms}), and a volatile \textit{private memory} that is inaccessible to the system layer but read-writable by the application layer.
The public memory of an amoebot $A$ contains (\textit{i}) the shape of $A$, denoted $A.\shape \in \{\textsc{contracted}, \textsc{expanded}\}$, (\textit{ii}) the lock state of $A$, denoted $A.\lock$ (see Section~\ref{subsubsec:operationsconcurrency}), and (\textit{iii}) any variables used in the algorithm being run by the application layer.
An amoebot's private memory can be modified by the application layer as needed.

Neighboring amoebots (i.e., those occupying adjacent nodes) form \textit{connections} via their ports facing each other.
An amoebot's system layer receives instantaneous feedback whenever a new connection is formed or an existing connection is broken.
Communication between connected neighbors is achieved via \textit{message passing}.
To facilitate message passing communication, each of an amoebot's ports has a FIFO \textit{outgoing message buffer} managed by the system layer that can store up to a fixed (constant) number of messages waiting to be sent to the neighbor incident to the corresponding port.
If two neighbors disconnect due to some movement, their system layers immediately flush the corresponding message buffers of any pending messages.
Otherwise, we assume that any pending message is sent to the connected neighbor in FIFO order in finite time.
Incoming messages are processed as they are received.

\subsection{Amoebot Operations} \label{subsec:modeloperations}

Operations provide the application layer with a programming interface for controlling the amoebot's behavior; the application layer calls operations and the system layer executes them.
We assume the execution of an operation is \textit{blocking} for the application layer; that is, the application layer can only call one operation at a time.
We formally define the communication, movement, and concurrency control operations and their execution details in Sections~\ref{subsubsec:operationscomms}--\ref{subsubsec:operationsconcurrency}; see Table~\ref{tab:operations} for a summary and Appendix~\ref{app:pseudocode} for complete distributed pseudocode.
As we will show in Section~\ref{subsubsec:operationscomplexity}, each operation is carefully designed so that any operation execution terminates
% ---either successfully or in failure---
in finite time (Observation~\ref{obs:operationstime}) and, at any time, there are at most a constant number of messages being sent or received between any pair of neighboring amoebots as a result of any set of concurrent operation executions (Observation~\ref{obs:operationsspace}).
Combined with the blocking and reliability assumptions, these design principles prohibit outgoing message buffer overflow and deadlocks in operation executions.

\begin{table}[t]
    \centering
    \caption{Summary of operations exposed by an amoebot's system layer to its application layer.}
    \label{tab:operations}
    \begin{tabular}{lp{107mm}}
        \toprule
        \textbf{Operation} & \textbf{Return Value on Success} \\
        \midrule
        $\Connected(p)$ & \true\ iff a neighboring amoebot is connected via port $p$ \\
        $\Connected()$ & $[c_0, \ldots, c_{k-1}] \in \{N_1, \ldots, N_k, \false\}^k$ where $c_p = N_i$ if $N_i$ is the locally identified neighbor connected via port $p$ and $c_p = \false$ otherwise \\
        $\Read(p, x)$ & The value of $x$ in the public memory of this amoebot if $p = \bot$ or of the neighbor incident to port $p$ otherwise \\
        $\Write(p, x, x_{val})$ & Confirmation that the value of $x$ was updated to $x_{val}$ in the public memory of this amoebot if $p = \bot$ or of the neighbor incident to port $p$ otherwise \\
        \midrule
        $\Contract(v)$ & Confirmation of the contraction out of node $v \in \{\textsc{head}, \textsc{tail}\}$ \\
        $\Expand(p)$ & Confirmation of the expansion into the node incident to port $p$ \\
        $\Pull(p)$ & Confirmation of the pull handover with the neighbor incident to port $p$ \\
        $\Push(p)$ & Confirmation of the push handover with the neighbor incident to port $p$ \\
        \midrule
        $\Lock()$ & Port labels corresponding to the amoebots that were locked \\
        $\Unlock(\mathcal{L})$ & Confirmation that the amoebots of $\mathcal{L}$ were unlocked \\
        \bottomrule
    \end{tabular}
\end{table}

\subsubsection{Communication Operations} \label{subsubsec:operationscomms}

An amoebot checks for the presence of neighbors using the \Connected\ operations and exchanges information with its neighbors using the \Read\ and \Write\ operations.
When the application layer calls $\Connected(p)$, the system layer simply returns \true\ if there is a neighbor connected via port $p$ and \false\ otherwise.
The application layer may instead call $\Connected()$ to obtain a full snapshot of its current port connectivity.
Specifically, the system layer returns an array $[c_0, \ldots, c_{k-1}]$ mapping the amoebot's $k$ ports to local identifiers for its neighbors, of which there can be at most $k$.
If there is no neighbor connected via port $p$, then $c_p = \false$; otherwise, $c_p = N_i$ where $N_i \in \{N_1, \ldots, N_k\}$ locally identifies the neighbor connected via port $p$ (see \figtext~\ref{fig:operationsconn}).
Note that, depending on amoebots' shapes and the geometry of the space variant, multiple ports may connect to the same neighbor $N_i$.

\begin{figure}[t]
    \centering
    \begin{subfigure}{0.45\textwidth}
    	\centering
    	\includegraphics[scale=1.4]{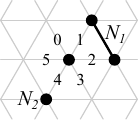}
    	\caption{\centering $[\false, N_1, N_1, \false, N_2, \false]$}
    	\label{fig:operationsconna}
    \end{subfigure}
    \hfill
    \begin{subfigure}{0.53\textwidth}
    	\centering
    	\includegraphics[scale=1.4]{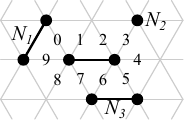}
    	\caption{\centering $[N_1, \false, \false, N_2, \false, N_3, N_3, N_3, \false, N_1]$}
    	\label{fig:operationsconnc}
    \end{subfigure}
    \caption{Different neighborhood configurations for an amoebot $A$ and their corresponding $\Connected()$ return values.
    The ports of $A$ are shown with their labels and neighboring amoebots are shown with their local identifiers according to $A$.}
    \label{fig:operationsconn}
\end{figure}

The application layer calls $\Read(p, x)$ to issue a request to read the value of a variable $x$ in the public memory of the neighbor connected via port $p$.
Analogously, the application layer calls $\Write(p, x, x_{val})$ to issue a request to update the value of a variable $x$ in the public memory of the neighbor connected via port $p$ to a new value $x_{val}$.
If $p = \bot$, an amoebot's own public memory is accessed instead of a neighbor's.

\begin{figure}[t]
    \centering
    \begin{subfigure}{0.53\textwidth}
    	\centering
    	\includegraphics[width=\textwidth]{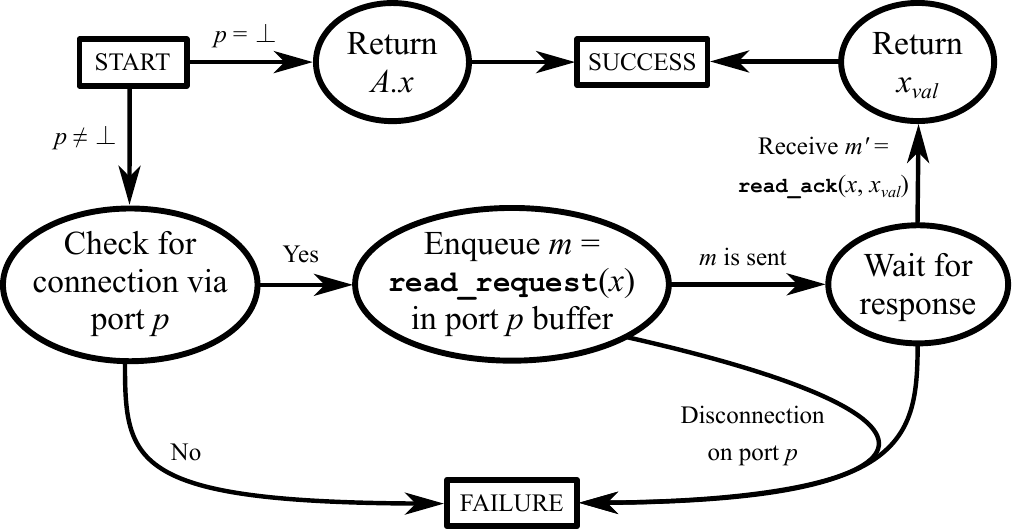}
    	\caption{\centering $\Read(p, x)$}
    	\label{fig:operationsread}
    \end{subfigure}
    \hfill
    \begin{subfigure}{0.46\textwidth}
    	\centering
    	\includegraphics[width=\textwidth]{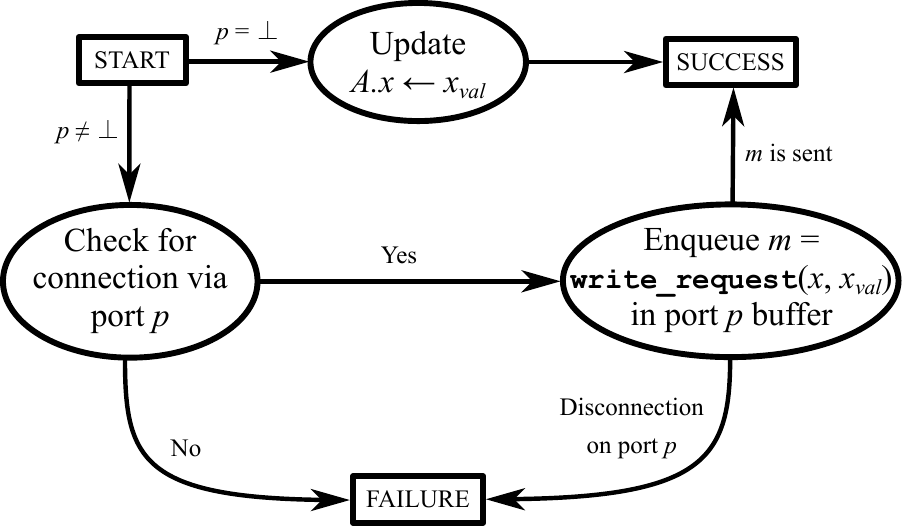}
    	\caption{\centering $\Write(p, x, x_{val})$}
    	\label{fig:operationswrite}
    \end{subfigure}
    \caption{Execution flows of the \Read\ and \Write\ operations for the calling amoebot $A$.}
    \label{fig:operationscomms}
\end{figure}

Suppose that the application layer of an amoebot $A$ calls $\Read(p, x)$, illustrated in \figtext~\ref{fig:operationsread}.
If $p = \bot$, the system layer simply returns the value of $x$ in the public memory of $A$ to the application layer and this \Read\ succeeds.
Otherwise, the system layer checks if there is a neighbor connected via port $p$: if so, the system layer enqueues $m = \texttt{read\_request}(x)$ in the message buffer on $p$; otherwise, this \Read\ fails.
Let $B$ be the neighbor connected to $A$ via port $p$ and let $p'$ be its corresponding port.
Eventually, $m$ is sent in FIFO order and the system layer of $B$ receives it, prompting it to access variable $x$ with value $x_{val}$ in its public memory and enqueue $m' = \texttt{read\_ack}(x, x_{val})$ in the message buffer on $p'$.
Message $m'$ is eventually sent in FIFO order by $B$ and received by the system layer of $A$, prompting it to unpack $x_{val}$ and return it to the application layer, successfully completing this \Read.
If $A$ and $B$ are disconnected (i.e., due to a movement) any time after $A$ enqueues message $m$ but before $A$ receives message $m'$, this \Read\ fails.

A $\Write(p, x, x_{val})$ operation is executed analogously, though it does not need to wait for an acknowledgement after its write request is sent (see \figtext~\ref{fig:operationswrite}).

\subsubsection{Movement Operations} \label{subsubsec:operationsmoves}

The application layer can direct the system layer to initiate movements using the four movement operations \Contract, \Expand, \Pull, and \Push.
An expanded amoebot can \Contract\ into either node it occupies; a contracted amoebot can \Expand\ into an unoccupied adjacent node.
Neighboring amoebots can coordinate their movements in a \textit{handover}, which can occur in one of two ways.
A contracted amoebot $A$ can \Push\ an expanded neighbor $B$ by expanding into a node occupied by $B$, forcing it to contract.
Alternatively, an expanded amoebot $B$ can \Pull\ a contracted neighbor $A$ by contracting, forcing $A$ to expand into the neighbor it is vacating.

\subparagraph*{Contract.}

Suppose that the application layer of an amoebot $A$ calls $\Contract(v)$, where $v \in \{\textsc{head}, \textsc{tail}\}$ (see \figtext~\ref{fig:operationscontract}).
The system layer of $A$ first determines if this contraction is valid: if $A.\shape \neq \textsc{expanded}$ or $A$ is currently involved in a handover, this \Contract\ fails.
Otherwise, the system layer releases all connections to neighboring amoebots via ports on node $v$ and begins contracting out of node $v$.
Once the contraction completes, the system layer updates $A.\shape \gets \textsc{contracted}$, successfully completing this \Contract.

\begin{figure}[t]
    \centering
    \begin{subfigure}{0.41\textwidth}
    	\centering
    	\includegraphics[width=\textwidth]{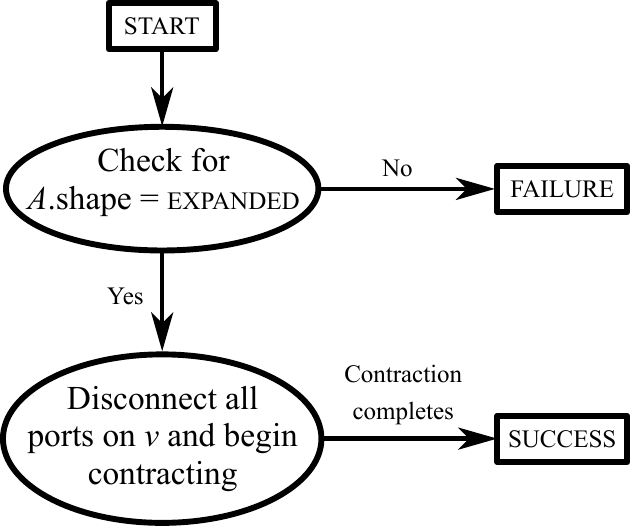}
    	\caption{\centering $\Contract(v)$}
    	\label{fig:operationscontract}
    \end{subfigure}
    \hfill
    \begin{subfigure}{0.56\textwidth}
    	\centering
    	\includegraphics[width=\textwidth]{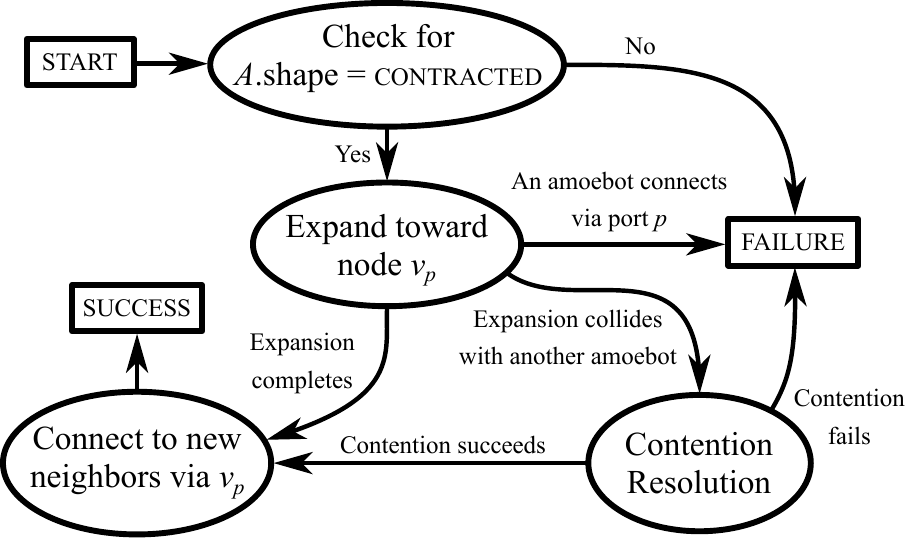}
    	\caption{\centering $\Expand(p)$}
    	\label{fig:operationsexpand}
    \end{subfigure} \\ \medskip
    \begin{subfigure}{\textwidth}
    	\centering
    	\includegraphics[scale=0.52]{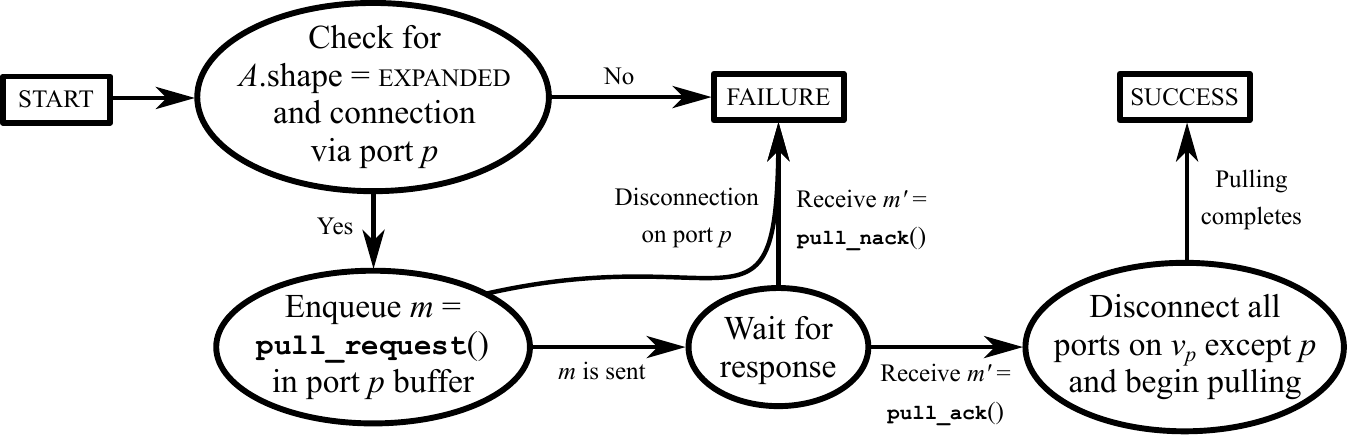}
    	\caption{\centering $\Pull(p)$}
    	\label{fig:operationspull}
    \end{subfigure} \\
    \medskip
    \begin{subfigure}{\textwidth}
    	\centering
    	\includegraphics[scale=0.52]{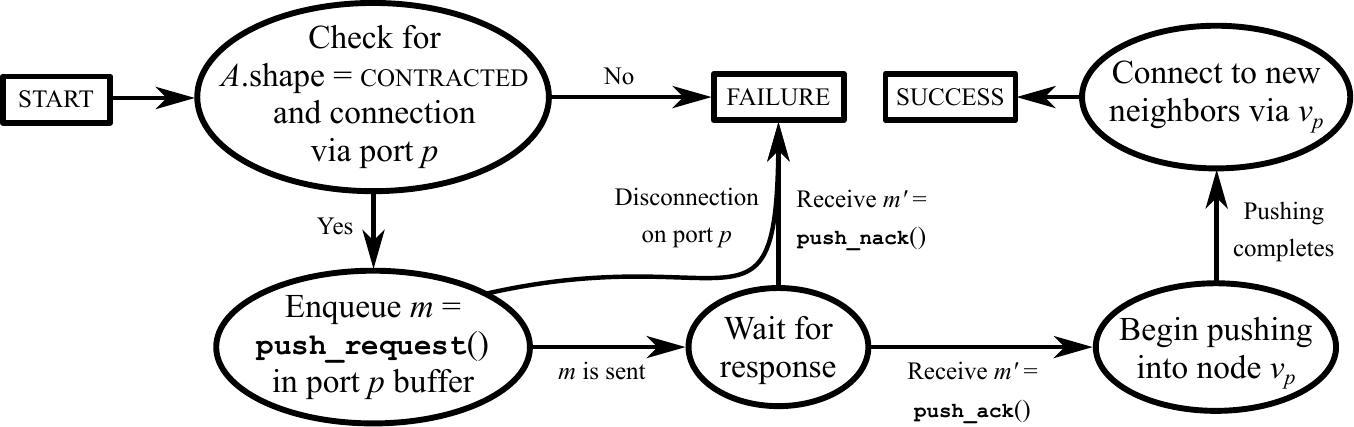}
    	\caption{\centering $\Push(p)$}
    	\label{fig:operationspush}
    \end{subfigure}
    \caption{Execution flows of the movement operations for the calling amoebot $A$.}
    \label{fig:operationsmoves}
\end{figure}

\subparagraph*{Expand.}

Suppose an amoebot $A$ calls $\Expand(p)$ for one of its ports $p$ (see \figtext~\ref{fig:operationsexpand}); let $v_p$ denote the node $A$ is expanding into.
If $A.\shape \neq \textsc{contracted}$, $A$ is already involved in a handover, or $v_p$ is already occupied by another amoebot, this \Expand\ fails.
Otherwise, $A$ begins its expansion into node $v_p$.
Once this expansion completes, the system layer establishes connections with all neighbors adjacent to $v_p$ and updates $A.\shape \gets \textsc{expanded}$, successfully completing this \Expand.
However, $A$ may \textit{collide} with other amoebots while expanding into $v_p$.
We assume that the system layer can detect when a collision has occurred and, on collision, performs \textit{contention resolution} such that exactly one contending amoebot succeeds in completing its expansion into $v_p$ while all others fail within finite time.
We abstract away from the details of this contention resolution mechanism for the sake of clarity, but give one possible implementation in Appendix~\ref{app:expandcontend} to demonstrate its feasibility.

\subparagraph*{Pull and Push.}

Suppose an amoebot $A$ calls $\Pull(p)$ for one of its ports $p$ (see \figtext~\ref{fig:operationspull}); let $v_p$ denote the node $A$ intends to vacate in this pull handover.
If $A.\shape \neq \textsc{expanded}$, $A$ is already involved in a handover, or $A$ is not connected to a neighbor via port $p$, this \Pull\ fails.
Otherwise, the system layer of $A$ enqueues $m = \texttt{pull\_request}()$ in the message buffer on port $p$.
Let $B$ be the neighbor connected to $A$ via port $p$.
Eventually, message $m$ is sent in FIFO order and the system layer of $B$ receives it.
If $B$ is not involved in another movement and $B.\shape = \textsc{contracted}$, its system layer prepares message $m' = \texttt{pull\_ack}()$; otherwise, it sets $m' = \texttt{pull\_nack}()$.
In either case, the system layer of $B$ enqueues $m'$ in the message buffer on its port facing $A$.
If $A$ and $B$ are disconnected any time after $A$ enqueues message $m$ but before $A$ receives message $m'$, this \Pull\ fails; otherwise, message $m'$ is eventually sent in FIFO order by $B$ and received by the system layer of $A$.
If $m' = \texttt{pull\_nack}()$, this \Pull\ fails.
Otherwise, if $m' = \texttt{pull\_ack}()$, $A$ disconnects from all ports on node $v_p$ (except for $p$) and $A$ and $B$ begin their coordinated handover of node $v_p$.
When $A$ completes its contraction, it updates $A.\shape \gets \textsc{contracted}$; analogously, when $B$ completes its expansion, it updates $B.\shape \gets \textsc{expanded}$ and establishes connections to its new neighbors adjacent to node $v_p$.
This successfully completes this \Pull.

A $\Push(p)$ operation is executed analogously (see \figtext~\ref{fig:operationspush}).

\subsubsection{Concurrency Control Operations} \label{subsubsec:operationsconcurrency}

The amoebot model's concurrency control operations \Lock\ and \Unlock\ encapsulate a variant of the classical mutual exclusion problem in which an amoebot attempts to gain exclusive control over itself and the amoebots in its neighborhood.
Achieving this behavior in the system layer's setting of asynchronous message passing with dynamic neighbor connections is non-trivial.
Daymude et al.\ recently solved this problem in their algorithm for ``local mutual exclusion''~\cite{Daymude2022-localmutual} where nodes in a dynamic graph seek to acquire exclusive locks over themselves and their ``persistent'' neighbors, i.e., nodes that remain connected to them over the time interval of the lock request.
Here, we focus on the properties that \Lock\ and \Unlock\ must satisfy and refer the interested reader to~\cite{Daymude2022-localmutual} for one possible implementation.

Each amoebot $A$ stores a variable $A.\lock \in \{\bot, -1, \ldots, \Delta - 1\}$, where $\Delta$ is the maximum number of neighbors an amoebot can have based on the assumed space variant, that is equal to $\bot$ if $A$ is unlocked, $-1$ if $A$ has locked itself, and $i \in \{0, \ldots, \Delta - 1\}$ if $A$ is locked by its neighbor connected via port $i$.
An amoebot $A$ calls $\Lock()$ to issue a lock request to itself and the neighbors it has at the start of this execution.
To succeed, this \Lock\ operation must lock $A$ and every \textit{persistent} neighbor of $A$ that remained connected to $A$ throughout its \Lock\ execution, setting their \lock\ variables accordingly.
On success, the \Lock\ operation returns the \textit{lock set} $\mathcal{L}$ of port labels corresponding to the amoebots $A$ has locked.
We assume that a \Lock\ operation either succeeds or fails in finite time.
An amoebot calls $\Unlock(\mathcal{L}')$ to release its locks on itself or any neighbors connected via port labels in $\mathcal{L}'$, resetting their \lock\ variables to $\bot$; this operation always succeeds.

Any implementation of these operations must ensure that any set of \Lock\ and \Unlock\ executions satisfies: (\textit{i}) \textit{mutual exclusion}, meaning that the amoebots' lock sets must be disjoint at all times, and (\textit{ii}) \textit{deadlock freedom}, meaning that if a \Lock\ operation is initiated at time $t$, then some \Lock\ execution succeeds after time $t$.
The local mutual exclusion algorithm of~\cite{Daymude2022-localmutual} satisfies both of these properties; in fact, it even satisfies the stronger property of \textit{lockout freedom}, guaranteeing that every \Lock\ execution eventually succeeds.

\subsubsection{Operation Time and Space Complexity} \label{subsubsec:operationscomplexity}

With the communication, movement, and concurrency control operations defined, we now briefly characterize their time and space complexity.
Recall that we assume amoebots execute reliably, without crash or Byzantine faults.
The \Connected\ operations are effectively instantaneous as the system layer has immediate access to the physical information about its port connectivity; moreover, these operations do not involve any messages.
The complexity of the \Lock\ and \Unlock\ operations depend on their implementation; e.g., the local mutual exclusion algorithm of~\cite{Daymude2022-localmutual} guarantees termination in finite time and that at most two messages are in transit between any pair of neighbors at any time.
For the remaining operations, recall from Sections~\ref{subsec:modelanatomy} and~\ref{subsubsec:operationsmoves} that we assume (\textit{i}) messages pending in an outgoing message buffer are each sent to the connected neighbor in FIFO order in finite time and are immediately flushed on disconnection, and (\textit{ii}) every physical movement completes in finite time.
We first consider the execution of each operation independently.
\begin{itemize}
    \item A \Read\ operation by an amoebot $A$ from its own public memory is immediate and does not involve any messages.
    If instead $A$ reads from the public memory of a neighbor $B$, at most two messages are used---a \texttt{read\_request} sent from $A$ to $B$ followed by a \texttt{read\_ack} sent from $B$ to $A$---that are each delivered in finite time by (\textit{i}).
    Successful termination occurs when $A$ receives \texttt{read\_ack} while a disconnection between $A$ and $B$ results in immediate failure with any related messages being flushed.

    \item A \Write\ operation by an amoebot $A$ to its own public memory is immediate and does not involve any messages.
    If instead $A$ writes to the public memory of a neighbor $B$, one message is used: a \texttt{write\_request} sent from $A$ to $B$ that is delivered in finite time by (\textit{i}).
    Successful termination occurs when $A$ sends \texttt{write\_request} while a disconnection between $A$ and $B$ results in immediate failure with any related messages being flushed.

    \item A \Contract\ operation does not involve any messages.
    It either fails at its start or succeeds after releasing its connections and completing its contraction, which must occur in finite time by (\textit{ii}).

    \item An \Expand\ operation does not involve any messages.
    It either fails at its start or is able to begin its expansion.
    If there are no collisions, (\textit{ii}) guarantees the expansion completes in finite time; otherwise, the contention resolution mechanism guarantees that exactly one contending amoebot succeeds while all others fail within finite time.

    \item A \Pull\ operation by an amoebot $A$ with a neighbor $B$ involves at most two messages---a \texttt{push\_request} sent from $A$ to $B$ and either a \texttt{push\_ack} or a \texttt{push\_nack} sent from $B$ to $A$---that are delivered in finite time by (\textit{i}).
    The contraction of $A$ and expansion of $B$ must complete in finite time by (\textit{ii}).
    Any failures can only happen earlier.

    \item A \Push\ operation is symmetric to a \Pull\ and thus satisfies the same properties.
\end{itemize}
This immediately reveals the following observation regarding time complexity.

\begin{observation} \label{obs:operationstime}
    Any execution of an operation in the canonical amoebot model terminates---either successfully or in failure---in finite time.
\end{observation}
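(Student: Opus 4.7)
The plan is to prove Observation~\ref{obs:operationstime} by case analysis over the operations enumerated in Table~\ref{tab:operations}, essentially formalizing the bullet-point discussion that immediately precedes the observation into a single unified argument. Two invariants do all the work: assumption (\textit{i}) that every enqueued message is either sent in FIFO order in finite time or flushed on disconnection, and assumption (\textit{ii}) that every physical movement completes in finite time. I will also need to invoke the abstract guarantee that the contention resolution mechanisms for \Expand\ and \Lock\ terminate in finite time.

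First I would dispatch the trivial cases. Any $\Connected$ call returns immediately since the system layer has direct physical access to port information and no messages are involved. A $\Read$ or $\Write$ with $p = \bot$ is likewise immediate. A $\Contract$ involves no messages and either fails instantly at the validity check or completes after a single physical contraction, which terminates in finite time by (\textit{ii}).

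Next I would handle the message-based operations by bounding the number of messages exchanged and appealing to (\textit{i}). A $\Read(p,x)$ with $p \neq \bot$ uses at most two messages (\texttt{read\_request} and \texttt{read\_ack}); each is either delivered in finite time or flushed on disconnection, and in the latter case the operation fails immediately upon detecting the disconnection. A $\Write(p,x,x_{val})$ with $p \neq \bot$ uses exactly one message, handled analogously. A $\Pull(p)$ (and symmetrically $\Push(p)$) uses at most two messages before any physical movement starts; in the success branch, the subsequent handover consists of one contraction and one expansion, both terminating in finite time by (\textit{ii}), and in every failure branch termination is reached even earlier. An $\Unlock(\mathcal{L})$ enqueues one \texttt{unlock} message per amoebot in $\mathcal{L}$ and terminates once all such messages have been either sent (finite time by (\textit{i})) or flushed by disconnection, so the message queue empties in finite time.

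The genuinely delicate cases are $\Expand$ and $\Lock$, which depend on contention resolution. For $\Expand(p)$, the operation either fails at the initial validity check, or begins a physical expansion that by (\textit{ii}) completes in finite time in the collision-free subcase, or encounters a collision; in the last case, finite-time termination is precisely what the contention resolution assumption in Section~\ref{subsubsec:operationsmoves} guarantees (and what the implementation in Appendix~\ref{app:expandcontend} realizes). For $\Lock()$, I would note that it sends at most three messages per neighbor (\texttt{lock\_request}, one of \texttt{lock\_ack}/\texttt{lock\_nack}/\texttt{lock\_conflict}, and possibly \texttt{lock}), each delivered or flushed in finite time by (\textit{i}), and that the lock-conflict subcase again reduces to the contention resolution assumption from Section~\ref{subsubsec:operationsconcurrency}. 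The main obstacle is therefore not any individual bound but the need to explicitly foreground this dependence on the abstracted contention resolution guarantees; once those are quoted as hypotheses, the rest of the proof is a routine enumeration. Combining all cases yields the observation.
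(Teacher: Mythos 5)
Your proposal is correct and follows essentially the same route as the paper, which establishes the observation via exactly this operation-by-operation enumeration resting on the finite-time message delivery/flushing assumption, the finite-time movement assumption, and the abstracted finite-time contention resolution guarantees for \Expand\ and \Lock. Your explicit foregrounding of the contention-resolution dependence is a fair emphasis but not a departure from the paper's argument.
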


By the blocking assumption, the application layer of each amoebot can execute at most one operation per time.
The above discussion shows that each operation has at most one message in transit per time.
Thus, there can be at most two messages in any outgoing message buffer at any time, e.g., in the situation where an amoebot $A$ is executing an operation that involves sending a message $m_1$ to a neighbor $B$ while $B$ is concurrently executing an operation that requires $A$ to send a message $m_2$ back to $B$ in response to some prior message sent from $B$.
This yields the following observation, demonstrating that constant-size buffers suffice to avoid overflow.

\begin{observation} \label{obs:operationsspace}
    At any time, there are at most a constant number of messages in transit (i.e., being sent or received) between any pair of neighboring amoebots as a result of any set of operation executions.
\end{observation}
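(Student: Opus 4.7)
The plan is to fix any pair of neighbors $A$, $B$ and any time $t$, and bound the in-transit traffic in each direction by a constant (the two directions are symmetric, so I focus on $A$-to-$B$). By the blocking assumption each amoebot has at most one operation in execution at time $t$; call these $\mathrm{op}_A$ and $\mathrm{op}_B$. Every message in transit from $A$ to $B$ must fall into one of three categories: (i) a message that $\mathrm{op}_A$ proactively enqueued toward $B$; (ii) a response that $A$'s system layer generated on behalf of a request earlier dispatched by $\mathrm{op}_B$ to $A$; or (iii) a residual message left in the channel by an operation of $A$ that has already terminated.

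Category (i) contributes at most one message. The per-operation case analysis immediately preceding the observation shows that each operation has at most one in-flight message per neighbor at any instant: the operations that send multiple messages per neighbor (\Read, \Pull, \Push, \Lock) all issue a request and then block until the matching response arrives before enqueueing their next message. Category (ii) is bounded identically --- each request that $A$ receives elicits at most one response, and $\mathrm{op}_B$ has at most one outstanding request to $A$ at a time by the same argument --- so it also contributes at most one message. These two categories already match the two-message scenario illustrated in the paragraph just above the statement.

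The delicate step is category (iii). Inspection of the termination conditions shows that only \Write, \Lock, and \Unlock can terminate before their last per-neighbor message has been delivered, and each leaves at most one such per-neighbor message in the channel. Since the blocking assumption forbids $A$ from starting a second operation while its predecessor is still running, and assumption (i) at the top of Section~\ref{subsubsec:operationscomplexity} guarantees FIFO delivery in finite time, at most one residual message per neighbor from the immediately preceding operation can persist at a fixed $t$. Combined with categories (i)--(ii) and the symmetric reverse-direction bound, this yields the claimed constant (concretely a handful of messages, with the worst case attained by \Lock), and the same bookkeeping --- which is the only part I expect to require real care --- also justifies the constant-capacity outgoing buffers promised by the model. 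I expect Observation~\ref{obs:operationstime} to be invoked to rule out any prior operation ``stalling'' and continuing to feed the buffer across operation boundaries; once that is in hand the proof reduces to adding three $O(1)$ contributions.
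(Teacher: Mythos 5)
Your proposal is correct and follows essentially the same route as the paper, whose entire argument is the paragraph preceding the observation: the blocking assumption gives one operation per amoebot at a time, the per-operation case analysis gives at most one in-flight message per operation per neighbor, and the two-message buffer scenario you cite as categories (i)--(ii) is exactly the paper's worst case. Your category (iii) of residual messages from already-terminated \Write, \Lock, and \Unlock\ operations is extra bookkeeping the paper does not spell out (it bounds only the buffer contents at two), but it does not change the approach and still yields the claimed constant.
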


\subsection{Amoebot Actions, Algorithms and Executions} \label{subsec:modelalgs}

Following the message passing literature, we specify distributed algorithms in the amoebot model as sets of \textit{actions} to be executed by the application layer, each of the form:
\[\langle label\rangle: \langle guard\rangle \to \langle operations\rangle\]
An action's \textit{label} specifies its name.
Its \textit{guard} is a Boolean predicate determining whether an amoebot $A$ can execute it based on the connected ports of $A$---i.e., which nodes adjacent to $A$ are (un)occupied---and information from the public memories of $A$ and its neighbors.
An action is \textit{enabled} for an amoebot $A$ if its guard is true for $A$, and an amoebot is \textit{enabled} if it has at least one enabled action.
An action's \textit{operations} specify the finite sequence of operations and computation in private memory to perform if this action is executed.
The control flow of this computation may optionally include \textit{randomization} to generate random values and \textit{error handling} to address any operation executions resulting in failure.

Each amoebot executes its own algorithm instance independently and reliably, without crash or Byzantine faults.
An amoebot is said to be \textit{active} if its application layer is executing an action and is \textit{inactive} otherwise.
An amoebot can begin executing an action if and only if it is inactive; i.e., an amoebot can execute at most one action at a time.
On becoming active, an amoebot $A$ first evaluates which of its actions $\alpha_i : g_i \to ops_i$ are enabled.
Since each guard $g_i$ is based only on the connected ports of $A$ and the public memories of $A$ and its neighbors, each $g_i$ can be evaluated using the \Connected\ and \Read\ operations.
If no action is enabled, $A$ returns to inactive; otherwise, $A$ chooses an enabled action $\alpha_i$ and executes the operations and private computation specified by $ops_i$.
Recall from Section~\ref{subsec:modeloperations} that each operation is guaranteed to terminate (either successfully or with a failure) in finite time.
Thus, since $A$ is reliable and $ops_i$ consists of a finite sequence of operations and finite computation, each action execution is also guaranteed to terminate in finite time after which $A$ returns to inactive.
An action execution \textit{fails} if any of its operations' executions result in a failure that is not addressed with error handling and \textit{succeeds} otherwise.

\begin{figure}[t]
    \centering
    \begin{subfigure}{.49\textwidth}
        \centering
        \includegraphics[width=\textwidth]{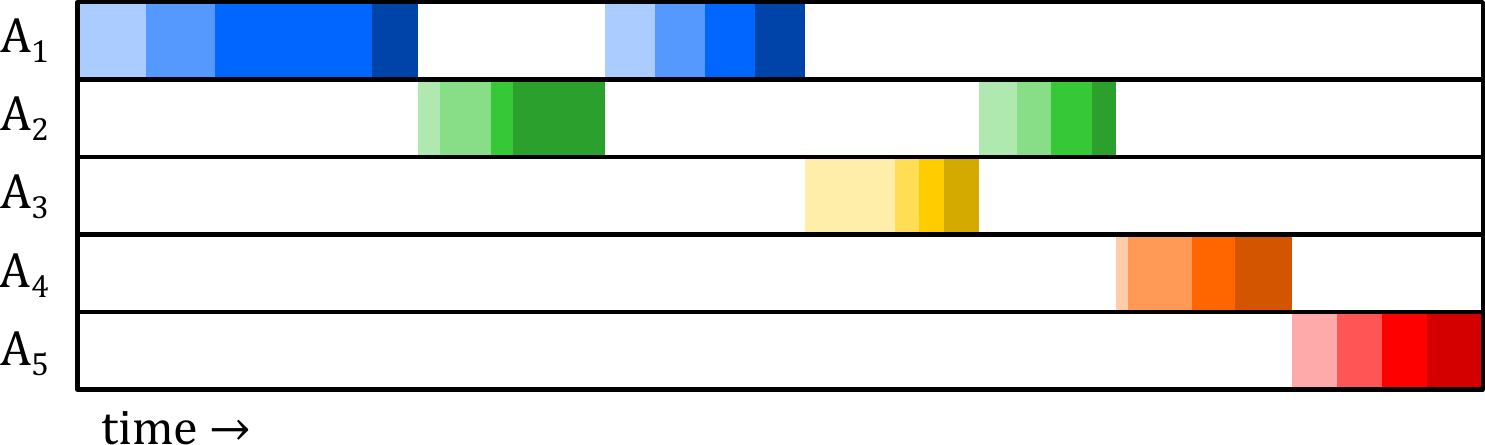}
        \caption{\centering Sequential}
        \label{fig:adversaries:sequential}
    \end{subfigure} \hfill
    \begin{subfigure}{.49\textwidth}
        \centering
        \includegraphics[width=\textwidth]{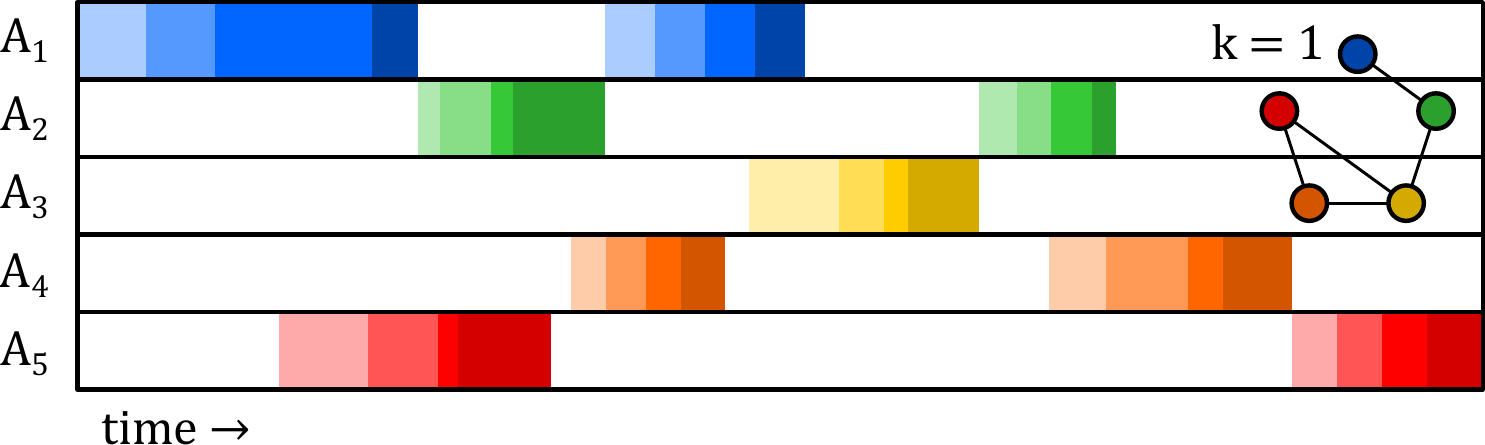}
        \caption{\centering $k$-Isolated (for $k = 1$)}
        \label{fig:adversaries:kisolated}
    \end{subfigure} \\ \bigskip
    \begin{subfigure}{.49\textwidth}
        \centering
        \includegraphics[width=\textwidth]{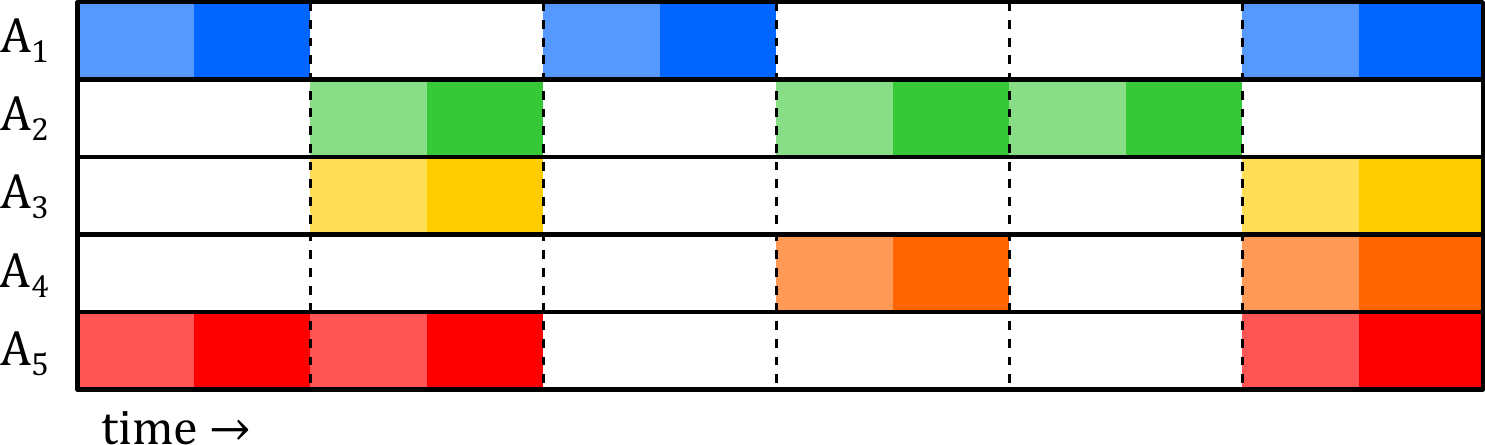}
        \caption{\centering Synchronous}
        \label{fig:adversaries:synchronous}
    \end{subfigure} \hfill
    \begin{subfigure}{.49\textwidth}
        \centering
        \includegraphics[width=\textwidth]{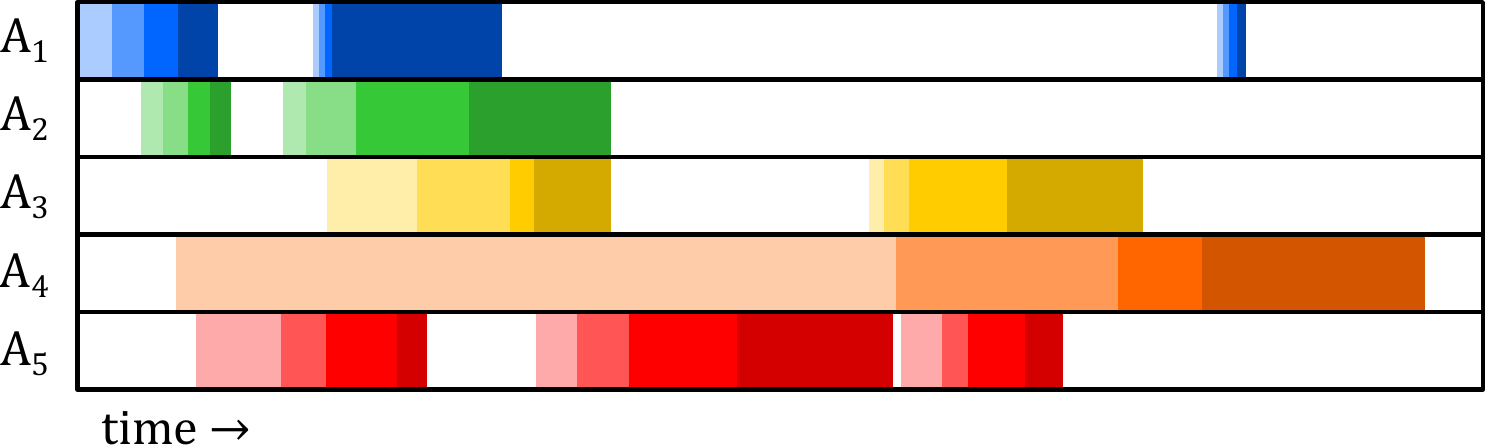}
        \caption{\centering Asynchronous}
        \label{fig:adversaries:asynchronous}
    \end{subfigure}
    \caption{Adversary concurrency variants.
    Amoebots are shown in rows and their actions over time are shown as colored boxes, subdivided into operations (gradient of colors).}
    \label{fig:adversaries}
\end{figure}

As is standard in the distributed computing literature (see, e.g.,~\cite{Altisen2019-introductiondistributed}), we assume an \textit{adversary} (or \textit{daemon}) controls the timing of amoebot activations, the choice of enabled actions to execute, and the timing of action executions.
The power of an adversary is determined by its \textit{concurrency} and \textit{fairness}.
We distinguish between four concurrency variants: \textit{sequential}, in which at most one amoebot can be active at a time (\figtext~\ref{fig:adversaries:sequential}); \textit{$k$-isolated}, in which no two amoebots occupying nodes of $G$ within hop distance $k$ can be simultaneously active, but any others can (\figtext~\ref{fig:adversaries:kisolated}); \textit{synchronous}, in which time is discretized into ``steps'' and in each step any set of amoebots can simultaneously execute one action each (\figtext~\ref{fig:adversaries:synchronous}); and \textit{asynchronous}, in which any set of amoebots can be simultaneously active (\figtext~\ref{fig:adversaries:asynchronous}).
For synchronous concurrency, we further assume that each step is partitioned into an \textit{evaluation phase} when all active amoebots evaluate their guards followed by an \textit{execution phase} when all active amoebots with enabled actions execute the corresponding operations.
Fairness restricts how often the adversary must activate enabled amoebots.
We distinguish between three fairness variants: \textit{strongly fair}, in which every amoebot that is enabled infinitely often is activated infinitely often; \textit{weakly fair}, in which every continuously enabled amoebot is eventually activated; and \textit{unfair}, in which the adversary may activate any enabled amoebot.
An algorithm execution is said to \textit{terminate} if eventually all amoebots are inactive and disabled; note that since an amoebot can only become enabled based on some other amoebot's action, termination is permanent.

We evaluate an amoebot algorithm's time complexity in terms of \textit{rounds}, which informally represent the time for the slowest continuously enabled amoebot to execute a single action.
Let $t_i$ denote the time at which round $i \in \{0, 1, 2, \ldots\}$ starts, where $t_0 = 0$, and let $\mathcal{E}_i$ denote the set of amoebots that are enabled or already executing an action at time $t_i$.
Round $i$ completes at the earliest time $t_{i+1} > t_i$ by which every amoebot in $\mathcal{E}_i$ either completed an action execution or became disabled at some time in $(t_i, t_{i+1}]$.
Depending on the adversary's concurrency, action executions may span more than one round.

In this paper, we focus on unfair sequential and asynchronous adversaries.
In the sequential setting, there is at most one active amoebot per time; thus, its guard evaluations and subsequent operation executions must be correct.
In the asynchronous setting, however, concurrent movements and memory updates can cause discrepancies between the adversary's instantaneous view of enabled actions and an amoebot's real-time evaluation of its guards, potentially allowing disabled actions to be executed or enabled actions to be skipped.
Moreover, concurrency can cause operations to fail due to conflicts.
We address these issues in two ways, justifying the formulation of algorithms in terms of actions:
In Section~\ref{sec:hexagon}, we present an algorithm whose actions are carefully designed to ensure correct execution under any adversary;
in Section~\ref{sec:framework}, we present a concurrency control framework that uses locks to ensure correct guard evaluation and operation execution even in the asynchronous setting.

\section{Asynchronous Hexagon Formation Without Locks}
\label{sec:hexagon}

We use the \textit{hexagon formation} problem as a concrete case study for algorithm design, pseudocode, and analysis in the canonical amoebot model.
Our \algHex\ algorithm (Algorithm~\ref{alg:hexagon}) assumes geometric space, assorted orientation, and constant-size memory (Table~\ref{tab:variants}) and is formulated in terms of actions as specified in Section~\ref{subsec:modelalgs}.
Our analysis of \algHex\ reveals a set of sufficient conditions for any amoebot algorithm's correctness under an unfair asynchronous adversary: (\textit{i}) correctness under an unfair sequential adversary, (\textit{ii}) enabled actions remaining enabled despite concurrent action executions, and (\textit{iii}) executions of enabled actions remaining successful and unaffected by concurrent action executions.
Any concurrent execution of an algorithm satisfying (\textit{ii}) and (\textit{iii}) can be shown to be serializable, which combined with sequential correctness establishes correctness under an unfair asynchronous adversary, the most general of all possible adversaries.
Notably, we prove that our \algHex\ algorithm satisfies these sufficient conditions without using locks, demonstrating that while locks are useful tools for designing correct amoebot algorithms under concurrent adversaries, they are not always necessary.

\begin{figure}[t]
    \centering
    \begin{subfigure}{.3\textwidth}
        \centering
        \includegraphics[width=\textwidth]{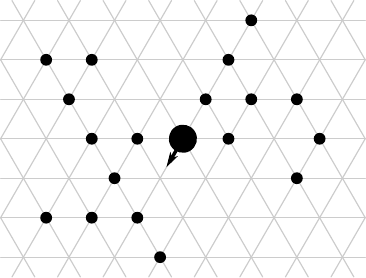}
        \caption{\centering}
        \label{fig:hexformationa}
    \end{subfigure} \hfill
    \begin{subfigure}{.3\textwidth}
        \centering
        \includegraphics[width=\textwidth]{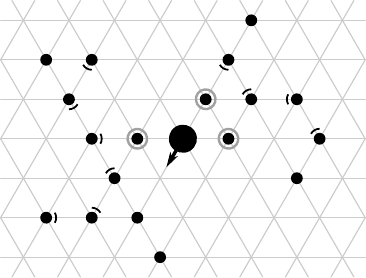}
        \caption{\centering}
        \label{fig:hexformationb}
    \end{subfigure} \hfill
    \begin{subfigure}{.3\textwidth}
        \centering
        \includegraphics[width=\textwidth]{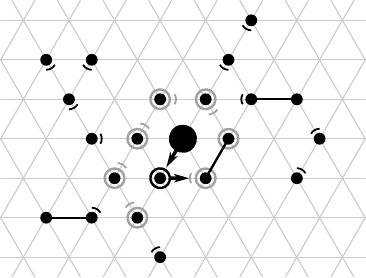}
        \caption{\centering}
        \label{fig:hexformationc}
    \end{subfigure} \\ \bigskip
    \begin{subfigure}{.3\textwidth}
        \centering
        \includegraphics[width=\textwidth]{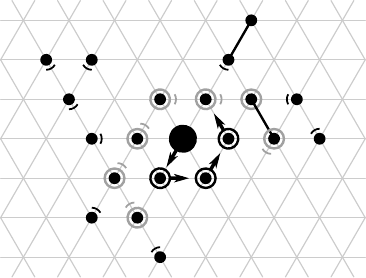}
        \caption{\centering}
        \label{fig:hexformationd}
    \end{subfigure} \hfill
    \begin{subfigure}{.3\textwidth}
        \centering
        \includegraphics[width=\textwidth]{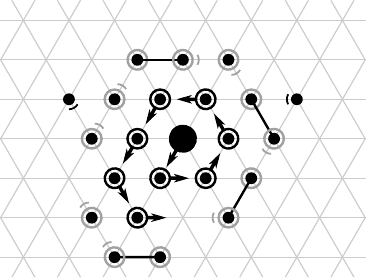}
        \caption{\centering}
        \label{fig:hexformatione}
    \end{subfigure} \hfill
    \begin{subfigure}{.3\textwidth}
        \centering
        \includegraphics[width=\textwidth]{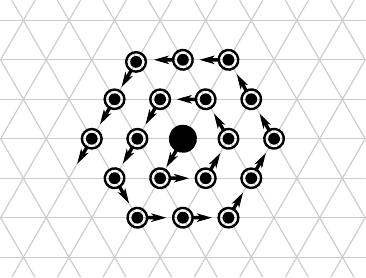}
        \caption{\centering}
        \label{fig:hexformationf}
    \end{subfigure}
    \caption{An example run of \algHex\ with $19$ amoebots.
    (a) All amoebots are initially idle (black dots), with the exception of a unique seed amoebot (large black dot).
    (b) Amoebots adjacent to the seed become roots (gray circles), and followers form parent-child relationships (black arcs) with roots and other followers.
    (c)--(f) Roots traverse the forming hexagon clockwise, becoming retired (black circles) when reaching the position marked by the last retired amoebot.}
    \label{fig:hexformation}
\end{figure}

The hexagon formation problem tasks an arbitrary, connected system of initially contracted amoebots with forming a regular hexagon (or as close to one as possible, given the number of amoebots in the system).
We assume that there is a \textit{unique seed amoebot} in the system and all other amoebots are initially \textit{idle}; note that the seed amoebot immediately collapses the hierarchy of orientation assumptions since it can impose its own local orientation on the rest of the system.
Following the sequential algorithm given by Derakhshandeh et al.~\cite{Daymude2019-computingprogrammable,Derakhshandeh2015-algorithmicframework}, the basic idea of our \algHex\ algorithm is to form a hexagon by extending a spiral of amoebots counter-clockwise from the seed (see \figtext~\ref{fig:hexformation}).

In addition to the shape variable assumed by the amoebot model, each amoebot $A$ keeps variables $A.\state \in \{\textsc{seed}, \textsc{idle}, \textsc{follower}, \textsc{root}, \textsc{retired}\}$, $A.\parent \in \{\nil, 0, \ldots, 9\}$, and $A.\dir \in \{\nil, 0, \ldots, 9\}$ in public memory.
The amoebot system first self-organizes as a spanning forest rooted at the seed amoebot using their \parent\ ports.
Follower amoebots follow their parents until reaching the surface of retired amoebots that have already found their place in the hexagon.
They then become roots, traversing the surface of retired amoebots clockwise.
Once they connect to a retired amoebot's \dir\ port, they also retire and set their \dir\ port to the next position of the hexagon.
Algorithm~\ref{alg:hexagon} describes \algHex\ in terms of actions.
W.l.o.g., we assume that if multiple actions are enabled for an amoebot, the enabled action with smallest index is executed.\footnote{Observe that any amoebot algorithm could directly implement this assumption by replacing each guard $g_i$ of action $\alpha_i$ with the guard $g_i \wedge \bigwedge_{j=1}^{i-1} (\neg g_j)$.}
In action guards, we use $N(A)$ to denote the neighbors of amoebot $A$ and say that an amoebot $A$ has a \textit{tail-child} $B$ if $B$ is connected to the tail of $A$ via port $B.\parent$.

\begin{algorithm}[t]
    \caption{\algHex\ for Amoebot $A$} \label{alg:hexagon}
    \begin{algorithmic}[1]
        \State $\alpha_1 : (A.\state \in \{\textsc{idle}, \textsc{follower}\}) \wedge (\exists B \in N(A) :  B.\state \in \{\textsc{seed}, \textsc{retired}\}) \to$
        \Indent
            \State \Write$(\bot, \parent, \nil)$.
            \State \Write$(\bot, \state, \textsc{root})$.
            \State \Write$(\bot, \dir, \Call{GetNextDir}{\text{counter-clockwise}})$. \Comment{See Algorithm~\ref{alg:hexagonhelpers}.}
        \EndIndent
        \State $\alpha_2 : (A.\state = \textsc{idle}) \wedge (\exists B \in N(A) : B.\state \in \{\textsc{follower}, \textsc{root}\}) \to$
        \Indent
            \State Find a port $p$ for which $\Connected(p) = \true$ and $\Read(p, \state) \in \{\textsc{follower}, \textsc{root}\}$.
            \State \Write$(\bot, \parent, p)$.
            \State \Write$(\bot, \state, \textsc{follower})$.
        \EndIndent
        \State $\alpha_3 : (A.\shape = \textsc{contracted}) \wedge (A.\state = \textsc{root}) \wedge (\forall B \in N(A) : B.\state \neq \textsc{idle})$ \\
        \hspace{.6cm} $\wedge~(\exists B \in N(A) : (B.\state \in \{\textsc{seed}, \textsc{retired}\}) \wedge (B.\dir$ is connected to $A)) \to$
        \Indent
            \State \Write$(\bot, \dir, \Call{GetNextDir}{\text{clockwise}})$.
            \State \Write$(\bot, \state, \textsc{retired})$.
        \EndIndent
        \State $\alpha_4 : (A.\shape = \textsc{contracted}) \wedge (A.\state = \textsc{root}) \wedge ($the node adjacent to $A.\dir$ is empty$) \to$
        \Indent
            \State Let $p \gets \Read(\bot, \dir)$.
            \State \Expand$(p)$.
        \EndIndent
        \State $\alpha_5 : (A.\shape = \textsc{expanded}) \wedge (A.\state \in \{\textsc{follower}, \textsc{root}\}) \wedge (\forall B \in N(A) : B.\state \neq \textsc{idle})$ \\
        \hspace{.6cm} $\wedge~(A$ has a tail-child $B : B.\shape = \textsc{contracted}) \to$
        \Indent
            \If {$\Read(\bot, \state) = \textsc{root}$} \Write$(\bot, \dir, \Call{GetNextDir}{\text{counter-clockwise}})$.
            \EndIf
            \State Find a port $p \in \Call{TailChildren}{ }$ s.t.\ $\Read(p, \shape) = \textsc{contracted}$. \Comment{See Algorithm~\ref{alg:hexagonhelpers}.}
            \State Let $p'$ be the label of the tail-child's port that will be connected to $p$ after the pull handover.
            \State \Write$(p, \parent, p')$.
            \State \Pull$(p)$.
        \EndIndent
        \State $\alpha_6 : (A.\shape = \textsc{expanded}) \wedge (A.\state \in \{\textsc{follower}, \textsc{root}\}) \wedge (\forall B \in N(A) : B.\state \neq \textsc{idle})$ \\
        \hspace{.6cm} $\wedge~(A$ has no tail-children$) \to$
        \Indent
            \If {$\Read(\bot, \state) = \textsc{root}$} \Write$(\bot, \dir, \Call{GetNextDir}{\text{counter-clockwise}})$.
            \EndIf
            \State \Contract$(\textsc{tail})$.
        \EndIndent
    \end{algorithmic}
\end{algorithm}

\begin{algorithm}[t]
    \caption{Helper Functions for \algHex} \label{alg:hexagonhelpers}
    \begin{algorithmic}[1]
        \Function{GetNextDir}{$c$} \Comment{$c \in \{\text{clockwise}, \text{counter-clockwise}\}$}
            \State Let $p$ be any head port.
            \State \textbf{try:}
            \Indent
                \While {$\neg\Connected(p) \vee (\Read(p, \state) \not\in \{\textsc{seed}, \textsc{retired}\})$} \label{alg:hexagonhelpers:while1}
                    \State $p \gets$ the next head port in orientation $c$.
                \EndWhile
            \EndIndent
            \State \textbf{catch} \texttt{disconnect-failure} \textbf{do} $p \gets$ the next head port in orientation $c$; go to Step~\ref{alg:hexagonhelpers:while1}.
            \State \textbf{try:}
            \Indent
                \While {$\Connected(p) \wedge (\Read(p, \state) \in \{\textsc{seed}, \textsc{retired}\})$} \label{alg:hexagonhelpers:while2}
                    \State $p \gets$ the next head port in orientation $c$.
                \EndWhile
            \EndIndent
            \State \textbf{catch} \texttt{disconnect-failure} \textbf{do} $p \gets$ the next head port in orientation $c$; go to Step~\ref{alg:hexagonhelpers:while2}.
            \State \Return $p$.
        \EndFunction
        \Function{TailChildren}{ }
            \State Let $P \gets \emptyset$.
            \For {each tail port $p$}
                \State \textbf{try:}
                \Indent
                    \If {$\Connected(p) \wedge (\Read(p, \parent) \text{ points to } A)$}
                        \State $P \gets P \cup \{p\}$.
                    \EndIf
                \EndIndent
                \State \textbf{catch} \texttt{disconnect-failure} \textbf{do} nothing.
            \EndFor
            \State \Return $P$.
        \EndFunction
    \end{algorithmic}
\end{algorithm}

We begin our analysis of the \algHex\ algorithm by showing it is correct under an unfair sequential adversary.
Although the related algorithm of Derakhshandeh et al.\ has already been analyzed in the sequential setting~\cite{Daymude2019-computingprogrammable,Derakhshandeh2015-algorithmicframework}, \algHex\ must be proved correct with respect to its action formulation.

\begin{lemma} \label{lem:hexsequential}
    Any unfair sequential execution of the \algHex\ algorithm terminates with the amoebot system forming a hexagon.
\end{lemma}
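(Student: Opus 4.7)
The plan is to bootstrap heavily from the sequential correctness analysis of the original hexagon-formation algorithm in~\cite{Daymude2019-programmableparticles,Derakhshandeh2015-shapeformation}, since \algHex\ is a reformulation of that algorithm in our action syntax. In a sequential execution, only one amoebot is active at a time, so each action runs to completion with no interleaving. In particular, every \Connected\ and \Read\ within a guard evaluation or a helper call returns a value consistent with the current global configuration, so guards cannot be mis-evaluated and no operation inside an action can fail because of a concurrent neighbor movement. This lets me translate each action of \algHex\ into the corresponding atomic state transition of the classical algorithm and then invoke (and re-verify in the action setting) the same invariants.

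The first step is to establish the structural invariants by induction on the sequence of action executions: (i) the amoebot system stays connected; (ii) the non-retired, non-seed amoebots form a spanning forest via their \parent\ pointers rooted at the seed/retired amoebots, with every follower's \parent\ pointing at a follower or root and roots having $\parent = \nil$; (iii) the seed together with the retired amoebots forms a correctly shaped prefix of the counter-clockwise hexagon spiral, with each retired/seed amoebot's \dir\ pointing to the unique next lattice position along that spiral; and (iv) any expanded amoebot is a root or follower and its tail-children (if any) are the amoebots that must still move into its tail before it can contract. I would check these invariants action-by-action: $\alpha_1$ and $\alpha_2$ only change \state/\parent\ of an idle or follower amoebot and thus preserve the forest and the spiral; $\alpha_3$ extends the spiral by exactly one cell, and the guard ``$B.\dir$ is connected to $A$'' plus the seed's initial \dir\ value guarantee that the new cell is the correct next position; $\alpha_4$ expands a root into an empty cell along its \dir\ port, which by (iii) lies along the outside of the retired region; and $\alpha_5$, $\alpha_6$ (the handover-based contractions) preserve the forest because the call to \Write$(p,\parent,p')$ in $\alpha_5$ redirects the tail-child's parent pointer to the correct new port before the \Pull, and \textsc{GetNextDir} correctly rotates $\dir$ for a root completing a contraction.

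For termination and correctness, I would exhibit a lexicographic potential function whose components are, in decreasing significance: the number of empty cells in the hexagon-prefix region that the spiral still has to fill (decreasing with each $\alpha_3$), the number of \textsc{idle} amoebots (decreasing with each $\alpha_1$ or $\alpha_2$), and the total distance the non-retired amoebots still must travel along the spiral to reach their final positions (decreasing with each $\alpha_4$, $\alpha_5$, $\alpha_6$). Every executed action strictly decreases this potential, so executions are finite. To handle the unfair sequential adversary, I would then argue that whenever the final hexagon has not yet been formed, at least one amoebot is enabled: if any amoebot is still \textsc{idle} then by connectivity it has a non-\textsc{idle} neighbor and $\alpha_1$ or $\alpha_2$ is enabled for it; otherwise, the leading root on the surface of the retired spiral either has $\alpha_3$ enabled (if its $\dir$-predecessor is retired) or $\alpha_4$ enabled (if the next spiral cell is empty), and the trailing expanded followers have $\alpha_5$ or $\alpha_6$ enabled. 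An unfair adversary must then activate some enabled amoebot, which strictly reduces the potential, so after finitely many actions the potential reaches zero. Combined with invariants (i)--(iv), the zero-potential configuration is exactly the target hexagon with all amoebots retired.

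The main obstacle I anticipate is verifying the spiral invariant (iii) and the forest invariant for $\alpha_5$: the pull handover in $\alpha_5$ moves a contracted child into its parent's tail while simultaneously expanding the child and contracting the parent, and I need to argue carefully that \textsc{TailChildren} returns the correct candidates, that the port relabeling $p'$ encoded in \Write$(p,\parent,p')$ indeed names the port the child will use to reach its new parent node after the handover, and that the root's \textsc{GetNextDir} call rotates \dir\ through exactly the non-retired neighbors so that, after the handover, \dir\ still points along the outside of the spiral. Once this local calculation is done and packaged as an invariant-preservation lemma, the rest of the proof is the straightforward potential and enabledness argument above.
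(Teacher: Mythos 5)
Your overall strategy matches the paper's: show connectivity is preserved (only $\alpha_6$'s \Contract\ can threaten it), show the system is never deadlocked while the hexagon is incomplete, and conclude termination in the target shape via a progress argument. However, two specific steps in your sketch would fail as written. First, your deadlock-freedom argument for the case with no remaining idle amoebots asserts that ``the leading root'' has $\alpha_3$ or $\alpha_4$ enabled and that ``the trailing expanded followers have $\alpha_5$ or $\alpha_6$ enabled.'' Neither claim holds in general: every contracted root may have its clockwise traversal blocked by another root (so $\alpha_3$ and $\alpha_4$ are disabled for all of them), and an expanded follower or root whose tail-children are \emph{all expanded} has neither $\alpha_5$ nor $\alpha_6$ enabled. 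The paper closes exactly this case with a chain argument: if all contracted roots are blocked by roots, some root is expanded; if $\alpha_5$ and $\alpha_6$ are disabled for it, it has only expanded tail-children; those in turn must have only expanded tail-children; and since the system is finite this chain must terminate at an expanded amoebot with a contracted tail-child or no tail-children, enabling $\alpha_5$ or $\alpha_6$. You need this (or an equivalent well-founded argument over the parent forest) rather than an appeal to a ``leading'' root.

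Second, your lexicographic potential does not strictly decrease under every action as defined. An execution of $\alpha_1$ by a \textsc{follower} (rather than an \textsc{idle} amoebot) changes no component: it does not fill a hexagon cell, does not reduce the idle count, and does not move anyone. Likewise $\alpha_6$ contracts a tail without advancing any head, so ``total distance still to travel'' measured in any natural way need not drop. You would need additional components (e.g., the number of followers, and the number of expanded amoebots, with a careful ordering relative to the movement actions that re-expand amoebots) to make every action strictly decrease the potential. To be fair, the paper's own termination step is also informal (``brings the system monotonically closer''), but since you committed to an explicit potential, the burden is on you to make it actually decrease; as stated it does not. The connectivity and invariant-preservation portions of your plan, including the careful treatment of the $\alpha_5$ handover and parent-port relabeling, are sound and align with the paper's invariants~\ref{inv:hex:state}--\ref{inv:hex:movepath}.
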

\begin{proof}
    We first show that the system remains connected throughout the execution.
    Recall that the amoebot system is assumed to be initially connected.
    A disconnection can only result from a movement, and in particular, a contraction.
    Expansions only enlarge the set of nodes occupied by the system and handovers only change which amoebot occupies the handover node, not the fact that the node remains occupied.
    So it suffices to consider $\alpha_6$, the only action involving a \Contract\ operation.
    Action $\alpha_6$ only allows an expanded follower or root amoebot to contract its tail if it has no idle neighbors or neighbors pointing at its tail as their parent.
    The only other possible tail neighbors are the seed, roots, or retired amoebots; however, all of these neighbors are guaranteed to be connected to the forming hexagon structure.
    Thus, the system remains connected throughout the algorithm's execution.
    
    Now, suppose to the contrary that the \algHex\ algorithm has terminated---i.e., no amoebot has an enabled action---but the system does not form a hexagon.
    By inspection of action $\alpha_3$, the retired amoebots form a hexagon extending counter-clockwise from the seed.
    Thus, for the system to not form a hexagon, there must exist some amoebot that is neither the seed nor retired.
    
    First of all, there cannot be any idle amoebots remaining in the system; in particular, we argue that so long as there are idle amoebots in the system, there exists an idle amoebot for which $\alpha_1$ or $\alpha_2$ is enabled, and thus the algorithm cannot have terminated.
    Suppose to the contrary that there are idle amoebots in the system but none of them have non-idle neighbors, yielding $\alpha_1$ and $\alpha_2$ disabled.
    Then the idle amoebots must be disconnected from the rest of the system, since we assumed that the system contains a unique seed amoebot initially, a contradiction of connectivity.
    Thus, if the algorithm has terminated, all idle amoebots must have already become roots or followers.
    
    For all root or follower amoebots to be disabled, we have the following chain of observations:
    \begin{enumerate}[label=(\textit{\alph*}), leftmargin=0.75cm]
        \item \label{lem:hexsequential:a1} No follower can have a seed or retired neighbor; otherwise, action $\alpha_1$ would be enabled for that follower.
        
        \item \label{lem:hexsequential:a3} Since we have already established that there are no idle amoebots in the system, there must not be a contracted root occupying the next hexagon node; otherwise, action $\alpha_3$ would be enabled for that root.
        
        \item \label{lem:hexsequential:a4} Every contracted root amoebot must have its clockwise traversal of the forming hexagon's surface blocked by another amoebot; otherwise, action $\alpha_4$ would be enabled for some contracted root.
        Moreover, since there are no followers on the hexagon's surface by \ref{lem:hexsequential:a1} and no contracted root has yet reached the next hexagon node by \ref{lem:hexsequential:a3}, each contracted root must be blocked by another root.
        
        \item \label{lem:hexsequential:a5a6} By \ref{lem:hexsequential:a4}, there must exist at least one expanded root amoebot $A$.
        Since actions $\alpha_5$ and $\alpha_6$ must be disabled for $A$ by supposition---and, again, there are no idle amoebots remaining in the system---$A$ must have one or more tail-children that are all expanded.
        
        \item \label{lem:hexsequential:tailchildren} By the same argument, actions $\alpha_5$ and $\alpha_6$ can only be disabled for the expanded tail-children of $A$ if they also each have at least one tail-child, all of which are expanded.
    \end{enumerate}
    
    The chain of expanded tail-children established by \ref{lem:hexsequential:a5a6} and \ref{lem:hexsequential:tailchildren} cannot continue ad infinitum since the amoebot system is finite.
    There must eventually exist an expanded root or follower amoebot that either has a contracted tail-child or no tail-children, enabling $\alpha_5$ or $\alpha_6$, respectively.
    In all cases, we reach a contradiction: so long as the amoebot system does not yet form a hexagon, there must exist an amoebot with an enabled action.
    The execution of any enabled action brings the system monotonically closer to forming a hexagon: turning idle amoebots into followers, bringing followers to the hexagon's surface, turning followers into roots, bringing roots closer to their final position, and finally turning roots into retired amoebots.
    Therefore, regardless of the unfair sequential adversary's choice of enabled amoebot to activate, the system is guaranteed to reach and terminate in a configuration forming a hexagon, as desired.
\end{proof}

We next consider unfair asynchronous executions, the most general of all possible concurrency assumptions.
The \algHex\ algorithm maintains the following invariants:
\begin{enumerate}[label=(\textit{\roman*}), leftmargin=0.75cm]
    \item \label{inv:hex:state} The \state\ variable of an amoebot $A$ can only be updated by $A$ itself.
    This follows from actions $\alpha_1$, $\alpha_2$, and $\alpha_3$.
    
    \item \label{inv:hex:followparent} Only follower amoebots have non-\nil\ \parent\ variables.
    An idle amoebot sets its own \parent\ variable when it becomes a follower.
    While an amoebot $A$ is a follower, the only amoebot that can update $A.\parent$ is the amoebot indicated by $A.\parent$.
    Finally, when a follower becomes a root, it updates its own \parent\ variable to \nil, after which its \parent\ variable never changes again.
    This follows from actions $\alpha_1$, $\alpha_2$, and $\alpha_5$.
    
    \item \label{inv:hex:dir} Only root and retired amoebots have non-\nil\ \dir\ variables.
    The \dir\ variable of an amoebot $A$ can only be updated by $A$ itself.
    Once a \dir\ variable is set by a retired amoebot, it never changes again.
    This follows from actions $\alpha_1$, $\alpha_3$, $\alpha_5$, and $\alpha_6$.
    
    \item \label{inv:hex:nomove} Seed, idle, and retired amoebots are always contracted and never move.
    Moreover, seed and retired amoebots never change their state.
    
    % \item \label{inv:hex:retirednbr} Once an amoebot has a seed or retired neighbor, it always has a seed or retired neighbor.
    % This follows from~\ref{inv:hex:nomove} and action $\alpha_4$.
    
    \item \label{inv:hex:shape} The \shape\ variable of a root or expanded follower $A$ can only be updated by a movement operation initiated by $A$ itself, while the \shape\ variable of a contracted follower $A$ can only be updated by a \Pull\ operation initiated by the neighboring amoebot connected via $A.\parent$.
    This follows from actions $\alpha_4$, $\alpha_5$, and $\alpha_6$.
    
    \item \label{inv:hex:idleconnect} No amoebot can disconnect from an idle neighbor.
    Moreover, a root will not change its state if it has an idle neighbor.
    This follows from actions $\alpha_3$, $\alpha_5$, and $\alpha_6$.
    
    \item \label{inv:hex:movepath} Root amoebots traverse the surface of the forming hexagon clockwise while follower amoebots are pulled by their parents.
    This follows from actions $\alpha_1$, $\alpha_4$, $\alpha_5$, and $\alpha_6$.
\end{enumerate}

In general, asynchronous executions may cause amoebots to incorrectly evaluate their action guards.
Nevertheless, in the following two lemmas, we show that \algHex\ has the key property that whenever an amoebot thinks an action is enabled, it remains enabled and will execute successfully, even when other actions are executed concurrently.

\begin{lemma} \label{lem:hexenabled}
    For any asynchronous execution of the \algHex\ algorithm, if an action $\alpha_i$ is enabled for an amoebot $A$, then $\alpha_i$ stays enabled for $A$ until $A$ executes an action.
\end{lemma}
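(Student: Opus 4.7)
The plan is to proceed by case analysis on $i \in \{1,\dots,6\}$, inspecting each conjunct of $\alpha_i$'s guard and arguing that no activity by any amoebot $C \neq A$ can falsify the conjunct before $A$ itself performs an operation. The toolkit, assembled from invariants~\ref{inv:hex:state}--\ref{inv:hex:movepath}, is as follows. The variables $A.\state$ and $A.\dir$ are writable only by $A$, and $A.\parent$ only by $A$ or by $A$'s current parent via a pull; $A.\shape$ can change only through $A$'s own movement or through a pull of $A$ by its parent when $A$ is a contracted follower, and such a pull leaves $A$'s original node occupied so no prior neighbor is lost. Seed and retired neighbors never move or alter $\state$ or $\dir$; idle neighbors stay idle and, by~\ref{inv:hex:idleconnect}, cannot disconnect from an idle $A$, while a root neighbor of an idle $A$ cannot change state. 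Finally, since idle is purely an initial state and any newly arriving neighbor must have just moved, no existing neighbor becomes idle and no freshly arrived neighbor is idle.

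With this toolkit, $\alpha_1$, $\alpha_2$, $\alpha_3$, and $\alpha_6$ fall out with little more than bookkeeping. For $\alpha_1$ and $\alpha_3$, the witness seed/retired neighbor $B$ persists in $N(A)$ with unchanged $\state$ and, where relevant, unchanged $\dir$; for $\alpha_2$, $A$ stays idle, its neighborhood cannot shrink, and any follower/root witness cannot leave $\{\textsc{follower},\textsc{root}\}$ (a follower can only transition to root, and a root with $A$ as an idle neighbor cannot transition further); for $\alpha_6$, the ``no tail-children'' clause is stable because creating a new tail-child would require an idle amoebot adjacent to $A$'s tail to execute $\alpha_2$, contradicting the coexisting ``no idle neighbor'' clause.

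The genuinely delicate cases are $\alpha_4$ and $\alpha_5$. For $\alpha_4$ the worrying conjunct is that the node $v$ adjacent to $A.\dir$ remains empty. I would argue geometrically from invariant~\ref{inv:hex:movepath} and the specification of \Call{GetNextDir}{}: $v$ lies on the clockwise surface traversal of the retired structure, followers only move into nodes vacated by their parents, and seed/idle/retired do not move, so the only way some $C \neq A$ could occupy $v$ is for $C$ to be a second root currently adjacent to $v$ with $C.\dir = v$ and to execute $\alpha_4$; ruling out such a $C$ uses the hexagonal topology of the surface together with the uniqueness of the exterior-adjacent direction aimed at $v$. For $\alpha_5$, the concern is that the distinguished contracted tail-child $B$ could stop being a tail-child without $A$'s help; by invariants~\ref{inv:hex:followparent} and~\ref{inv:hex:shape}, the only available mechanism is $B$ executing $\alpha_1$, which requires $B$ to neighbor a seed or retired amoebot, and I would rule this out by showing that a contracted follower whose \parent\ port is incident to $A$'s tail cannot be adjacent to the hexagon surface in any configuration reachable under \algHex.

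The main obstacle lies precisely in these two geometric arguments: $\alpha_4$'s uniqueness of the legitimate occupier of $v$, and $\alpha_5$'s separation of the tail-children of an expanded carrier from the hexagon surface. Everything else reduces to routine application of the invariants, and it is here that the ``carefully designed'' guards of \algHex\ earn their keep.
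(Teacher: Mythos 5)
Your overall route is the same as the paper's: an action-by-action case analysis in which each conjunct of each guard is shown to be stable under concurrent activity using Invariants~\ref{inv:hex:state}--\ref{inv:hex:movepath}. For $\alpha_1$, $\alpha_2$, $\alpha_3$, and $\alpha_6$ your arguments coincide with the paper's. For $\alpha_4$, the paper simply closes the case by citing Invariant~\ref{inv:hex:movepath} (roots traverse the surface clockwise, followers only move into nodes vacated by their parents, everyone else is stationary), so the geometric uniqueness argument you defer is exactly the content the paper treats as already packaged in that invariant; there is no real divergence there, only a difference in how much of the burden is pushed into the invariant.

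The genuine issue is $\alpha_5$, and you have correctly located it --- but your proposed repair is the weak point. The paper disposes of this case by asserting that Invariants~\ref{inv:hex:followparent} and~\ref{inv:hex:shape} keep any contracted tail-child a contracted tail-child; it never addresses the escape route you identify, namely that Invariant~\ref{inv:hex:followparent} explicitly permits a follower to reset its own \parent\ to \nil\ upon executing $\alpha_1$. Your plan is to exclude this by proving that a contracted follower whose \parent\ port is incident to $A$'s tail can never be adjacent to a seed or retired amoebot. That claim is left unproven, and it is doubtful as stated: seed/retired adjacency is not something a tail-child can only acquire by moving. A root $C$ adjacent to an existing contracted tail-child $B$ can retire via $\alpha_3$, at which point $B$ acquires a retired neighbor \emph{in place}, enabling $\alpha_1$ for $B$ while $\alpha_5$ is enabled for its carrier $A$; nothing in the priority rule or the invariants prevents the adversary from activating $B$ first. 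So the one sub-argument you flag as an obligation is load-bearing, is not discharged, and the specific lemma you propose to discharge it with appears to fail in reachable configurations. Closing this case requires either a sharper reachability invariant (e.g., constraining where tail-children of expanded carriers can sit relative to roots that are about to retire) or a weakening of what the lemma claims for $\alpha_5$; as written, your proof plan does not yet establish the statement for that action.
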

\begin{proof}
    We use the invariants to prove the claim on an action-by-action basis.
    \begin{enumerate}[label=$\alpha_{\arabic*}:$, leftmargin=0.75cm]
        \item If $A$ evaluates the guard of $\alpha_1$ as \true, then it must be an idle or follower amoebot with a seed or retired neighbor.
        Invariant~\ref{inv:hex:state} ensures that $A$ remains an idle or follower amoebot, and Invariant~\ref{inv:hex:nomove} ensures its seed or retired neighbor does not move or change state.
        
        \item If $A$ evaluates the guard of $\alpha_2$ as \true, then it must be an idle amoebot with a follower or root neighbor.
        Invariant~\ref{inv:hex:state} ensures that $A$ remains an idle amoebot, and Invariant~\ref{inv:hex:idleconnect} ensures that its neighbors remain connected to $A$ while $A$ is idle.
        A follower neighbor of $A$ can concurrently change its state to root by $\alpha_1$; however, a root neighbor of $A$ will not change its state while $A$ is idle by Invariant~\ref{inv:hex:idleconnect}.
        
        \item If $A$ evaluates the guard of $\alpha_3$ as \true, then it must be a contracted root with no idle neighbors and a seed or retired neighbor that indicates that the node $A$ occupies is the next hexagon node.
        Invariants~\ref{inv:hex:state} and~\ref{inv:hex:shape} ensure that $A$ remains a contracted root, Invariant~\ref{inv:hex:nomove} ensures that $A$ cannot gain any idle neighbors, and Invariants~\ref{inv:hex:dir} and~\ref{inv:hex:nomove} ensure that the seed or retired neighbor continues to indicate the node $A$ occupies as the next hexagon node.
        
        \item If $A$ evaluates the guard of $\alpha_4$ as \true, then it must be a contracted root with no neighbor connected via $A.\dir$.
        Invariants~\ref{inv:hex:state} and~\ref{inv:hex:shape} ensure that $A$ remains a contracted root, and Invariant~\ref{inv:hex:movepath} ensures that no amoebot but $A$ can move into the node adjacent to $A.\dir$.
        
        \item If $A$ evaluates the guard of $\alpha_5$ as \true, then it must be an expanded follower or root with no idle neighbors and some contracted tail-child.
        Invariants~\ref{inv:hex:state} and~\ref{inv:hex:shape} ensure that $A$ remains an expanded follower or root, Invariant~\ref{inv:hex:nomove} ensures that $A$ cannot gain any idle neighbors, and Invariants~\ref{inv:hex:followparent} and~\ref{inv:hex:shape} ensure that any contracted tail-child of $A$ remains so.
        
        \item If $A$ evaluates the guard of $\alpha_6$ as \true, then it must be an expanded follower or root with no idle neighbors and no tail-children.
        Invariants~\ref{inv:hex:state} and~\ref{inv:hex:shape} ensure that $A$ remains an expanded follower or root, and Invariants~\ref{inv:hex:followparent} and~\ref{inv:hex:nomove} ensure that $A$ cannot gain any idle neighbors or tail-children.
    \end{enumerate}
    
    Therefore, any action that $A$ evaluates as enabled must remain enabled, as claimed.
\end{proof}

\begin{lemma} \label{lem:hexsuccess}
    For any asynchronous execution of the \algHex\ algorithm, any execution of an enabled action is successful and unaffected by any concurrent action executions.
\end{lemma}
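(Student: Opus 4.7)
The plan is to verify, action by action, that every operation invoked by an enabled action terminates successfully and produces an outcome determined entirely by local state at evaluation time, leveraging the seven invariants listed before Lemma~\ref{lem:hexenabled} together with Lemma~\ref{lem:hexenabled} itself, which already guarantees that the guard-level preconditions persist. A useful global observation up front is that any \Read\ or \Write\ issued with $p=\bot$ completes instantaneously without messages, so all operations on $A$'s own public memory are trivially successful and unaffected; this dispatches the bulk of $\alpha_1$ and $\alpha_3$ and the opening steps of $\alpha_5$ and $\alpha_6$.

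For $\alpha_2$, the remaining \Read\ and \Write\ are directed at a follower or root neighbor $B$ incident to a port of the idle amoebot $A$; invariant~\ref{inv:hex:idleconnect} rules out disconnection from $A$, and invariant~\ref{inv:hex:state} prevents $B$'s state from regressing, so neither call can fail. For $\alpha_6$, the \Contract(\textsc{tail}) call succeeds because invariant~\ref{inv:hex:shape} guarantees $A.\shape=\textsc{expanded}$ throughout (only $A$ itself can change its own shape here) and $A$ has no handover pending. For $\alpha_4$, \Expand$(A.\dir)$ succeeds because invariants~\ref{inv:hex:dir}, \ref{inv:hex:nomove}, and \ref{inv:hex:movepath} together imply that the node adjacent to $A.\dir$ is still unoccupied: no seed/retired amoebot moves, no follower sits on the hexagon boundary in a position to expand into that node, and the clockwise traversal discipline means no other root is targeting the same node from elsewhere; any residual ambiguity is absorbed by the \Expand\ contention resolution from Section~\ref{subsubsec:operationsmoves}, which by definition preserves ``successful termination'' of exactly the enabled action.

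The main obstacle is $\alpha_5$, whose \Pull$(p)$ interacts with the contracted tail-child $B$. I would split the argument into three pieces: (a) the port $p$ remains connected to $B$ until the handover begins, (b) $B$ is still contracted when it processes the \texttt{pull\_request}, and (c) $B$ is not simultaneously engaged in another movement at that moment. Parts (a) and (b) follow from invariants~\ref{inv:hex:followparent} and~\ref{inv:hex:shape}: a contracted follower's parent pointer can only be rewritten by $A$ itself, and its shape can only change via a \Pull\ initiated by $A$, so no one else can detach or expand $B$. Part (c) is the delicate point. The only way $B$ could initiate its own movement while waiting is to first transition to \textsc{root} via $\alpha_1$ and then invoke $\alpha_4$; but invariant~\ref{inv:hex:shape} still binds $B$'s shape to \Pull\ operations by $A$ as long as $B$ has not yet issued an outgoing movement, and in any case the system-layer handshake at $B$ replies with \texttt{pull\_ack} precisely when $B.\shape=\textsc{contracted}$ and $B$ is not mid-movement, which is guaranteed by the preceding invariants applied at the instant of receipt. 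The preparatory \Write$(p,\parent,p')$ and the helper call \Call{TailChildren}{} are handled similarly: each \Read\ inside the helper is shielded by the \texttt{disconnect-failure} catch clause, and by invariants~\ref{inv:hex:followparent} and~\ref{inv:hex:shape} the set of tail-children that TailChildren can possibly report is stable enough that at least one qualifying contracted child is returned.

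Finally, I would address the helper \Call{GetNextDir}{} used in $\alpha_1$, $\alpha_3$, $\alpha_5$, and $\alpha_6$: by invariant~\ref{inv:hex:nomove} seed and retired amoebots never move and never change state, so the port classification the loop is searching for cannot oscillate during the scan; combined with the explicit \texttt{disconnect-failure} handlers, the function terminates with the correct port. Once each action's operations are seen to terminate successfully with outcomes determined solely by the state captured at guard evaluation, the ``unaffected by concurrent action executions'' clause follows: the effect on $A$'s public memory, its neighbor's public memory, and the physical configuration matches exactly what a sequential execution of the same action would have produced, completing the lemma.
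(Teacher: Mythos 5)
Your proposal is correct and follows essentially the same route as the paper's proof: an action-by-action verification that self-memory \Read/\Write\ operations cannot fail, that the non-critical \Read\ operations inside \textsc{GetNextDir} and \textsc{TailChildren} are shielded by the \texttt{disconnect-failure} handlers, and that the critical operations (the neighbor reads, the \Expand\ in $\alpha_4$, the \Pull\ with the contracted tail-child in $\alpha_5$, the \Contract\ in $\alpha_6$) succeed because Lemma~\ref{lem:hexenabled} and the invariants keep their preconditions in force. One caution: your parenthetical that ``any residual ambiguity is absorbed by the \Expand\ contention resolution'' is not a sound fallback --- contention resolution guarantees that exactly \emph{one} contender succeeds, so if contention could actually arise, $A$'s enabled expansion might be the one that fails; fortunately this hedge is not load-bearing, since your (and the paper's) primary argument via Invariant~\ref{inv:hex:movepath} shows the target node remains unoccupied and uncontested.
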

\begin{proof}
    We once again consider each action individually.
    \begin{enumerate}[label=$\alpha_{\arabic*}:$, leftmargin=0.75cm]
        \item Action $\alpha_1$ first executes two \Write\ operations to $A$'s own public memory which cannot fail.
        It then executes a helper function \textsc{GetNextDir}(counter-clockwise) which involves a sequence of \Connected\ and \Read\ operations.
        \Connected\ operations always succeed, so it suffices to consider the \Read\ operations.
        While it is possible that \Read\ operations issued to follower or root neighbors may fail if those neighbors disconnect, these failures are caught by error handling and thus do not cause the action to fail.
        Moreover, the critical \Read\ operations issued to seed or retired neighbors that the function depends on for calculating the correct direction must succeed by the guard of $\alpha_1$ and Lemma~\ref{lem:hexenabled}.
        Once this direction is computed, $\alpha_1$ then executes a \Write\ operation to $A$'s own memory which cannot fail.
        
        \item Action $\alpha_2$ first executes \Connected\ and \Read\ operations to find a follower or root neighbor. Such a neighbor must exist and the corresponding \Read\ operations must succeed by the guard of $\alpha_2$ and Lemma~\ref{lem:hexenabled}.
        Action $\alpha_2$ then executes two \Write\ operations to $A$'s own public memory which cannot fail.
        
        \item Action $\alpha_3$ first executes helper function \textsc{GetNextDir}(clockwise) which must succeed by an argument analogous to that of $\alpha_1$.
        Once this direction is computed, $\alpha_3$ executes two \Write\ operations to $A$'s own public memory which cannot fail.
        
        \item Action $\alpha_4$ executes an \Expand\ operation toward port $A.\dir$ which must succeed because $A$ is contracted and the node adjacent to $A.\dir$ must remain unoccupied, as ensured by the guard of $\alpha_4$ and Lemma~\ref{lem:hexenabled}.
        
        \item Action $\alpha_5$ first executes a conditional based on a \Read\ operation issued to $A$'s own public memory which cannot fail.
        It then executes helper function \textsc{GetNextDir}(counter-clockwise) which must succeed by an argument analogous to that of $\alpha_1$.
        The computed direction is then used in a \Write\ operation to $A$'s own public memory which cannot fail.
        Action $\alpha_5$ then executes a helper function \textsc{TailChildren}() which, like \textsc{GetNextDir}, involves \Connected\ and \Read\ operations.
        It must succeed for similar reasons: any failed \Read\ operations are caught by error handling, and the critical \Read\ operations issued to tail-children must succeed by the guard of $\alpha_5$ and Lemma~\ref{lem:hexenabled}.
        Once the ports connected to tail-children are computed, \Read\ operations are executed to find a contracted tail-child $B$ which once again must succeed by the guard of $\alpha_5$ and Lemma~\ref{lem:hexenabled}.
        Finally, $\alpha_5$ executes a \Write\ to the public memory of $B$ and performs a \Pull\ handover with $B$; both operations must succeed because $B$ remains connected to $A$ and cannot be involved in another movement by Invariant~\ref{inv:hex:shape}.
        
        \item Action $\alpha_6$ first executes the same conditional operation as $\alpha_5$ and thus succeeds for an analogous reason.
        It then executes a single \Contract\ operation which must succeed because $A$ is expanded, as ensured by the guard of $\alpha_6$ and Lemma~\ref{lem:hexenabled}.
    \end{enumerate}
    
    Therefore, any execution of an enabled action must be successful and unaffected by concurrent action executions, as claimed.
\end{proof}

We next show that the \algHex\ algorithm is serializable.
We denote the execution of an action $\alpha$ by an amoebot $A$ in an execution of the algorithm as a pair $(A, \alpha)$.

\begin{lemma} \label{lem:hexserializable}
    For any asynchronous execution of the \algHex\ algorithm, there exists a sequential ordering of its action executions producing the same final configuration.
\end{lemma}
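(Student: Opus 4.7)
The plan is to produce a sequential ordering of the asynchronous run's action executions and, by induction, to verify that executing them sequentially in that order yields the same final configuration. The natural candidate is to linearize the set of action executions according to the real time at which each amoebot completes its last operation of the action, breaking ties arbitrarily; by Observation~\ref{obs:operationstime} each action terminates in finite time, and (invoking Lemma~\ref{lem:hexconcurrent}, proved separately) the run consists of at most finitely many actions, so this produces a well-defined sequence $\sigma = (A_1,\alpha_{j_1}), \ldots, (A_m,\alpha_{j_m})$. Because actions are blocking (an amoebot is active for at most one action at a time), this linearization automatically preserves the per-amoebot order of actions.

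I would then prove by induction on $k$ that after executing the first $k$ entries of $\sigma$ sequentially from the common initial configuration, the resulting system state coincides with the asynchronous state restricted to the cumulative effects of the same $k$ action executions. The base case $k=0$ is immediate. For the inductive step, I must establish two things: (i) $\alpha_{j_k}$ is enabled for $A_k$ in the sequential state just before step $k$, and (ii) executing $\alpha_{j_k}$ in that sequential state produces exactly the same reads, writes, and (if any) movement as $A_k$ performed in the asynchronous run.

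Part (i) follows from Lemma~\ref{lem:hexenabled}: $A_k$'s guard was satisfied when it began $\alpha_{j_k}$ in the asynchronous execution, and the guard remained satisfied until $A_k$ acted, and in the sequential state the cumulative effects of $\alpha_{j_1},\ldots,\alpha_{j_{k-1}}$ agree with what had committed before $A_k$ acted asynchronously by the inductive hypothesis. For part (ii), Lemma~\ref{lem:hexsuccess} says every operation of $\alpha_{j_k}$ succeeds and is unaffected by any concurrent action execution; combined with invariants~\ref{inv:hex:state}--\ref{inv:hex:movepath}, which restrict which amoebot may modify each observable piece of state, this means every value read during $\alpha_{j_k}$ can only have been produced by a uniquely determined earlier action in $\sigma$ (or by the initial configuration), and every value $\alpha_{j_k}$ writes or movement it performs is identical in the sequential and asynchronous settings. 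Hence step $k$ of the sequential simulation reproduces the net effect of $\alpha_{j_k}$ in the asynchronous run and preserves the inductive invariant.

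The main obstacle I anticipate is justifying, on an action-by-action basis, the claim that each read performed inside $\alpha_{j_k}$ returns a value already committed by some earlier entry of $\sigma$. Because the individual operations inside $\alpha_{j_k}$ occur at various real times relative to concurrent operations of other amoebots, one must check for each guard evaluation, each helper-function read (e.g., in \textsc{GetNextDir} and \textsc{TailChildren}), and each handover request that the value consulted is precisely the one that the sequential simulation would produce after the first $k-1$ entries of $\sigma$. The invariants do most of this work --- they guarantee that every observable variable has a unique writer whose action either is already committed in $\sigma$ or has not yet begun --- but executing this check explicitly for each of $\alpha_1,\ldots,\alpha_6$, along the same lines as the case analyses in Lemmas~\ref{lem:hexenabled} and~\ref{lem:hexsuccess}, is the main technical effort needed to complete the argument.
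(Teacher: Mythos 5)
Your proposal is correct and rests on the same two pillars as the paper's proof --- Lemmas~\ref{lem:hexenabled} and~\ref{lem:hexsuccess} --- via an induction on the number of action executions, but the inductive structure is inverted. The paper orders executions by \emph{activation} time, peels off a latest-activated execution $(A^*,\alpha^*)$, serializes the remaining $i$ executions by the induction hypothesis, and appends $(A^*,\alpha^*)$ at the end; the two lemmas then guarantee $\alpha^*$ is still enabled and produces the same outcome. You instead order by \emph{completion} time and run a forward simulation, checking at each step that the sequential prefix reproduces the committed asynchronous state. Both orderings work, and the ``main technical effort'' you defer at the end --- verifying read-by-read that each value consulted matches the sequential state --- is precisely what Lemma~\ref{lem:hexsuccess} already delivers (executions of enabled actions are successful and \emph{unaffected by} concurrent executions), so no further per-action case analysis is needed; the paper simply cites the two lemmas at this point. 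One caveat: you invoke Lemma~\ref{lem:hexconcurrent} to get finiteness of the run, but in the paper that lemma is proved \emph{after} and \emph{using} Lemma~\ref{lem:hexserializable}, so citing it wholesale is circular. The repair is easy --- the termination half of Lemma~\ref{lem:hexconcurrent} follows from Lemmas~\ref{lem:hexsequential}--\ref{lem:hexsuccess} alone without serializability, or you can simply note (as the paper implicitly does by inducting on the number of executions) that the statement only concerns executions that reach a final configuration and hence are finite.
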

\begin{proof}
    Argue by induction on $i$, the number of action executions in the asynchronous execution of $\algHex$.
    Clearly, if $i = 1$, the asynchronous execution of a single action is also a sequential execution, and we are done.
    So suppose that any asynchronous execution of $\algHex$ consisting of $i \geq 1$ action executions can be serialized, and consider any asynchronous execution $\sched$ consisting of $i + 1$ action executions.
    One can partially order the action executions $(A, \alpha$) of $\sched$ according to the wall-clock time the asynchronous adversary activated $A$; note that this is only used for this analysis, and the wall-clock time is never available to the amoebots.
    Let $(A^*, \alpha^*)$ be the action execution with the latest activation time; if there are multiple such executions because the asynchronous adversary activated multiple amoebots simultaneously, choose any such execution.
    If $(A^*, \alpha^*)$ was removed from $\sched$ to produce a new asynchronous execution $\sched^-$, we have by Lemmas~\ref{lem:hexenabled} and~\ref{lem:hexsuccess} that the remaining $i$ action executions must still be enabled and successful since all other action executions either terminated before $(A^*, \alpha^*)$ was initiated or were concurrent with it.
    By the induction hypothesis, there must exist a sequential ordering of the $i$ action executions in $\sched^-$ producing the same final configuration as $\sched^-$.
    Append $(A^*, \alpha^*)$ to the end of this sequential execution to produce $\sched^*$, a sequential execution of $i + 1$ action executions.
    Any actions that were concurrent with $(A^*, \alpha^*)$ in $\sched$ have now terminated before $(A^*, \alpha^*)$ in $\sched^*$.
    However, by Lemmas~\ref{lem:hexenabled} and~\ref{lem:hexsuccess}, this does not change the fact that $\alpha^*$ is enabled for $A^*$ and its execution is successful and produces the same outcome in $\sched^*$.
    Therefore, we conclude that there exists a sequential ordering of the action executions of $\sched$ producing the same final configuration.
\end{proof}

Finally, we show that the \algHex\ algorithm is correct under an unfair asynchronous adversary.

\begin{lemma} \label{lem:hexconcurrent}
    Any unfair asynchronous execution of the \algHex\ algorithm terminates with the amoebot system forming a hexagon.
\end{lemma}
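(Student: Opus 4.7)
The plan is to reduce this asynchronous claim to its sequential counterpart (Lemma~\ref{lem:hexsequential}) using the machinery already developed: serializability (Lemma~\ref{lem:hexserializable}) plus the robustness of enabled actions (Lemmas~\ref{lem:hexenabled} and~\ref{lem:hexsuccess}). The proof splits into two stages: first I would show that every unfair asynchronous execution consists of only finitely many action executions; then I would identify the terminal configuration with that of a sequential execution and quote Lemma~\ref{lem:hexsequential}.

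For finiteness I would argue by contradiction. Suppose some asynchronous execution $\sched$ contains infinitely many action executions. Every finite prefix $\sched_k$ of length $k$ is itself a valid finite asynchronous execution, so Lemma~\ref{lem:hexserializable} produces a sequential execution $\sched_k^\star$ of length $k$ reaching the same configuration; hence valid sequential executions of arbitrary length exist. Form the tree $T$ whose nodes are finite sequential prefixes from the initial configuration and whose children extend a prefix by one enabled action. Because the amoebot system is finite and each amoebot has a constant number of actions, $T$ is finitely branching. Since $T$ contains paths of unbounded length, König's Lemma yields an infinite branch, i.e., a non-terminating sequential execution, directly contradicting Lemma~\ref{lem:hexsequential}. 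Hence $\sched$ must be finite.

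Once $\sched$ is known to be finite, I would invoke the unfair-adversary guarantee: if any amoebot were enabled in the terminal configuration $C$ of $\sched$, that amoebot would eventually be activated (since Lemma~\ref{lem:hexenabled} keeps it continuously enabled), and by Lemma~\ref{lem:hexsuccess} its action execution would succeed and extend $\sched$, contradicting termination. Therefore no amoebot is enabled in $C$. Applying Lemma~\ref{lem:hexserializable} to $\sched$ gives a sequential execution also reaching $C$; since $C$ admits no enabled actions, this sequential execution has terminated, and Lemma~\ref{lem:hexsequential} then forces $C$ to form a hexagon.

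The main obstacle is the finiteness step. In principle one could try a direct potential-function argument showing that each action strictly decreases a progress measure, but actions such as $\alpha_4$ (root expansion) change neither amoebot states nor the spanning-forest structure, making a uniform monovariant awkward to define. Routing through serializability and König's Lemma sidesteps this by leveraging sequential termination as a black box, at the cost of a small combinatorial step; I would be careful to phrase the tree $T$ in terms of action-execution sequences rather than configurations, since the same configuration could otherwise be revisited and break the finite-branching requirement needed for König's Lemma.
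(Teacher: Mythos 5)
Your proposal is correct and follows essentially the same route as the paper's proof: first show that any asynchronous execution contains only finitely many action executions by contradiction with sequential termination, then serialize the terminated execution via Lemma~\ref{lem:hexserializable} and invoke Lemma~\ref{lem:hexsequential} on the resulting sequential execution. The only notable difference is that you make the finiteness step fully rigorous via K\"onig's Lemma on the finitely branching tree of sequential prefixes (correctly noting that a uniform length bound does not follow immediately from termination of each individual sequential execution), whereas the paper argues this step more tersely by asserting that Lemmas~\ref{lem:hexenabled} and~\ref{lem:hexsuccess} let the infinitely many successful action executions be read as an infinite sequential schedule, contradicting Lemma~\ref{lem:hexsequential}.
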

\begin{proof}
    First suppose to the contrary that there exists an asynchronous execution of \algHex\ that does not terminate; i.e., there are an infinite number of executions of enabled actions.
    By Lemmas~\ref{lem:hexenabled} and~\ref{lem:hexsuccess}, any such action execution must succeed and do exactly what it would have in a sequential execution where there are no other concurrent action executions.
    But Lemma~\ref{lem:hexsequential} implies that there can only be a finite number of successful action executions before no amoebot has any enabled actions left, a contradiction.
    So all asynchronous executions of \algHex\ must terminate.

    Now suppose to the contrary that there exists an asynchronous execution of \algHex\ that has terminated but the system does not form a hexagon.
    By Lemma~\ref{lem:hexserializable}, there must exist a sequential execution that also produces this non-hexagon final configuration.
    However, this is a contradiction of Lemma~\ref{lem:hexsequential} which states that every sequential execution of \algHex\ must terminate with the system forming a hexagon.
\end{proof}

Our analysis culminates in the following theorem.

\begin{theorem} \label{thm:hexagon}
    Assuming geometric space, assorted orientations, and constant-size memory, the \algHex\ algorithm solves the hexagon formation problem under any adversary.
\end{theorem}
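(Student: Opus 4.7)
The plan is to derive Theorem~\ref{thm:hexagon} as an immediate corollary of Lemma~\ref{lem:hexconcurrent}, by arguing that the unfair asynchronous adversary already subsumes every concurrency and fairness variant in Table~\ref{tab:variants}. Since Lemmas~\ref{lem:hexsequential}--\ref{lem:hexconcurrent} already do the algorithmic heavy lifting, what remains reduces to a short containment argument over adversary classes, under the fixed assumptions of geometric space, assorted orientation, and constant-size memory.

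First I would handle the concurrency axis. The sequential, $k$-isolated, and synchronous adversaries each restrict which sets of amoebots may be simultaneously active (at most one; no two within distance $k$; arbitrary but with stage-synchronized evaluation and execution phases, respectively). Each such restricted activation pattern is also a legal activation pattern for the asynchronous adversary, which permits any set to be active at once with no stage synchronization. Hence any execution $\sched$ produced by a sequential, $k$-isolated, or synchronous scheduler is, in particular, a valid asynchronous execution, and any guarantee proved for the asynchronous setting transfers directly.

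Next I would handle the fairness axis. Here the important point is the direction of the containment: unfair is the \emph{weakest} fairness variant, imposing only that some enabled amoebot is eventually activated, whereas weakly and strongly fair adversaries impose additional obligations on top of this. Consequently, every execution admitted by a weakly or strongly fair scheduler is also admitted by some unfair scheduler, so correctness established for unfair executions automatically covers the fair variants as well.

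Combining the two containments, every execution under any adversary listed in Table~\ref{tab:variants} is a valid unfair asynchronous execution of \algHex. Lemma~\ref{lem:hexconcurrent} then guarantees that such an execution terminates with the amoebot system forming a hexagon, which is exactly the hexagon formation problem, yielding the theorem. The only point that needs care is not to invert the fairness direction; beyond that there is no genuine obstacle, as the substantive work has already been absorbed into Lemmas~\ref{lem:hexenabled}--\ref{lem:hexconcurrent}.
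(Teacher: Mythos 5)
Your proposal is correct and matches the paper's own argument, which likewise derives Theorem~\ref{thm:hexagon} directly from Lemma~\ref{lem:hexconcurrent} by noting that any sequential, $k$-isolated, or synchronous adversary is also an asynchronous adversary and that no fairness assumptions were made. Your extra care about the direction of the fairness containment is a sensible elaboration of the same one-line argument.
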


Serializability and correctness under an asynchronous adversary (Lemmas~\ref{lem:hexserializable} and~\ref{lem:hexconcurrent}) follow directly from Lemmas~\ref{lem:hexsequential}--\ref{lem:hexsuccess}, independent of the specific details of \algHex.
Thus, Lemmas~\ref{lem:hexsequential}--\ref{lem:hexsuccess} establish a set of general sufficient conditions for amoebot algorithm correctness under an asynchronous adversary.
We are optimistic that other existing amoebot algorithms, once translated into action formulations, will also satisfy these conditions.

\section{A General Framework for Concurrency Control} \label{sec:framework}

In the sequential setting where only one amoebot is active at a time, operation failures are necessarily the fault of the algorithm designer: e.g., attempting to \Read\ on a disconnected port, attempting to \Expand\ when already expanded, etc.
Barring these design errors, it suffices to focus only on the correctness of the algorithm---i.e., whether the algorithm's actions always produce the desired system behavior under any sequential execution---not whether the individual actions themselves execute as intended.
This is the focus of most existing amoebot works~\cite{AndresArroyo2018-stochasticapproach,Cannon2019-localstochastic,Cannon2016-markovchain,Daymude2020-convexhull,Daymude2017-improvedleader,Derakhshandeh2015-algorithmicframework,Derakhshandeh2016-universalshape,Derakhshandeh2017-universalcoating,Derakhshandeh2015-leaderelection,Dufoulon2021-efficientdeterministic,Emek2019-deterministicleader,Gastineau2019-distributedleader}.

Our present focus is on asynchronous executions, where concurrent action executions can mutually interfere, affect outcomes, and cause failures far beyond those of simple designer negligence.
Ensuring algorithm correctness in spite of concurrency thus appears to be a significant burden for the algorithm designer, especially for problems that are challenging even in the sequential setting due to the constraints of constant-size memory, assorted orientation, and strictly local interactions.
What if there was a way to ensure that correct, sequential amoebot algorithms could be lifted to the asynchronous setting without sacrificing correctness?
This would give the best of both worlds: the relative ease in design from the sequential setting and the correct execution in a more realistic concurrent setting.

In this section, we introduce and rigorously analyze a framework for transforming an algorithm $\alg$ that works correctly for every sequential execution into an algorithm $\alg'$ that works correctly for every asynchronous execution.
We prove that our framework achieves this goal so long as the original algorithms satisfy certain \textit{conventions}.
These conventions limit the full generality of the amoebot model in order to provide a common structure to the algorithms.
In Section~\ref{subsec:conventions}, we define these conventions and prove that they are satisfied by both the \algHex\ algorithm of Section~\ref{sec:hexagon} and a broad class of \textit{stationary} amoebot algorithms.
This implies that these algorithms are immediately compatible with our concurrency control framework, which we detail in Section~\ref{subsec:framework} and rigorously analyze in Section~\ref{subsec:serialanalysis}.

\subsection{Algorithm Conventions for Concurrency Control} \label{subsec:conventions}

The first convention requires that all actions of the given algorithm are executed successfully under a sequential adversary.
For sequential executions, the \textit{system configuration} is defined as the mapping of amoebots to the node(s) they occupy and the contents of each amoebot's public memory.
Certainly, this configuration is well-defined whenever all amoebots are inactive, and we call a configuration \textit{legal} whenever the requirements of our amoebot model are met, i.e., every position is occupied by at most one amoebot, each amoebot is either contracted or expanded, its \shape\ variable corresponds to its physical shape, and its \lock\ variable corresponds to its lock state.
Whenever we talk about a system configuration in the following, we assume that it is legal.

\begin{convention}[Validity] \label{conv:valid}
    All actions $\alpha$ of an amoebot algorithm $\alg$ should be \underline{valid}, i.e., for all system configurations in which $\alpha$ is enabled for some amoebot $A$, the execution of $\alpha$ by $A$ should be successful whenever all other amoebots are inactive.
\end{convention}

The second convention defines a common structure for an algorithm's actions by controlling the order and number of operations they perform.
This structure is similar in spirit to the \textit{look-compute-move} paradigm used in the mobile robots literature (see, e.g.,~\cite{Flocchini2019-distributedcomputing}), though the canonical amoebot model's underlying message passing communication adds additional complexity.
Moreover, the instantaneous snapshot performed in the mobile robots' look phase is not trivially realizable by amoebots whose public memories are included in neighborhood configurations (Section~\ref{subsec:relwork}).

\begin{convention}[Phase Structure] \label{conv:phases}
    Each action of an amoebot algorithm $\alg$ should structure its operations as: (1) a \underline{compute phase}, during which an amoebot performs a finite amount of computation and a finite sequence of \Connected, \Read, and \Write\ operations, and (2) a \underline{move phase}, during which an amoebot performs at most one movement operation decided upon in the compute phase.
    In particular, no action should use \Lock\ or \Unlock\ operations.
\end{convention}

The third and final convention, \textit{expansion-robustness}, allows us to map asynchronous executions of algorithms produced by the concurrency control framework to related sequential executions.
A key challenge in achieving this mapping for concurrent executions of amoebot algorithms is the possibility of one or more amoebots expanding into the neighborhood of an amoebot $A$ that has already started executing an action of its own.
These newly expanded neighbors were not present when $A$ evaluated its action guard, and thus may cause the execution to exhibit different behavior than in the sequential setting---or worse, fail altogether.
Informally, an expansion-robust algorithm is ambivalent to these concurrent expansions, guaranteeing correct behavior regardless.

\begin{algorithm}[t]
    \caption{Expansion-Robust Variant $\alg^E$ of Algorithm $\alg$ for Amoebot $A$} \label{alg:expandrobust}
    \begin{algorithmic}[1]
        \Statex \textbf{Input}: Algorithm $\alg = \{[\alpha_i : g_i \to ops_i] : i \in \{1, \ldots, m\}\}$ satisfying Conventions~\ref{conv:valid} and~\ref{conv:phases}.
        \State Set $\alpha_0^E : (\exists$ port $p$ of $A : A.\xflag_p = \true) \to$ \Write$(\bot, \xflag_p, \false)$.
        \For {each action $[\alpha_i : g_i \to ops_i] \in \alg$}
            \State Set $g_i^E \gets g_i$ with $N(A)$ replaced by $N^E(A)$ and connections defined w.r.t.\ $N^E(A)$.
            \State Set $ops_i^E \gets$ ``Do:
            \Indent
                % \State Initialize $N^E \gets \emptyset$.  \Comment{Compute $N^E(A)$.}
                % \For {each unique neighbor $B \in \Connected()$}
                %     \If {$\exists$ port $p$ of $B$ connected to $A$ s.t.\ $\Read(B, \xflag_p) = \false$} $N^E \gets N^E \cup \{B\}$.
                %     \EndIf
                % \EndFor
                \For {each port $p$ of $A$} \Write$(\bot, \xflag_p, \false)$.  \Comment{Reset own expand flags.}  \label{alg:expandrobust:resetown}
                \EndFor
                \For {each unique neighbor $B \in \Connected()$}  \Comment{Reset neighbor's expand flags.}
                    \For{each port $p$ of $B$} \Write$(B, \xflag_p, \false)$. \label{alg:expandrobust:resetnbr}
                    \EndFor
                \EndFor
                \State Execute each operation of $ops_i$ with connections defined w.r.t.\ $N^E(A)$.
                \If {a \Pull\ or \Push\ operation was executed with neighbor $B$}
                    \For {each new port $p$ of $A$ not connected to $B$} \Write$(\bot, \xflag_p, \true)$.  \label{alg:expandrobust:handover1}
                    \EndFor
                    \For {each new port $p$ of $B$ not connected to $A$} \Write$(B, \xflag_p, \true)$.  \label{alg:expandrobust:handover2}
                    \EndFor
                \ElsIf {an \Expand\ operation was successfully executed}
                    \For {each new port $p$ of $A$} \Write$(\bot, \xflag_p, \true)$.  \label{alg:expandrobust:expand}
                    \EndFor
                \ElsIf {an \Expand\ operation failed in its execution} undo $ops_i$.''
                % \Comment{Note that setting expand flags is not undone.}
                \label{alg:expandrobust:undo}
                \EndIf
            \EndIndent
        \EndFor
        \State \Return $\alg^E = \{[\alpha_i^E : g_i^E \to ops_i^E] : i \in \{0, \ldots, m\}\}$.
    \end{algorithmic}
\end{algorithm}

Formally, let $\alg$ be any amoebot algorithm satisfying Conventions~\ref{conv:valid} and~\ref{conv:phases} and consider its expansion-robust variant $\alg^E$ defined as follows.
Each amoebot $A$ executing $\alg^E$ additionally stores in public memory \textit{expand flags} $A.\xflag_p \in \{\true, \false\}$ for each of its ports $p$ that are initially set to \false.
These expand flags communicate when an amoebot has newly expanded into another amoebot's neighborhood.
Each action $\alpha_i : g_i \to ops_i$ in $\alg$ translates to an action $\alpha_i^E : g_i^E \to ops_i^E$ in $\alg^E$, as detailed in Algorithm~\ref{alg:expandrobust}.\footnote{For the sake of clarity and brevity, we abuse \Connected, \Read, and \Write\ notation slightly by referring directly to the neighboring amoebots and not to the ports which they are connected to.}
The main difference is that while an amoebot $A$ executes actions with respect to its full neighborhood $N(A)$ in algorithm $\alg$, it does so only with respect to its \textit{established neighborhood} $N^E(A) = \{B \in N(A) : \exists \text{ port $p$ of $B$ connected to $A$ s.t.\ } B.\xflag_p = \false\}$ in algorithm $\alg^E$, effectively ignoring its newly expanded neighbors until its next action execution. 
%To correctly characterize the newly expanded neighbors, the expand flags are set to true according to the following rule. Whenever an amoebot $A$ expands due to an \Expand\ operation or a handover operation with some amoebot $B$, we call a port $p$ of $A$ \emph{exposed} if the edge corresponding to $p$ in the triangular grid contains a node that is not adjacent to the nodes occupied by the contracted $A$ or $B$. Note that a handover operation creates at most one exposed port while an \Expand\ operation creates three. 
The expansion-robustness convention can now be stated as follows:

\begin{convention}[Expansion-Robustness] \label{conv:expandrobust}
    An amoebot algorithm $\alg$ should be \underline{expansion-robust}, meaning that for any (legal) initial system configuration $C_0$ of $\alg$,
    \begin{enumerate}
        \item Termination. If all sequential executions of $\alg$ starting in $C_0$ terminate, all sequential executions of $\alg^E$ starting in $C_0^E$ (i.e., $C_0$ with all \false\ expand flags) also terminate.
        
        \item Correctness. If some sequential execution of $\alg^E$ starting in $C_0^E$ terminates in a configuration $C^E$, there exists a sequential execution of $\alg$ starting in $C_0$ that terminates in $C$ (i.e., $C^E$ without expand flags).
    \end{enumerate}
\end{convention}

It is worth emphasizing that $\alg^E$ is simply a useful artifact for formalizing expansion-robustness and, as such, we are only interested in its behavior in the sequential setting.
Several operations in $\alg^E$, such as the ``undo'' on Line~\ref{alg:expandrobust:undo} of Algorithm~\ref{alg:expandrobust}, may not even be possible in a concurrent setting, but this is inconsequential for our purposes.

We now demonstrate that Conventions~\ref{conv:valid}--\ref{conv:expandrobust} are not too limiting; i.e., there do exist algorithms that satisfy these conventions and thus are compatible with our concurrency control framework.
Of the three conventions, expansion-robustness (Convention~\ref{conv:expandrobust}) is the most technically difficult to verify.
However, \textit{stationary} amoebot algorithms $\alg$---i.e., those that do not perform any movement operations, including many of the existing algorithms for leader election~\cite{Bazzi2019-stationarydeterministic,Daymude2017-improvedleader,Derakhshandeh2015-leaderelection,DiLuna2020-shapeformation,Gastineau2019-distributedleader} and the recent algorithm for energy distribution~\cite{Daymude2021-bioinspiredenergy}---are trivially expansion-robust since no amoebot ever moves and thus $\alg$ and $\alg^E$ are identical.

\begin{observation} \label{obs:stationarycompatible}
    Any stationary amoebot algorithm satisfies Convention~\ref{alg:expandrobust}.
\end{observation}

In the conference version of this work~\cite{Daymude2021-canonicalamoebot}, it remained an open question whether any amoebot algorithm involving movement satisfied all three conventions.
By replacing the prior version's \textit{monotonicity} convention with expansion-robustness, we identify the \algHex\ algorithm of Section~\ref{sec:hexagon} as such an algorithm.

\begin{theorem} \label{thm:hexcompatible}
    The \algHex\ algorithm satisfies Conventions~\ref{conv:valid}--\ref{conv:expandrobust}.
\end{theorem}
\begin{proof}
    It is easy to verify that \algHex\ satisfies Conventions~\ref{conv:valid} and~\ref{conv:phases} by inspection.
    Hence, it remains to show \algHex\ is expansion-robust (Convention~\ref{conv:expandrobust}).
    To prove correctness, we will show that whenever an action $\alpha_i^E \in \algHex^E$ (other than $\alpha_0^E$) is enabled for an amoebot $A$ w.r.t.\ $N^E(A)$, action $\alpha_i \in \algHex$ is enabled for $A$ w.r.t.\ $N(A)$.
    Moreover, we show that the executions of $\alpha_i$ and $\alpha_i^E$ by $A$ are identical except for the handling of expand flags.
    This immediately implies that every sequential execution of $\algHex^E$ represents an identical sequential execution of \algHex\ (after removing the executions of $\alpha_0^E$), proving correctness.
    Observe that expand flags are only set to \true\ in $\algHex^E$ as a result of an \Expand, \Pull, or \Push\ operation, which are specific to $\alpha_4^E$ and $\alpha_5^E$; thus, we need only focus on amoebots executing these actions.
    % These actions are only executed by follower and root amoebots.
    % Moreover, because \Pull\ handovers performed by an amoebot $A$ executing $\alpha_5$ always pull a tail-child $B$ that $A$ remains connected to, the expand flag of $B$ on port $B.\parent$ is never set to \true.
    
    Suppose a contracted root amoebot $A$ executes $\alpha_4^E$, expanding towards $A.\dir$ along the surface of the forming hexagon into the neighborhood of another amoebot $B$ it was not already connected to.
    If $B$ is a root, then it can be easily verified by inspecting the guards and operations of $\alpha_3$, $\alpha_4$, $\alpha_5$, and $\alpha_6$ that $A$---which must be ``behind'' $B$ in the clockwise traversal of the hexagon's surface---has no bearing on which actions are enabled for $B$ nor on their execution.
    If $B$ is a follower, then its parent must be some amoebot other than $A$ because $A$ only just became its neighbor.
    Thus, in either of these cases, the fact that $A \in N(B) \setminus N^E(B)$ is inconsequential.
    Finally, if $B$ is idle, then $A \not\in N^E(B)$ prohibits $B$ from choosing $A$ as its parent in $\alpha_2^E$ while the same choice is allowed in $\alpha_2$.
    If $B$ chooses some amoebot $C \neq A$ as its parent while executing $\alpha_2^E$, then certainly an execution of $\alpha_2$ by $B$ in the same configuration could have made the same choice.
    Otherwise, if $A$ is the only neighbor of $B$, then we know $\alpha_0^E$ is continuously enabled for $A$ while $A \not\in N^E(B)$.
    Amoebot $A$ cannot disconnect from $B$ while $B$ is idle by the guards of $\alpha_3$, $\alpha_5$, and $\alpha_6$, so eventually an execution of $\alpha_0^E$ resets the expand flags of $A$, allowing $B$ to choose $A$ as its parent just as in its corresponding execution of $\alpha_2$.
    
    Now suppose an expanded follower or root amoebot $A$ executes $\alpha_5^E$, pulling some follower tail-child $B$ in a handover.
    Consider any new port $p$ of $B$ for which $B.\xflag_p = \true$ after this handover occurs.
    If there is no neighboring amoebot connected to port $p$, then only a contracted root could expand into that position, as already covered in the analysis of $\alpha_4^E$.
    So suppose that an amoebot $C$ is connected to $B$ via port $p$.
    The guard of $\alpha_5$ would have prohibited $A$ from performing the pull handover with $B$ if $A$ had any idle neighbors, so $C$ cannot be idle.
    Inspection of the guards and operations of actions $\alpha_1$, $\alpha_5$, and $\alpha_6$ show that $C$ is irrelevant to $B$ if $C$ is a root.
    So it remains to consider if $C$ is a follower.
    If $C.\parent$ points to any node other than the new head of $B$, then $B \in N(C) \setminus N^E(C)$ is inconsequential to $C$.
    Otherwise, if $C.\parent$ refers to the new head of $B$, the fact that $B \not\in N^E(C)$ is once again inconsequential to $C$ because children never initiate interactions with their parents.
    Therefore, in all cases, the correctness condition of expansion-robustness follows.
    
    To prove termination, suppose to the contrary that there exists a sequential execution $\sched^E$ of $\algHex^E$ starting in a legal initial configuration $C_0^E$ that contains an infinite number of action executions.
    By our correctness analysis, we know that $\sched^E$ must correspond to an identical sequential execution $\sched$ of \algHex, modulo executions of $\alpha_0^E$.
    In Section~\ref{sec:hexagon}, we proved that all sequential executions of \algHex, $\sched$ included, must terminate (Lemma~\ref{lem:hexsequential}).
    Thus, $\sched^E$ must contain an infinite number of executions of $\alpha_0^E$.
    But this is impossible, as there are a finite number of amoebots and each of them has a finite number of expand flags to reset with $\alpha_0^E$, a contradiction.
    Thus, the termination condition is satisfied, and \algHex\ is expansion-robust.
\end{proof}

With the validity, phase structure, and expansion-robustness conventions established, we now turn to the description and analysis of the concurrency control framework.

\subsection{The Concurrency Control Framework} \label{subsec:framework}

Our \textit{concurrency control framework} (Algorithm~\ref{alg:framework}) takes as input any amoebot algorithm $\alg = \{[\alpha_i : g_i \to ops_i] : i \in \{1, \ldots, m\}\}$ satisfying Conventions~\ref{conv:valid}--\ref{conv:expandrobust} and produces a corresponding algorithm $\alg' = \{[\alpha' : g' \to ops']\}$ composed of a single action $\alpha'$.
The core idea of our framework is to carefully incorporate locks in $\alpha'$ as a wrapper around the actions of $\alg$, ensuring that $\alg'$ only produces outcomes in concurrent settings that $\alg$ can produce in the sequential setting.
With locks, action guards that in general can only be evaluated reliably in the sequential setting can now also be evaluated reliably in concurrent settings.

To avoid any deadlocks that locking may cause, our framework adds an \textit{activity bit} variable $A.\activity \in \{\true, \false\}$ to the public memory of each amoebot $A$ indicating if any changes have occurred in the memory or neighborhood of $A$ since it last attempted to execute an action.
The single action $\alpha'$ of $\alg'$ has guard $g' = (A.\activity = \true)$, ensuring that $\alpha'$ is only enabled for an amoebot $A$ if changes in its memory or neighborhood may have caused some actions of $\alg$ to become enabled.
As will become clear in the presentation of the framework, \Write\ and movement operations may enable actions of $\alg$ not only for the neighbors of the acting amoebot, but also for the neighbors of those neighbors (i.e., in the 2-neighborhood of the acting amoebot).
The acting amoebot cannot directly update the activity bits of amoebots in its 2-neighborhood, so it instead sets its neighbors' \textit{awaken bits} $A.\awaken \in \{\true, \false\}$ to indicate that they should update their neighbors' activity bits in their next action.
Initially, $A.\activity = \true$ and $A.\awaken = \false$ for all amoebots $A$.

\begin{algorithm}[t]
\caption{Concurrency Control Framework for Amoebot $A$} \label{alg:framework}
\begin{algorithmic}[1]
    \Statex \textbf{Input}: Algorithm $\alg = \{[\alpha_i : g_i \to ops_i] : i \in \{1, \ldots, m\}\}$ satisfying Conventions~\ref{conv:valid}--\ref{conv:expandrobust}.
    \State Set $g' \gets (A.\activity = \true)$ and $ops' \gets$ ``Do: \label{alg:framework:guard}
    \Indent
        \State \textbf{try:} Set $\mathcal{L} \gets \Lock()$ to attempt to lock $A$ and its persistent neighbors. \label{alg:framework:lock}
        \State \textbf{catch} \texttt{lock-failure} \textbf{do} abort. \label{alg:framework:lockabort}
        \If {\Read$(\bot, \awaken) = \true$} \label{alg:framework:awaken}
            \ForAll {amoebots $B \in \mathcal{L}$} \Write$(B, \activity, \true)$.
            \EndFor
            \State \Write$(\bot, \awaken, \false)$, $\Unlock(\mathcal{L})$, and abort. \label{alg:framework:awakenabort}
        \EndIf
        \ForAll {actions $[\alpha_i : g_i \to ops_i] \in \alg$} \label{alg:framework:guards}
            \State Perform \Connected\ and \Read\ operations to evaluate guard $g_i$ w.r.t.\ $\mathcal{L}$.\label{alg:framework:read}
            \State Evaluate $g_i$ in private memory to determine if $\alpha_i$ is enabled. \label{alg:framework:evaluate}
        \EndFor
        \If {no action is enabled} \Write$(\bot, \activity,\false)$, $\Unlock(\mathcal{L})$, and abort.\label{alg:framework:disabledabort}
        \EndIf
        \State Choose an enabled action $\alpha_i \in \alg$ and perform its compute phase in private memory. \label{alg:framework:compute1}
        \State \multiline{Let $W_i$ be the set of \Write\ operations and $M_i$ be the movement operation in $ops_i$ based on its compute phase; set $M_i \gets \nil$ if there is none. \label{alg:framework:compute2}} 
        \If {$M_i$ is \Expand\ (say, from node $u$ into node $v$)}
            \State \textbf{try:} Perform the \Expand\ operation and \Write$(\bot, \awaken, \true)$. \label{alg:framework:expand} \label{alg:framework:expandawaken}
            \State \textbf{catch} \texttt{expand-failure} \textbf{do} $\Unlock(\mathcal{L})$ and abort. \label{alg:framework:expandabort}
        \EndIf
        \ForAll {amoebots $B \in \mathcal{L}$} \Write$(B, \activity, \true)$. \label{alg:framework:active}
        \EndFor
        \ForAll {\Write$(B, x, x_{val}) \in W_i$} \Write$(B, x, x_{val})$ and \Write$(B, \awaken, \true)$. \label{alg:framework:writes}
        \EndFor
        \If {$M_i$ is \nil\ or \Expand} 
        \Unlock$(\mathcal{L})$. \label{alg:framework:nomove}
        \ElsIf {$M_i$ is \Contract\ (say, from nodes $u,v$ into node $u$)}
            \State \Unlock\ each amoebot in $\mathcal{L}$ that is adjacent to node $v$ but not to node $u$. \label{alg:framework:contractunlock1}
            \State Perform the \Contract\ operation. \label{alg:framework:contract}
            \State \Unlock\ each remaining amoebot in $\mathcal{L}$. \label{alg:framework:contractunlock2}
        \ElsIf {$M_i$ is \Push\ (say, $A$ is pushing $B$)}
            \State \Write$(\bot, \awaken, \true)$ and \Write$(B, \awaken, \true)$. \label{alg:framework:pushawaken}
            \State Perform the \Push\ operation. \label{alg:framework:push}
            \State $\Unlock(\mathcal{L})$. \label{alg:framework:pushunlock}
        \ElsIf {$M_i$ is \Pull\ (say, $A$ in nodes $u,v$ is pulling $B$ into node $v$)}
            \State \Write$(B, \awaken, \true)$. \label{alg:framework:pullawaken}
            \State \Unlock\ each amoebot in $\mathcal{L}$ (except $B$) that is adjacent to node $v$ but not to node $u$. \label{alg:framework:pullunlock1}
            \State Perform the \Pull\ operation. \label{alg:framework:pull}
            \State \Unlock\ each remaining amoebot in $\mathcal{L}$.'' \label{alg:framework:pullunlock2}
        \EndIf
    \EndIndent
    \State \Return $\mathcal{A'} = \{[\alpha' : g' \to ops']\}$.
\end{algorithmic}
\end{algorithm}

Algorithm $\alg'$ only contains one action $\alpha' : g' \to ops'$ where $g'$ requires that an amoebot's activity bit is set to \true\ (Step~\ref{alg:framework:guard}).
If $\alpha'$ is enabled for an amoebot $A$, $A$ first attempts to \Lock\ itself and its persistent neighbors (Step~\ref{alg:framework:lock}).
Given that it locks successfully, there are two cases.
If $A.\awaken = \true$, then some amoebot must have previously changed the neighborhood of $A$ without being able to update the corresponding neighbors' activity bits (Steps~\ref{alg:framework:expandawaken}, \ref{alg:framework:writes}, \ref{alg:framework:pushawaken}, or \ref{alg:framework:pullawaken}).
So $A$ updates the intended activity bits to \true, resets $A.\awaken$, releases its locks, and aborts (Steps~\ref{alg:framework:awaken}--\ref{alg:framework:awakenabort}).
Otherwise, $A$ obtains the necessary information to evaluate the guards of all actions in algorithm $\alg$ (Steps~\ref{alg:framework:guards}--\ref{alg:framework:evaluate}).
If no action of $\alg$ is enabled for $A$, $A$ sets $A.\activity$ to \false, releases its locks, and aborts; this disables $\alpha'$ for $A$ until some future change occurs in its neighborhood (Step~\ref{alg:framework:disabledabort}).
Otherwise, $A$ chooses any enabled action and executes its compute phase in private memory (Step~\ref{alg:framework:compute1}) to determine which \Write\ and movement operations, if any, it wants to perform (Step~\ref{alg:framework:compute2}).

Before enacting these operations (thereby updating the system's configuration) amoebot $A$ must be certain that no operation of $\alpha'$ will fail.
It has already passed its first point of failure: the \Lock\ operation in Step~\ref{alg:framework:lock}.
But the \Expand\ operation of $\alpha'$ may also fail if it conflicts with some other concurrent expansion (Step~\ref{alg:framework:expand}).
In either case, $A$ handles the failure, releases any locks it obtained (if any), and aborts (Steps~\ref{alg:framework:lockabort} and~\ref{alg:framework:expandabort}).
As we will show in Lemma~\ref{lem:failure}, these are the only two operations of $\alpha'$ that can fail.
Provided neither of these failures occur, $A$ can now perform operations that---without locks on its neighbors---could otherwise interfere with its neighbors' actions or be difficult to undo.
This begins with $A$ setting the activity bits of all its locked neighbors to \true\ since it is about to cause activity in its neighborhood (Step~\ref{alg:framework:active}).
It then enacts the \Write\ operations it decided on during its computation, writing updates to its own public memory and the public memories of its neighbors.
Since writes to its neighbors can change what amoebots in its 2-neighborhood see, it must also set the awaken bits of the neighbors it writes to to \true\ (Step~\ref{alg:framework:writes}).

The remainder of the framework handles movements and releases locks.
If $A$ did not want to move or it intended to \Expand---which, recall, it already did in Step~\ref{alg:framework:expand}---it can simply release all its locks (Step~\ref{alg:framework:nomove}).
If $A$ wants to contract, it must first release its locks on the neighbors it is contracting away from; it can then \Contract\ and, once contracted, release its remaining locks (Step~\ref{alg:framework:contractunlock1}--\ref{alg:framework:contractunlock2}).
If $A$ wants to perform a \Push\ handover, it does so and then releases all its locks (Steps~\ref{alg:framework:pushawaken}--\ref{alg:framework:pushunlock}).
Finally, pull handovers are handled similarly to contractions:
$A$ first releases its locks on the neighbors it is disconnecting from; it can then \Pull\ and, once contracted, release its remaining locks (Steps~\ref{alg:framework:pullawaken}--\ref{alg:framework:pullunlock2}).

\subsection{Analysis} \label{subsec:serialanalysis}

In this section, we prove the following result regarding the concurrency control framework.

\begin{theorem} \label{thm:concurrencycontrol}
    Let $\alg$ be any amoebot algorithm satisfying Conventions~\ref{conv:valid}--\ref{conv:expandrobust} and $\alg'$ be the amoebot algorithm produced from $\alg$ by the concurrency control framework (Algorithm~\ref{alg:framework}).
    Let $C_0$ be any (legal) initial system configuration for $\alg$ and let $C_0'$ be its extension for $\alg'$ that adds $A.\activity = \true$ and $A.\awaken = \false$ for all amoebots $A$.
    If every sequential execution of $\alg$ starting in $C_0$ terminates, every asynchronous execution of $\alg'$ starting in $C_0'$ also terminates.
    Moreover, if $C'$ is the final configuration of some asynchronous execution of $\alg'$ starting in $C_0'$, then there exists a sequential execution of $\alg$ starting in $C_0$ with final configuration $C$ that is identical to $C'$, modulo amoebots' activity and awaken bits.
\end{theorem}

Informally, this theorem shows that the concurrency control framework only permits asynchronous outcomes that could have occurred in the sequential setting, provided algorithm $\alg$ always terminates in the sequential setting and satisfies the three conventions.

This analysis has three parts.
First, we show that asynchronous executions of $\alg'$ can be ``sanitized'' of ``irrelevant'' events without changing the system's final configuration (Observation~\ref{obs:isolation}--Lemma~\ref{lem:sanitize}).
Second, we show that any sanitized asynchronous execution of $\alg'$ can be transformed into a sequential execution of $(\alg^E)'$, the framework-applied expansion-robust version of $\alg$, again without changing the final configuration (Lemmas~\ref{lem:dag}--\ref{lem:serializability}).
Finally, we leverage the expansion-robustness of $\alg$ (Convention~\ref{conv:expandrobust}) to show that any final configuration reached by a sequential execution of $(\alg^E)'$ is also reachable by a sequential execution of $\alg$ (Lemmas~\ref{lem:seqfinite}--\ref{lem:seqoutcome}).
Combining these results after showing asynchronous executions of $\alg'$ terminate (Lemma~\ref{lem:progress}) yields the theorem.

We first analyze algorithm $\alg'$ under asynchronous executions.
Recall from Section~\ref{subsec:modelalgs} that although each amoebot executes at most one action at a time and executes that action's operations sequentially to completion, asynchronous executions allow arbitrarily many amoebots to execute actions simultaneously.
An \textit{asynchronous schedule} is an assignment of precise timing by a wall-clock to every \textit{event} in an asynchronous execution; i.e., every message sending and receipt, variable update in public memory, movement start and end, and operation failure.
We emphasize that this wall-clock timing is only used for this analysis and is unavailable to the amoebots.
In keeping with Sections~\ref{subsec:modeloperations} and~\ref{subsec:modelalgs}, we make no assumptions on timing other than (\textit{i}) the delay between every message's sending and receipt as well as every movement's start and end must be positive, and (\textit{ii}) the time taken by every operation execution---and, by extension, every action execution---must be finite.
W.l.o.g., we may assume that any two events either occur simultaneously or are at least one time unit apart.
We also assume, w.l.o.g., that at any time before the asynchronous schedule has terminated, there is at least one active amoebot; note that any positive delay during which all amoebots are inactive could be truncated so that the last action execution's end time coincides with the next action execution's start time without changing the system configuration.
In addition to timing, an asynchronous schedule specifies the operations executed, all messages' contents, and variable values accessed and updated; i.e., all details except private computations.

To ensure that an asynchronous schedule captures the actual system behavior of an amoebot system under an asynchronous adversary, we introduce the concept of validity.
An asynchronous schedule is \textit{valid} if there is an asynchronous execution of (enabled) actions producing the same events (w.r.t.\ timing and content) as in the given asynchronous schedule.
In the remainder of our analysis, whenever we refer to an asynchronous schedule, we assume its timing is in the control of an adversary constrained only by validity.

We begin with an observation that follows immediately from $\alg'$ and Convention~\ref{conv:phases}.

\begin{observation} \label{obs:isolation}
    Whenever an amoebot $B$ is locked by an amoebot $A$ in an execution of $\alg'$, only $A$ can initiate a movement with or update the public memory of $B$.
\end{observation}

Next, we identify the points of failure in action $\alpha'$ of $\alg'$.

\begin{lemma} \label{lem:failure}
    In an execution of action $\alpha'$, only the \Lock\ and \Expand\ operations can fail.
\end{lemma}
\begin{proof}
    The first operation $A$ executes is the \Lock\ operation which may fail, as claimed.
    If it fails, $\alpha'$ catches the \texttt{lock-failure} and aborts, so no further operations are executed.
    Supposing the initial \Lock\ operation succeeds, let $\mathcal{L}_A$ denote the set of amoebots locked by $A$.
    Recall from Section~\ref{subsubsec:operationscomms} that a \Read\ or \Write\ operation by $A$ can only fail if $A$ is accessing a variable in the public memory of an amoebot $B \neq A$ that is disconnected from $A$ during that operation's execution.
    By Observation~\ref{obs:isolation}, no amoebot in $\mathcal{L}_A$ can change its shape outside of a movement operation initiated by $A$.
    By inspection of $\alpha'$, $A$ only executes \Read\ and \Write\ operations involving amoebots in $\mathcal{L}_A$ and does so before its movement operation; thus, they must succeed.
    Finally, \Unlock\ operations cannot fail because they only involve locked amoebots, and \Connected\ operations always succeed.

    It remains to consider the movement operations, all of which are determined by the execution of an enabled action $\alpha \in \alg$.
    An \Expand\ operation may fail, as claimed.
    A \Contract\ operation by $A$ only fails if $A.\shape \neq \textsc{expanded}$ or $A$ is already involved in a handover.
    By Convention~\ref{conv:valid}, this contraction would succeed if all other amoebots were inactive, so $A$ must have been expanded when it evaluated the guard of $\alpha$.
    Action $\alpha'$ does not contain any operations that change the shape of $A$ between the guard evaluations and this \Contract\ operation, and by Observation~\ref{obs:isolation} no other action executions could have involved $A$ in a handover and changed its shape since $A \in \mathcal{L}_A$.
    Thus, $A$ is expanded and cannot be involved in a handover when starting this contraction, so the \Contract\ operation succeeds.
    
    A \Pull\ operation by $A$ with a neighbor $B$ only fails if $A.\shape \neq \textsc{expanded}$, $B.\shape \neq \textsc{contracted}$, $A$ and $B$ are not connected, or $A$ or $B$ is already involved in another handover.
    By Convention~\ref{conv:valid}, this pull handover would succeed if all other amoebots were inactive, so $A$ must have been expanded, $B$ must have been contracted, and $A$ and $B$ must have been connected when $A$ evaluated the guard of $\alpha$.
    Once again, $\alpha'$ does not contain any operations that change the shape of $A$ between the guard evaluations and this \Pull\ operation, and by Observation~\ref{obs:isolation} neither $A$ nor $B$ can be involved in another handover or could have changed their shape since $A, B \in \mathcal{L}_A$.
    So $A$ is expanded, $B$ is contracted, $A$ and $B$ are neighbors, and neither $A$ nor $B$ are involved in another handover when starting this pull handover, so the \Pull\ operation succeeds.
    An analogous argument holds for \Push\ operations.
    
    Therefore, in an execution of $\alpha'$, only the \Lock\ or \Expand\ operations can fail.
\end{proof}

We say that an amoebot is \textit{$\alg$-enabled} if it has at least one enabled action $\alpha \in \alg$ and is \textit{$\alg$-disabled} otherwise.
An execution of $\alpha'$ by an amoebot $A$ is \textit{relevant} in an asynchronous schedule of $\alg'$ if all its operations succeed and either $A.\awaken = \true$ or $A$ is $\alg$-enabled in $\alpha'$.
The next two lemmas show that we can \textit{sanitize} any asynchronous schedule of $\alg'$ by removing all events associated with \textit{irrelevant} executions of $\alpha'$---i.e., those with at least one failed operation or that have $A.\awaken = \false$ and are $\alg$-disabled---without changing the system's final configuration.

\begin{lemma} \label{lem:sanitizelocks}
    Let $\sched$ be any asynchronous schedule of $\alg'$ and let $\sched_L$ be the asynchronous schedule obtained from $\sched$ by removing all events except those associated with \Lock\ and \Unlock\ operations and successful movements.
    Then $\sched_L$ is valid w.r.t.\ its \Lock\ operations.
    Moreover, for any set $S$ of successful \Lock\ operations in $\sched_L$, the asynchronous schedule $\sched_S$ obtained from $\sched_L$ by removing all events associated with \Lock\ operations not in $S$ is valid and all \Lock\ operations of $S$ are successful and lock the same amoebots they did in $\sched_L$.
\end{lemma}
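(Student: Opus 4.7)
The plan is to separate the reasoning about \Lock\ operations from the rest of the protocol by arguing that a \Lock\ operation's behavior depends only on (i) the physical connectivity of the amoebot system, (ii) the current \lock\ bit of each amoebot involved, and (iii) concurrent \Lock\ operations that may trigger contention resolution. Nothing about \Read, \Write, or failed movements enters this causal chain.

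For the first claim, I would inspect the \Lock\ and \Unlock\ pseudocode from Section~\ref{subsubsec:operationsconcurrency}: these operations only exchange \texttt{lock\_request}, \texttt{lock\_ack}, \texttt{lock\_nack}, \texttt{lock\_conflict}, \texttt{lock}, and \texttt{unlock} messages, and the only public-memory variable they read or write is the \lock\ bit of the amoebots involved. They are never consulted by \Read, \Write, \Contract, \Expand, \Pull, or \Push. Consequently, removing from $\sched$ every event that is not associated with a \Lock\ operation, an \Unlock\ operation, or a successful movement leaves every message delivery and state transition that each \Lock\ operation causally depends on intact, while preserving the full connectivity history (since failed expansions do not alter connectivity in the first place). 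This directly yields the validity of $\sched_L$ with respect to its \Lock\ operations, under the existing assumption on contention resolution as formalized in Appendix~\ref{app:lockcontend}.

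For the second claim, I would proceed by induction on the number of \Lock\ operations removed when passing from $\sched_L$ to $\sched_S$, removing one non-$S$ \Lock\ at a time, together with its associated message events and, for successful removed locks, the \Unlock\ events that release the amoebots the removed lock had held. The key monotonicity property to establish at each inductive step is that removing a \Lock\ operation $L$ (with its matched \Unlock s) alters the trajectory of each amoebot $X$'s \lock\ bit only by clearing the interval during which $L$ had forced $X.\lock = \true$; outside those intervals, the trajectory is identical. Because a \Lock\ operation $L' \in S$ succeeded in $\sched_L$, each amoebot it targeted must have had $\lock = \false$ at the moment $L'$'s \texttt{lock\_request} was processed and must not have lost a conflicting concurrent contention; both preconditions remain satisfied after the removal, since \lock\ bits can only stay \false\ longer and the pool of concurrent conflicts can only shrink. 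The resulting outcome for $L'$ is therefore unchanged: it still locks exactly the same set of amoebots.

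The main obstacle will be formalizing the two monotonicity properties cleanly. For \lock\ bits, I must verify that if a successful removed $L$ had locked an amoebot $X$ that was subsequently locked by $L' \in S$, then $L'$'s \texttt{lock\_request} arrived at $X$ strictly after $L$'s \texttt{unlock} had reset $X.\lock$ to \false\ in $\sched_L$ — forced by the fact that $L'$ succeeded — so in $\sched_S$, where $X.\lock$ never becomes \true, the precondition for $L'$ still holds at the same instant. For contention resolution, if $L$ was concurrent with and conflicted with $L'$, then $L$ must have lost the contention in $\sched_L$; removing $L$ eliminates a losing participant, so the contention-resolution mechanism still selects $L'$ as a valid (and now perhaps uncontested) winner within finite time. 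Combining these observations across the induction preserves both validity of $\sched_S$ and the identical successful outcomes for every \Lock\ in $S$, completing the proof.
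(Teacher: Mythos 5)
Your proposal is correct and follows essentially the same route as the paper: validity of $\sched_L$ rests on the fact that \Lock\ outcomes depend only on connectivity (unchanged, since failed expansions do not alter it) and on other \Lock\ operations, and the second claim rests on the same two monotonicity observations the paper uses --- removing \Lock\ operations can only keep $\lock$ bits \false\ longer and can only shrink the set of contending conflicts, while the set of amoebots a \Lock\ acquires depends only on connectivity, which is preserved. Your one-at-a-time inductive removal is merely a presentational scaffold around the argument the paper carries out in a single step.
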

\begin{proof}
    First consider the asynchronous schedule $\sched_L$ containing the events of $\sched$ associated with all \Lock, \Unlock, and successful movement operations.
    The only movements included in $\sched$ that are not present in $\sched_L$ are failed expansions. 
    However, a failed expansion does not introduce any new connections or disconnections.
    Thus, since $\sched$ is valid w.r.t.\ its \Lock\ operations and the validity of a \Lock\ operation depends only on other \Lock\ operations and amoebots' connections, $\sched_L$ must also be valid w.r.t.\ its \Lock\ operations.
    
    Consider any set $S$ of successful \Lock\ operations in $\sched_L$ and let $\sched_S$ be the asynchronous schedule obtained from $\sched_L$ by removing all events associated with \Lock\ operations not in $S$.
    Since any \Lock\ operation in $S$ was successful in $\sched_L$, then by the \Lock\ operation's mutual exclusion property all amoebots $A$ it attempted to lock must have had $A.\lock = \bot$.
    Removing other \Lock\ operations cannot cause amoebots' $\lock$ variables to be $\neq \bot$.
    Thus, \Lock\ operations in $S$ remain successful in $\sched_S$.
    Any \Lock\ operation in $S$ must also lock the same amoebots in $\sched_S$ as it did in $\sched_L$ since this depends only on connectivity---not on other \Lock\ operations---and the movement operations that control connectivity are identical in $\sched_S$ and $\sched_L$.
    Therefore, since $\sched_L$ is valid w.r.t.\ its \Lock\ operations, so must $\sched_S$.
\end{proof}

\begin{lemma} \label{lem:sanitize}
    Let $\sched$ be any asynchronous schedule of $\alg'$ and let $\sched^*$ be its sanitized version keeping only the events associated with relevant executions of $\alpha'$ in $\sched$.
    Then $\sched^*$ is a valid asynchronous schedule that changes the system configuration exactly as $\sched$ does except w.r.t.\ amoebots' activity bits, which have the property that the set of amoebots $A$ with $A.\activity = \true$ in $\sched^*$ is always a superset of those in $\sched$.
\end{lemma}
\begin{proof}
    By Lemma~\ref{lem:failure}, only \Lock\ or \Expand\ operations can fail in an execution of $\alpha'$, implying three types of irrelevant executions of $\alpha'$ by an amoebot $A$: those whose \Lock\ operation fails, those whose \Lock\ operation succeeds but that have $A.\awaken = \false$ and are $\alg$-disabled, and those whose \Lock\ operation succeeds but whose \Expand\ operation fails.
    By Lemma~\ref{lem:sanitizelocks}, when removing events associated with irrelevant executions of $\alpha'$ from $\sched$ to obtain $\sched^*$, all successful \Lock\ operations in $\sched$ remain valid and successful in $\sched^*$ and lock the same amoebots as they did in $\sched$.
    Thus, the only change an irrelevant execution of $\alpha'$ could have made is setting an $\alg$-disabled amoebot's activity bit to \false.
    This implies that the set of amoebots $A$ with $A.\activity = \true$ in $\sched^*$ is always a superset of those in $\sched$ and thus any relevant action execution of $\alpha'$ in $\sched$ remains enabled in $\sched^*$.
    
    Since relevant executions of $\alpha'$ only issue \Read\ and \Write\ operations to the executing amoebot or its locked neighbors, the success and identical outcome of all \Lock\ operations in $\sched^*$ ensures that all \Read\ and \Write\ operations in $\sched^*$ also succeed.
    Moreover, because irrelevant executions of $\alpha'$ never perform \Write\ operations, all \Read\ and \Write\ operations in $\sched^*$ must access or update the same variable values as they did in $\sched$ since the event timing is preserved.
    \Connected\ operations in $\sched^*$ are also guaranteed to return the same results as in $\sched$ since failed \Expand\ operations discarded from $\sched$ do not change amoebot connectivity.
    
    It remains to show that all movement operations in $\sched^*$ are successful.
    Any \Contract, \Pull, or \Push\ operations in $\sched^*$ must have succeeded in $\sched$, implying that they were unaffected by any failed \Expand\ operations in $\sched$ that are now removed.
    So the only movement operations in $\sched^*$ that could have interacted with failed \Expand\ operations in $\sched$ are concurrent \Expand\ operations that contended with failed \Expand\ operations for the same nodes.
    But the fact that these \Expand\ operations are in $\sched^*$ implies that they succeeded in $\sched$, and thus must also succeed in $\sched^*$ when all contending expansions are removed.
\end{proof}

It thus suffices to study algorithm $\alg'$ under sanitized asynchronous schedules.
Our next goal is to map any sanitized asynchronous schedule $\sched$ of $\alg'$ to a sequential schedule that produces the same final system configuration as $\sched$.
Any asynchronous schedule already totally orders the updates to any single variable in an amoebot's public memory and the occupancy of any single node; here, we focus on ordering entire action executions.
Denote the (relevant) executions of $\alpha'$ in $\sched$ as pairs $(A_i, \alpha_i')$, where amoebot $A_i$ executes $\alpha_i'$.
Construct a directed graph $D$ with nodes $\{(A_1, \alpha_1'), \ldots, (A_k, \alpha_k')\}$ representing all executions of $\alpha'$ in $\sched$ and directed edges $(A_i, \alpha_i') \to (A_j, \alpha_j')$ for $i \neq j$ if and only if one of the following hold:
\begin{enumerate}
    \item Both $(A_i, \alpha_i')$ and $(A_j, \alpha_j')$ lock some amoebot $B$ in their \Lock\ operations and $(A_j, \alpha_j')$ is the first execution to lock $B$ after $B$ is unlocked by $(A_i, \alpha_i')$.  \label{dag:locksame}

    \item The nodes occupied by $A_i$ at the start of $\alpha_i'$ and by $A_j$ at the start of $\alpha_j'$ are adjacent and $(A_i, \alpha_i')$ is the last execution of $A_i$ to execute an action of $\alg$ before the \Lock\ operation of $(A_j, \alpha_j')$ completes.  \label{dag:overlap}
    
    \item $(A_j, \alpha_j')$ is the first execution to \Expand\ into some node $v$ after $v$ is vacated by a \Contract\ operation in $(A_i, \alpha_i')$.  \label{dag:move}
\end{enumerate}

\begin{lemma} \label{lem:dag}
    The directed graph $D$ corresponding to the executions of $\alpha'$ in a sanitized asynchronous schedule of $\alg'$ is a directed, acyclic graph (DAG).
\end{lemma}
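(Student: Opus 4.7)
The plan is to argue by contradiction: assume $D$ contains a directed cycle $(A_1,\alpha_1') \to (A_2,\alpha_2') \to \cdots \to (A_k,\alpha_k') \to (A_1,\alpha_1')$ and derive a contradiction via a monotone temporal quantity. For each successful action execution $(A_i,\alpha_i')$ in the sanitized schedule, let $\tau_i$ denote the wall-clock time at which its initial \Lock\ operation completes. I would show that every edge $(A_i,\alpha_i') \to (A_j,\alpha_j')$ in $D$ satisfies $\tau_i < \tau_j$; telescoping this strict inequality around the cycle forces $\tau_1 < \tau_1$, the desired contradiction.

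The argument proceeds by case analysis on edge types. For type~1, the key observation is that since both executions successfully lock $B$ and $A_j$'s acquisition immediately follows $A_i$'s release, $B$ must receive $A_i$'s \texttt{unlock} message strictly before it emits the \texttt{lock\_ack} that enables $A_j$'s Lock to succeed --- otherwise $B$ would have replied with \texttt{lock\_nack} and $A_j$ would have failed. Since $A_i$ sends \texttt{unlock}$(B)$ strictly after $\tau_i$ and $A_j$'s locking of $B$ completes no later than $\tau_j$, we obtain $\tau_i < \tau_j$. For type~2, Convention~\ref{conv:phases} places the movement in $(A_i,\alpha_i')$ strictly after $\tau_i$, so any time-overlap of that movement with $A_j$'s Lock forces the movement's start time to lie in $(\tau_i, \tau_j]$, again giving $\tau_i < \tau_j$.

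Type~3 is the delicate case and, I expect, will be the main obstacle. Here $A_i$'s Contract vacates $v$, $A_j$ is adjacent to $v$, and $A_j$'s Expand is the first to refill $v$. Before the Contract, $A_i$ occupies $v$ and is a neighbor of $A_j$ via $v$, so $A_j$'s Lock must lock $A_i$ among its neighbors. The engine of the argument is the invariant that $A_i.\lock = \true$ throughout $A_i$'s execution --- $A_i$ only releases its own lock at the very end, in step~\ref{alg:protocol:contractunlock2} --- so any \texttt{lock\_request} from $A_j$ reaching $A_i$ while $A_i$ still occupies $v$ is answered with \texttt{lock\_nack}, making $A_j$'s Lock fail (contradicting its presence in the sanitized schedule). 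Hence either $A_j$'s Lock starts only after $A_i$'s Contract completes, so $\tau_j$ exceeds the completion time of $A_i$'s Contract, which itself strictly exceeds $\tau_i$; or $A_j$'s request reaches $A_i$ only after $A_i$ has released its own lock, which requires at least one message delay beyond $\tau_i$ and again forces $\tau_j > \tau_i$.

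The main obstacle is the careful bookkeeping in type~3: one must rule out scenarios in which $A_j$'s Lock appears to ``slip past'' $A_i$'s ongoing execution, either because $A_i$ has departed $v$ and is no longer a neighbor, or because message delays permit subtle interleavings of lock requests with the Contract itself. The critical ingredient is the self-locking invariant above, coupled with the fact that step~\ref{alg:protocol:contractunlock2} strictly follows $A_i$'s Contract, which in turn strictly follows $\tau_i$; a precise accounting of the receipt order of \texttt{lock\_request}, \texttt{lock\_ack}, and \texttt{unlock} messages at $A_i$ then yields $\tau_j > \tau_i$ in every subcase.
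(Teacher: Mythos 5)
Your proposal is correct and follows essentially the same route as the paper: order executions by the completion time of their initial \Lock\ operations, show each of the three edge types forces a strict increase in that time (with type~3 resolved via the fact that $A_i$ remains locked and occupies $v$ until its \Contract, so $A_j$'s \Lock\ cannot succeed earlier), and conclude that a cycle would be contradictory. No meaningful differences to report.
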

\begin{proof}
    We will show that for any edge $(A_i, \alpha_i') \to (A_j, \alpha_j')$ in $D$, $(A_i, \alpha_i')$ completes its \Lock\ operation before $(A_j, \alpha_j')$ does.
    This immediately implies that $D$ is acyclic; otherwise, the \Lock\ operations of any two executions in a cycle of $D$ must complete both before and after each other, a contradiction.
    
    First suppose that $(A_i, \alpha_i') \to (A_j, \alpha_j')$ is an edge in $D$ by Rule~\ref{dag:locksame}, i.e., both executions lock an amoebot $B$ and $(A_j, \alpha_j')$ is the first execution to lock $B$ after $B$ is unlocked by $(A_i, \alpha_i')$.
    Clearly, $A_j$ can only lock $B$ after $A_i$ has unlocked $B$ and $A_i$ can only unlock $B$ after it locks $B$ in its own \Lock\ operation.
    Since these operations all involve message transfers requiring positive time, $(A_i, \alpha_i')$ must complete its \Lock\ operation before $(A_j, \alpha_j')$ does.
    
    Next suppose that $(A_i, \alpha_i') \to (A_j, \alpha_j')$ is an edge in $D$ by Rule~\ref{dag:overlap}, i.e., the nodes occupied by $A_i$ and $A_j$ at the start of their respective actions are adjacent and $(A_i, \alpha_i')$ is the last execution of $A_i$ to execute an action of $\alg$ before the \Lock\ operation of $(A_j, \alpha_j')$ completes.
    Any execution of an action of $\alg$ in $(A_i, \alpha_i')$ must start after its \Lock\ operation completes; thus, $(A_i, \alpha_i')$ must complete its \Lock\ operation before $(A_j, \alpha_j')$ does.
    
    Finally, suppose that $(A_i, \alpha_i') \to (A_j, \alpha_j')$ is an edge in $D$ by Rule~\ref{dag:move}, i.e., $(A_j, \alpha_j')$ is the first execution to \Expand\ into some node $v$ after $v$ is vacated by a \Contract\ operation in $(A_i, \alpha_i')$.
    It suffices to consider the case where the \Lock\ operation of $(A_i, \alpha_i')$ does not lock $A_j$; otherwise, there exists a directed path of Rule~\ref{dag:locksame} edges from $(A_i, \alpha_i')$ to $(A_j, \alpha_j')$ in $D$ and the first case proves the claim.
    For $(A_i, \alpha_i')$ to not lock $A_j$, $A_j$ cannot be a neighbor of $A_i$ at the time the \Lock\ operation of $(A_i, \alpha_i')$ starts.
    We know that the \Lock\ operation of $(A_i, \alpha_i')$ is successful, so $A_i$ is locked and occupies $v$ until the start of its \Contract\ operation out of $v$.
    But $A_j$ must occupy a node adjacent to $v$ at the start of $(A_j, \alpha_j')$ and must succeed in its own \Lock\ operation in order to \Expand\ into $v$.
    Thus, the \Lock\ operation of $(A_j, \alpha_j')$ cannot complete until after $A_i$ has started contracting out of $v$, which occurs strictly after the completion of the \Lock\ operation of $(A_i, \alpha_i')$.
\end{proof}

The following lemma compares the outcome of any sanitized asynchronous schedule of $\alg'$ to a schedule where an execution of $\alpha'$ corresponding to a sink in the DAG $D$ is removed.
A property that will become important shortly is whether any \Expand\ operation in the removed execution of $\alpha'$ could still be executed if it were placed in a different schedule.
Formally, we say a system configuration $C$ is \textit{expansion-compatible} with an execution $(A_i, \alpha_i')$ if either $(A_i, \alpha_i')$ does not perform an \Expand\ operation or the \Expand\ operation executed by $A_i$ in $\alpha_i'$ would succeed in $C$.

\begin{lemma}  \label{lem:serialremove}
    Consider any sanitized asynchronous schedule $\sched$ of $\alg'$ and let $(A_i, \alpha_i')$ be any sink in the corresponding DAG $D$.
    Let $\sched_i^-$ be the asynchronous schedule obtained by removing all events associated with $(A_i, \alpha_i')$ from $\sched$.
    Then $\sched_i^-$ is valid and the final configuration reached by $\sched^-_i$ is expansion-compatible with $(A_i, \alpha_i')$ and identical to that of $\sched$ except for the amoebots locked by $(A_i, \alpha_i')$ in $\sched$, which appear exactly as they did just after the \Lock\ operation of $(A_i, \alpha_i')$ completed in $\sched$.
\end{lemma}
\begin{proof}
    If $(A_i, \alpha_i')$ is the only execution in $\sched$, then its removal yields an empty schedule $\sched_i^-$ that trivially satisfies the lemma.
    So consider any action execution $(A_j, \alpha_j')$ in $\sched$ with $j \neq i$.
    We first show that $(A_j, \alpha_j')$ must remain enabled in $\sched_i^-$; i.e., $A_j.\activity = \true$ at the time of this execution.
    This must have been the case in $\sched$, so the only way for $(A_j, \alpha_j')$ to not be enabled in $\sched_i^-$ is if $(A_i, \alpha_i')$ was responsible for enabling it in $\sched$.
    But $(A_i, \alpha_i')$ could only have updated $A_j.\activity$ to \true\ if $(A_i, \alpha_i')$ locked $A_j$, implying that $(A_j, \alpha_j')$ could not have started until after $(A_i, \alpha_i')$ unlocked $A_j$.
    Thus, there must exist a directed path of Rule~\ref{dag:locksame} edges in $D$ from $(A_i, \alpha_i')$ to $(A_j, \alpha_j')$, contradicting the assumption that $(A_i, \alpha_i')$ is a sink.

    We next show that $(A_j, \alpha_j')$ remains valid in $\sched_i^-$.
    Let $\mathcal{L}_j(\sched)$ (resp., $\mathcal{L}_j(\sched_i^-)$) denote the set of amoebots locked by $(A_j, \alpha_j')$ in $\sched$ (resp., in $\sched_i^-$); we begin by showing $\mathcal{L}_j(\sched) = \mathcal{L}_j(\sched_i^-)$.
    First suppose that there is a directed path in $D$ from $(A_j, \alpha_j')$ to $(A_i, \alpha_i')$.
    By the proof of Lemma~\ref{lem:dag}, $(A_j, \alpha_j')$ must complete its \Lock\ operation before $(A_i, \alpha_i')$ does, implying that $(A_j, \alpha_j')$ completes its \Lock\ operation before $(A_i, \alpha_i')$ completes any operation.
    Since the timing of $(A_j, \alpha_j')$ in $\sched$ is preserved in $\sched^-_i$, it follows that $\mathcal{L}_j(\sched) = \mathcal{L}_j(\sched_i^-)$.
    Now suppose that there is no directed path in $D$ from $(A_j, \alpha_j')$ to $(A_i, \alpha_i')$.
    Then the amoebots locked by $(A_j, \alpha_j')$ and $(A_i, \alpha_i')$ in $\sched$ must be disjoint by Rule~\ref{dag:locksame}, so certainly $(A_j, \alpha_j')$ can lock any amoebot in $\sched_i^-$ that it did in $\sched$; i.e., $\mathcal{L}_j(\sched) \subseteq \mathcal{L}_j(\sched_i^-)$.
    But suppose to the contrary that $(A_j, \alpha_j')$ is able to lock some additional amoebot $B$ in $\sched_i^-$ that it did not lock in $\sched$.
    This is only possible if $(A_i, \alpha_i')$ caused $B$ to move out of the neighborhood of $A_j$ in $\sched$, either directly via a handover or indirectly by enabling some action of $B$ involving a movement.
    In either case, $A_i$ must have locked $B$ before $A_j$ did, implying the existence of a directed path of Rule~\ref{dag:locksame} edges in $D$ from $(A_i, \alpha_i')$ to $(A_j, \alpha_j')$.
    This once again contradicts the assumption that $(A_i, \alpha_i')$ is a sink.
    So in any case, $(A_j, \alpha_j')$ locks the same set of amoebots in $\sched$ and $\sched_i^-$.
    
    After completing its \Lock\ operation, $(A_j, \alpha_j')$ does one of two things.
    If $A_j.\awaken = \true$, then it updates the activity bits of all the amoebots it locked to \true, updates its own awaken bit to \false, releases its locks, and aborts.
    Since $\mathcal{L}_j(\sched) = \mathcal{L}_j(\sched_i^-)$ and timing is preserved, these updates occur identically in $\sched$ and $\sched_i^-$.
    
    Otherwise, if $A_j.\awaken = \false$, $A_j$ evaluates the guards of actions in $\alg$; recall that these depend only on the positions, shapes, and public memories of the locked amoebots.
    Suppose to the contrary that there is an amoebot $B$ locked by $A_j$ whose position, shape, or public memory is different in $\sched_i^-$ than it was in $\sched$.
    Then $(A_i, \alpha_i')$ must have locked $B$ to perform the corresponding update in $\sched$, implying that there is a directed path of Rule~\ref{dag:locksame} edges from $(A_i, \alpha_i')$ to $(A_j, \alpha_j')$ in $D$, contradicting the assumption that $(A_i, \alpha_i')$ is a sink.
    So the outcomes of the guard evaluations must be identical in $\sched$ and $\sched_i^-$.
    
    Since $(A_j, \alpha_j')$ is in the sanitized schedule $\sched$, it must be relevant, and thus $A_j$ must be $\alg$-enabled in $\alpha_j'$.
    Whichever enabled action of $\alg$ is executed, any \Write, \Contract, \Pull, or \Push\ operations involved must occur identically in $\sched$ and $\sched_i^-$ since the locked amoebots and their positions, shapes, and public memories are the same in both schedules.
    The only remaining possibility is that $(A_j, \alpha_j')$ causes $A_j$ to \Expand\ into an adjacent node $v$ in $\sched$ that is occupied in $\sched_i^-$, causing the \Expand\ operation to fail in $\sched_i^-$.
    This implies that $(A_i, \alpha_i')$ causes $A_i$ to \Contract\ out of $v$.
    But then $(A_i, \alpha_i') \to (A_j, \alpha_j')$ must be a Rule~\ref{dag:move} edge in $D$, contradicting the assumption that $(A_i, \alpha_i')$ is a sink.
    Thus, we conclude that $\sched_i^-$ is valid and all action executions $(A_j, \alpha_j')$ for which $j \neq i$ execute identically in $\sched$ and $\sched_i^-$.
    
    Next, we show that the final configuration $C_i^-$ reached by $\sched_i^-$ is expansion-compatible with $(A_i, \alpha_i')$.
    Suppose to the contrary that $(A_i, \alpha_i')$ performs a successful \Expand\ operation into a node $v$ in $\sched$ but the same expansion would fail in $C_i^-$.
    This is only possible if $v$ is occupied by another amoebot in $C_i^-$.
    Since all executions other than $(A_i, \alpha_i')$ are valid and execute identically in $\sched$ and $\sched_i^-$, another amoebot can only have come to occupy $v$ in $C_i^-$ if $A_i$ vacated $v$ in some later execution in $\sched$.
    But $A_i$ can only change its shape if it is locked, contradicting the assumption that $(A_i, \alpha_i')$ is a sink in $D$ by Rule~\ref{dag:locksame}.
    So $v$ must be unoccupied in $C_i^-$ and thus $C_i^-$ is expansion-compatible with $(A_i, \alpha_i')$.
    
    It remains to show that the amoebots in $\mathcal{L}_i(\sched)$---i.e., those locked by $(A_i, \alpha_i')$ in $\sched$---appear in $\sched_i^-$ exactly as they did after the \Lock\ operation of $(A_i, \alpha_i')$ in $\sched$.
    But this follows immediately from the assumption that $(A_i, \alpha_i')$ is a sink: for an execution $(A_j, \alpha_j')$ with $j \neq i$ to change the position, shape, or public memory of an amoebot $B \in \mathcal{L}_i(\sched)$, it would first have to lock $B$, implying that $(A_i, \alpha_i') \to (A_j, \alpha_j')$ is a directed edge in $D$.
\end{proof}

Lemma~\ref{lem:serialremove} allows us to prove the following central result.
Here, we consider the expansion-robust variant $\alg^E$ of $\alg$ and the algorithm $(\alg^E)'$ produced from $\alg^E$ by the concurrency control framework.
We denote the sole action of $(\alg^E)'$ as $(\alpha^E)'$.
Given the initial configuration $C_0$ of $\alg$, configuration $C_0^E$ is its extension with expand flags $A.\xflag_p = \false$ for all amoebots $A$ and ports $p$; the initial configuration $(C_0^E)'$ of $(\alg^E)'$ further extends $C_0^E$ by adding $A.\activity = \true$ and $A.\awaken = \false$ for all amoebots $A$.

\begin{lemma} \label{lem:serializability}
    For any finite sanitized asynchronous schedule $\sched$ of $\alg'$ starting in $C_0'$, there exists a sequential schedule of $(\alg^E)'$ starting in $(C_0^E)'$ that reaches a final configuration that is identical to that of $\sched$, modulo amoebots' expand flags, with the exception that the set of amoebots $A$ with $A.\activity = \true$ or $A.\awaken = \true$ is a superset of those in the final configuration reached by $\sched$.
\end{lemma}
\begin{proof}
    Consider any finite sanitized asynchronous schedule $\sched$ of $\alg'$ starting in $C_0'$ and let $D$ be its corresponding DAG (Lemma~\ref{lem:dag}).
    We argue by induction on $k$, the number of executions of $\alpha'$ in $\sched$, that any sequential ordering of the executions of $\alpha'$ in $\sched$ consistent with a topological ordering of $D$ can be extended to a sequential schedule $\bar{\sched}$ of $(\alg^E)'$ starting in $(C_0^E)'$ satisfying the lemma.
    Specifically, we construct $\bar{\sched}$ by replacing executions of $\alpha'$ in $\sched$ that execute some action $\alpha_i \in \alg$ with corresponding executions of $(\alpha^E)'$ that execute action $\alpha_i^E \in \alg^E$.
    We then suitably pad $\bar{\sched}$ with executions of $(\alpha^E)'$ that execute action $\alpha_0^E$ (as defined in Algorithm~\ref{alg:expandrobust}) so that the set of amoebots $A$ with $A.\activity = \true$ or $A.\awaken = \true$ in the final configuration reached by $\bar{\sched}$ is a superset of those in the final configuration reached by $\sched$.
    
    The lemma trivially holds for $k = 0$, so suppose the lemma holds for any sanitized asynchronous schedule of $\alg'$ with $k \geq 0$ executions of $\alpha'$.
    Let $\sched$ be any sanitized asynchronous schedule of $\alg'$ starting in $C_0'$ consisting of $k + 1$ executions of $\alpha'$, let $C$ be the final configuration it reaches, and let $(A_i, \alpha_i')$ be any sink in the corresponding DAG $D$.
    By Lemma~\ref{lem:serialremove}, the sanitized asynchronous schedule $\sched_i^-$ obtained by removing all events associated with $(A_i, \alpha_i')$ from $\sched$ is valid and reaches a final configuration $C_i^-$ that is expansion-compatible with $(A_i, \alpha_i')$ and is identical to $C$ except for the amoebots locked by $(A_i, \alpha_i')$ in $\sched$, which appear exactly as they did just after the \Lock\ operation of $(A_i, \alpha_i')$ completed in $\sched$.
    By the induction hypothesis, there exists a sequential schedule $\bar{\sched}_i$ of $(\alg^E)'$ starting in $(C_0^E)'$ that reaches a final configuration $\bar{C}_i$ identical to $C_i^-$ (modulo amoebots' expand flags) with the exception that the set of amoebots $A$ with $A.\activity = \true$ or $A.\awaken = \true$ in $\bar{C}_i$ is a superset of those in $C_i^-$.
    This implies that $(A_i, (\alpha^E)_i')$ is enabled in $\bar{C}_i$ since $(A_i, \alpha_i')$ was enabled in $C_i^-$ and they have the same guard: $A_i.\activity = \true$.
    
    The amoebots $\mathcal{L}_i(\sched)$ locked by $(A_i, \alpha_i')$ in $\sched$ must still be neighbors of $A_i$ in $\bar{C}_i$ (i.e., at the end of $\bar{\sched}_i$) by Lemma~\ref{lem:serialremove} and the induction hypothesis, but $A_i$ may also have additional neighbors in $\bar{C}_i$ that were not originally present at the time of its \Lock\ operation in $\sched$.
    Thus, we have $\mathcal{L}_i(\sched) \subseteq \mathcal{L}_i(\bar{\sched}_i)$.
    There are three cases for the behavior of $(A_i, \alpha_i')$; in each, we construct a sequential schedule $\bar{\sched}$ by combining $\bar{\sched}_i$, the execution $(A_i, (\alpha^E)_i')$, and possibly additional executions of $(\alpha^E)'$ involving $\alpha_0^E$ whose final configuration satisfies the lemma.
    
    \smallskip
    
    \noindent \textit{Case 1.} $A_i.\awaken = \true$ both at the start of $(A_i, \alpha_i')$ in $\sched$ and at the end of $\bar{\sched}_i$.
    Let $\bar{\sched}$ be the sequential schedule obtained by appending $(A_i, (\alpha^E)_i')$ to the end of $\bar{\sched}_i$.
    Then in $\bar{\sched}$, $(A_i, (\alpha^E)_i')$ updates $B.\activity$ to $\true$ for all amoebots $B$ that it locks, updates $A_i.\awaken$ to $\false$, releases its locks, and aborts---just as $(A_i, \alpha_i')$ does in $\sched$.
    Since $\mathcal{L}_i(\sched) \subseteq \mathcal{L}_i(\bar{\sched}_i)$, the only difference between the final configurations of $\sched$ and $\bar{\sched}$ (other than amoebots' expand flags) is that the latter may have additional amoebots with their activity or awaken bits set to \true, so the lemma holds.
    
    \smallskip
    
    \noindent \textit{Case 2.} $A_i.\awaken = \false$ at the start of $(A_i, \alpha_i')$ in $\sched$ but $A_i.\awaken = \true$ at the end of $\bar{\sched}_i$.
    Let $\bar{\sched}$ be the sequential schedule obtained by activating $A_i$ twice at the end of $\bar{\sched}_i$.
    The first activation has the same effect as Case 1, potentially yielding more amoebots with their activity or awaken bits set to \true.
    It also resets the awaken bit of $A_i$, yielding $A_i.\awaken = \false$ in both $\sched$ and $\bar{\sched}_i + (A_i, (\alpha^E)_i')$.
    We address this in the following case.
    
    \smallskip
    
    \noindent \textit{Case 3.} $A_i.\awaken = \false$ both at the start of $(A_i, \alpha_i')$ in $\sched$ and at the end of $\bar{\sched}_i$.
    Since $(A_i, \alpha_i')$ is an execution of $\sched$, a sanitized schedule, we know that $(A_i, \alpha_i')$ is relevant and thus must have an action $\alpha_j \in \alg$ in $\sched$ that is enabled by the amoebots $\mathcal{L}_i(\sched)$ locked in $\alpha_i'$.
    Intuitively, we would like to construct the sequential schedule $\bar{\sched}$ by appending $(A_i, (\alpha^E)_i')$ to the end of $\bar{\sched}_i$, where the execution of $(\alpha^E)_i'$ involves the corresponding action $\alpha_j^E \in \alg^E$.
    However, because $\bar{\sched}_i$ involves expand flags and $\mathcal{L}_i(\sched) \subseteq \mathcal{L}_i(\bar{\sched}_i)$, it is not immediately obvious that $\alpha_j^E$ is enabled at the end of $\bar{\sched}_i$ and can be executed to satisfy the lemma.
    
    To this end, we first show that any amoebot $A_\ell \in \mathcal{L}_i(\bar{\sched}_i) \setminus \mathcal{L}_i(\sched)$---i.e., those locked by $(A_i, (\alpha^E)_i')$ at the end of $\bar{\sched}_i$ but not by $(A_i, \alpha_i')$ in $\sched$---would be ignored in any guard evaluation and execution of $\alpha_j^E$ at the end of $\bar{\sched}_i$ due to expand flags.
    Such an $A_\ell$ can only exist if there was some time $t$ during the \Lock\ operation of $(A_i, \alpha_i')$ in $\sched$ at which $A_i$ and $A_\ell$ were not connected.
    But $A_\ell \in \mathcal{L}_i(\bar{\sched}_i)$ implies that $A_\ell$ later became a neighbor of $A_i$, so consider the first event in $\sched$ after time $t$ at which $A_i$ and $A_\ell$ are connected.
    This event must correspond to $A_\ell$ completing an expansion or handover and connecting to $A_i$, so in the corresponding action execution in $\bar{\sched}_i$, $A_\ell$ must have updated the expand flag of any new port now connected to $A_i$ to \true\ (see Lines~\ref{alg:expandrobust:handover1},~\ref{alg:expandrobust:handover2}, and~\ref{alg:expandrobust:expand} of Algorithm~\ref{alg:expandrobust}).
    Any such expand flag can only be reset to \false\ in $\bar{\sched}_i$ if $A_\ell$ or $A_i$ execute another action in $\sched$ after their connection event (see Line~\ref{alg:expandrobust:resetown} or~\ref{alg:expandrobust:resetnbr} of Algorithm~\ref{alg:expandrobust}, respectively).
    But $A_\ell$ (resp., $A_i$) cannot execute another action in $\sched$ because Rule~\ref{dag:overlap} (resp., Rule~\ref{dag:locksame}) of the DAG $D$ would imply $(A_i, \alpha_i')$ is not a sink in $D$, a contradiction.
    Thus, any port $p$ of any amoebot $A_\ell \in \mathcal{L}_i(\bar{\sched}_i) \setminus \mathcal{L}_i(\sched)$ connected to $A_i$ must have $A_\ell.\xflag_p = \true$ at the end of $\bar{\sched}_i$.
    
    We have established that if $\alpha_j \in \alg$ was enabled in $\sched$ for execution $(A_i, \alpha_i')$, then any additional neighbors $\mathcal{L}_i(\bar{\sched}_i) \setminus \mathcal{L}_i(\sched)$ locked by $A_i$ at the end of $\bar{\sched}_i$ cannot cause $\alpha_j^E \in \alg^E$ to be disabled because their expand flags are \true\ and they are thus ignored.
    However, we must show the opposite for any original neighbor $A_\ell \in \mathcal{L}_i(\sched)$ at the end of $\bar{\sched}_i$, i.e., that its expand flags do not cause $A_i$ to ignore it and thus possibly disable $\alpha_j^E$.
    This situation can be easily prevented using the $\alpha_0^E \in \alg^E$ action as follows.
    For any port $p$ of any amoebot $A_\ell \in \mathcal{L}_i(\sched)$ connected to $A_i$ with $A_\ell.\xflag_p = \true$, append an execution $(A_\ell, (\alpha^E)_\ell')$ to the end of $\bar{\sched}_i$ that involves an execution of $\alpha_0^E$ resetting $A_\ell.\xflag_p$ to \false.
    Complete the construction of the desired sequential schedule $\bar{\sched}$ by appending the execution $(A_i, (\alpha^E)_i')$ involving the execution of $\alpha_j^E$.
    We have shown that this final execution in $\bar{\sched}$ considers exactly the same neighborhood as $(A_i, \alpha_i')$ did in $\sched$, and thus because $\alpha_j$ is enabled in $\sched$ so is $\alpha_j^E$ in $\bar{\sched}$.
    This further guarantees that their respective executions make the same updates to the original variables of $\alg$ and, by the expansion-compatibility ensured by Lemma~\ref{lem:serialremove}, the same movement.
    The only differences between the final configurations reached by $\sched$ and $\bar{\sched}$ are (\textit{i}) the added executions $(A_\ell, (\alpha^E)_\ell')$ in $\bar{\sched}$ for executing $\alpha_0^E$ might set additional activity and awaken bits to \true, and (\textit{ii}) the final execution $(A_i, (\alpha^E)_i')$ involving the execution of $\alpha_j^E$ will set the activity bit of any amoebot in $\mathcal{L}_i(\bar{\sched}_i) \setminus \mathcal{L}_i(\sched)$ to \true.
    But these differences are exactly those allowed by the lemma, completing the induction.
\end{proof}

We now turn to the analysis of $\alg'$ under sequential executions.
Define a \textit{sequential schedule} $\sched = ((A_1, \alpha_1), (A_2, \alpha_2), \ldots)$ as the sequence of actions executed in a sequential execution, where $\alpha_i$ is the $i$-th action of $\alg$ executed by the system and $A_i$ is the amoebot that executed it.
For a sequential schedule to be \textit{valid}, $\alpha_i$ must be enabled for $A_i$ in the configuration produced by executions $(A_1, \alpha_1), \ldots, (A_{i-1}, \alpha_{i-1})$, for all $i \geq 1$.
Certainly, sequential schedules obfuscate various details that were made explicit in asynchronous schedules; e.g., they ignore the precise timing of message transmissions and movements.
Although a single sequential schedule may in fact represent many possible sequential executions, this abstraction suffices for our purposes because the resulting system configurations are well-defined.

We first argue that sequential executions of $\alg'$ terminate.

\begin{lemma} \label{lem:seqfinite}
    If every sequential schedule of $\alg$ starting in $C_0$ is finite, then every sequential schedule of $\alg'$ starting in $C_0'$ is also finite.
\end{lemma}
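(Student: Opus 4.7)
The plan is to classify every execution of $\alpha'$ in a sequential schedule of $\alg'$ into three categories and then bound the length of each category. The underlying principle is that, in the sequential setting, $\alg'$ is just $\alg$ dressed up with locks and bookkeeping bits that are invisible to $\alg$.

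First I would observe that in any sequential schedule of $\alg'$ the \Lock\ of Step~\ref{alg:protocol:lock} always succeeds (only one amoebot is active, so there are no lock conflicts), and similarly any \Expand\ performed in Step~\ref{alg:protocol:expand} succeeds (no concurrent expansion can compete for the target node). Consequently, every execution of $\alpha'$ falls into exactly one of three types: (i) an \emph{awaken cleanup} (Steps~\ref{alg:protocol:awaken}--\ref{alg:protocol:awakenabort}), which clears $A.\awaken$ and sets the activity bits of $A$'s locked neighbors before aborting; (ii) an \emph{idle termination} (Step~\ref{alg:protocol:disabledabort}), which sets $A.\activity \gets \false$ and aborts; or (iii) a \emph{genuine execution} of some action $\alpha_i \in \alg$ (Steps~\ref{alg:protocol:guards}--\ref{alg:protocol:pullunlock2}), which enacts the compute phase, writes, and movement of $\alpha_i$ together with activity/awaken bookkeeping.

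Next I would show that the subsequence of genuine executions, projected to their $(A_j, \alpha_j)$ labels, is itself a valid sequential schedule of $\alg$ starting in $C_0$. Because Step~\ref{alg:protocol:lock} succeeds and no other amoebot is active, the \Connected\ and \Read\ operations in Step~\ref{alg:protocol:read} evaluate each guard $g_i$ against the current $\alg$-configuration, so the chosen $\alpha_i$ is genuinely $\alg$-enabled for $A$. By Convention~\ref{conv:valid} the execution of $\alpha_i$ is successful, and by Convention~\ref{conv:phases} its operations decompose exactly into the compute phase in Step~\ref{alg:protocol:compute1} and the at-most-one movement of Steps~\ref{alg:protocol:expand}--\ref{alg:protocol:pullunlock2}, producing precisely the effect of $\alpha_i$ on the $\alg$-relevant state. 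Moreover, executions of types (i) and (ii) modify only the activity and awaken bits, which $\alg$ does not see. Hence the genuine executions in the $\alg'$-schedule form a valid sequential $\alg$-schedule from $C_0$, and by hypothesis there are only finitely many of them.

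Finally I would bound the number of cleanup executions occurring before, between, and after genuine executions. Focus on a maximal run of consecutive cleanup executions (no genuine execution in between). During such a run no amoebot sets any $\awaken$ bit (only Steps~\ref{alg:protocol:expandawaken}, \ref{alg:protocol:writes}, \ref{alg:protocol:pushawaken}, \ref{alg:protocol:pullawaken} do, and those belong to genuine executions), so every type-(i) execution strictly decreases the number of amoebots with $\awaken = \true$. Likewise, no amoebot sets any $\activity$ bit during this run except inside a type-(i) execution, after which that bit remains $\true$ only until the next time that amoebot is activated; careful accounting shows that the total number of type-(i) and type-(ii) executions between two consecutive genuine executions is $O(n)$. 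Combining this with the finite bound on genuine executions yields finiteness of the full sequential schedule of $\alg'$.

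The main obstacle is the step linking $\alg'$ executions to $\alg$ executions: one must verify that, in the sequential setting, the guard evaluation in Step~\ref{alg:protocol:read} really observes the current $\alg$-configuration and that the subsequent operations reproduce exactly the effect of $\alpha_i$ in $\alg$. Both hinge on the sequential assumption together with Conventions~\ref{conv:valid} and~\ref{conv:phases}; the monotonicity convention is not needed here but will become crucial in the asynchronous analysis of $\alg'$.
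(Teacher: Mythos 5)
Your proposal is correct and follows essentially the same route as the paper's proof: project the executions of $\alpha'$ that actually run an action of $\alg$ onto a valid sequential schedule of $\alg$ (finite by hypothesis), and then show the bookkeeping-only executions in between are finite because each one strictly decreases a bounded quantity over the activity and awaken bits. Your explicit $O(n)$ count per cleanup run is just the unrolled form of the paper's potential function $\Phi(C) = \sum_A I_{A.\activity} + 9I_{A.\awaken}$, which is lower bounded by $0$ and strictly decreasing once no amoebot is $\alg$-enabled.
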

\begin{proof}
    Suppose to the contrary that there exists an infinite sequential schedule $\sched$ of $\alg'$ starting in configuration $C_0'$.
    When ignoring the handling of amoebots' activity and awaken bits, any execution of action $\alpha'$ of $\alg'$ either makes no change to the system configuration or makes changes identical to those of some action $\alpha \in \alg$.
    First suppose that $\sched$ contains an infinite number of executions of $\alpha'$ executing actions of $\alg$.
    Then by constructing a sequential schedule comprising only these $\alg$ action executions, we obtain an infinite schedule of $\alg$ starting in $C_0$, a contradiction.
    
    Suppose instead that $\sched$ contains only a finite number of executions of $\alpha'$ executing actions of $\alg$.
    Since there are only a finite number of such executions, there must exist a time $t$ after which no amoebot is $\alg$-enabled and the remaining infinite executions of $\alpha'$ only involve updates to amoebots' activity and awaken bits.
    Any activation of an amoebot $A$ with $A.\awaken = \true$ results in $A$ setting the activity bits of its neighbors to \true---of which there can be at most a finite number $\Delta$ that depends on the assumed space variant---and resetting $A.\awaken$ to \false\ (Steps~\ref{alg:framework:awaken}--\ref{alg:framework:awakenabort}).
    Otherwise, an activation of $A$ with $A.\awaken = \false$ must result in $A$ resetting $A.\activity$ to \false\ since it is not $\alg$-enabled (Step~\ref{alg:framework:disabledabort}).
    Then the potential function $\Phi(C) = \sum_A (I_{A.\activity} + (\Delta + 1)I_{A.\awaken})$ over system configurations $C$ where $I_{A.\activity} \in \{0, 1\}$ (resp., $I_{A.\awaken} \in \{0, 1\}$) is equal to $1$ if and only if $A.\activity = \true$ (resp., $A.\awaken = \true$) is both lower bounded by $0$ and strictly decreasing after time $t$.
    Therefore, $\sched$ can only contain a finite number of executions of $\alpha'$ only involving updates to amoebots' activity and awaken bits, a contradiction of $\sched$ being infinite.
\end{proof}

We next establish a crucial property for characterizing configurations reachable by $\alg'$.

\begin{lemma} \label{lem:seqenabled}
    Consider any sequential schedule $\sched$ of $\alg'$ starting in $C_0'$.
    Any amoebot that is $\alg$-enabled in the final configuration reached by $\sched$ either (\textit{i}) is $\alg'$-enabled or (\textit{ii}) has an $\alg'$-enabled neighbor $B$ with $B.\awaken = \true$.
\end{lemma}
\begin{proof}
    Argue by induction on the length of $\sched = ((A_1, \alpha_1), \ldots, (A_k, \alpha_k))$.
    If $k = 0$, then the lemma trivially holds since all amoebots $A$ initially have $A.\activity = \true$ in $C_0'$ and thus are all $\alg'$-enabled.
    So suppose the lemma holds for schedules of $\alg'$ starting in $C_0'$ with any length $k \geq 0$, and consider any such schedule $\sched_{k+1} = ((A_1, \alpha_1), \ldots, (A_{k+1}, \alpha_{k+1}))$ with length $k + 1$.
    For $1 \leq i \leq k+1$, let $C_i'$ be the final configuration reached by the subschedule $\sched_i = ((A_1, \alpha_1), \ldots, (A_i, \alpha_i))$ of $\sched_{k+1}$.
    Consider any $\alg$-enabled amoebot $A$ in $C_{k+1}'$.
    
    We first suppose that $A$ was also $\alg$-enabled in $C_k'$.
    By the induction hypothesis, there are two cases to consider.
    If $A$ is $\alg'$-enabled in $C_k'$, then the only scenario in which $A.\activity$ is updated to \false\ is if $A = A_{k+1}$ and $A$ is not $\alg$-enabled (Step~\ref{alg:framework:disabledabort}), contrary to our supposition.
    So $A$ must also be $\alg'$-enabled in $C_{k+1}'$, satisfying (\textit{i}).
    Otherwise, $A$ must have an $\alg'$-enabled neighbor $B$ with $B.\awaken = \true$ in $C_k'$.
    The only scenario in which $B.\awaken$ is updated to \false\ is if $B = A_{k+1}$ and $B$ sets all of its neighbors' activity bits, including that of $A$, to \true\ (Steps~\ref{alg:framework:awaken}--\ref{alg:framework:awakenabort}).
    So either $B$ satisfies (\textit{ii}) by remaining an $\alg'$-enabled neighbor with $B.\awaken = \true$ or $A$ is $\alg'$-enabled in $C_{k+1}'$, satisfying (\textit{i}).
    
    Now suppose that $A$ was not $\alg$-enabled in $C_k'$; i.e., the execution of action $\alpha_{k+1}$ by amoebot $A_{k+1}$ causes a change in the neighborhood of $A$ such that $A$ becomes $\alg$-enabled in $C_{k+1}'$.
    Note that because $A$ was not $\alg$-enabled in $C_k'$, we must have $A_{k+1} \neq A$.
    If $A$ and $A_{k+1}$ were neighbors in $C_k'$, then $A_{k+1}$ must update $A.\activity$ to \true\ during its execution of $\alpha_{k+1}$ (Step~\ref{alg:framework:active}), satisfying (\textit{i}).
    Otherwise, if $A$ and $A_{k+1}$ were not neighbors in $C_k'$, there are still two ways $A_{k+1}$ could change the neighborhood of $A$ by executing $\alpha_{k+1}$.
    First, $A_{k+1}$ could move into the neighborhood of $A$ via an \Expand\ or \Push; in this case, $A_{k+1}$ remains $\alg'$-enabled and updates its own awaken bit to \true\ (Steps~\ref{alg:framework:expandawaken} and~\ref{alg:framework:pushawaken}), satisfying (\textit{ii}).
    Second, $A_{k+1}$ could update the memory of a neighbor $B$ of $A$ via a \Write; in this case, $A_{k+1}$ must also update $B.\activity$ and $B.\awaken$ to \true\ (Steps~\ref{alg:framework:active} and~\ref{alg:framework:writes}), also satisfying (\textit{ii}).
\end{proof}

The following lemma concludes our analysis of sequential executions.

\begin{lemma} \label{lem:seqoutcome}
    For any configuration $C'$ in which some sequential execution of $\alg'$ starting in $C_0'$ terminates, there exists a sequential execution of $\alg$ starting in $C_0$ that terminates in a configuration $C$ identical to $C'$, modulo activity and awaken bits.
\end{lemma}
\begin{proof}
    Consider any valid sequential schedule $\sched'$ of $\alg'$ starting in $C_0'$ under which $\alg'$ terminates and let $C'$ be the configuration it terminates in.
    As in the proof of Lemma~\ref{lem:seqfinite}, the executions of action $\alpha'$ in $\sched'$ involving $\alg$ action executions form a valid sequential schedule $\sched$ of $\alg$ starting in $C_0$ that makes the same system configuration changes as $\sched'$ w.r.t.\ the variables used in $\alg$.
    So $\sched$ reaches a configuration $C$ that is equivalent to $C'$ modulo amoebots' activity and awaken bits.
    Moreover, $\sched$ must terminate in $C$; otherwise, there exists an $\alg$-enabled amoebot in $C$ that, by Lemma~\ref{lem:seqenabled}, implies there exists an $\alg'$-enabled amoebot in $C'$, contradicting our supposition that $\alg'$ terminates in $C'$.
\end{proof}

It remains to show that all asynchronous schedules of $\alg'$ are finite in a sense that they only require a finite amount of time.

\begin{lemma} \label{lem:progress}
    If every sequential schedule of $\alg$ starting in $C_0$ is finite, then every asynchronous schedule of $\alg'$ starting in $C_0'$ is also finite.
\end{lemma}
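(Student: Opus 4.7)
The plan is to proceed by contradiction: assume there is an infinite asynchronous schedule $\sched$ of $\alg'$ starting in $C_0'$. Because every operation terminates in finite time (Observation~\ref{obs:operationstime}), finitely many executions each of finite duration must yield finite total time, so it suffices to bound the total number of $\alpha'$ executions in $\sched$. My first step is to apply Lemma~\ref{lem:sanitize} to pass to the sanitized schedule $\sched^*$, and then split on whether $\sched^*$ has infinitely many or finitely many relevant executions.

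In the infinite case, my plan is to combine Lemma~\ref{lem:serializability} with König's lemma. The tree of all sequential schedules of $\alg'$ rooted at $C_0'$ is finitely branching since the amoebot system has fixed size, $\alpha'$ offers finitely many choices of enabled $\alg$ action at each configuration, and constant-size memory restricts any random or nondeterministic computation to finitely many outcomes. By hypothesis this tree contains no infinite branch, so by König's lemma it is finite and has a uniform depth bound $L$. Applying Lemma~\ref{lem:serializability} to the prefix of $\sched^*$ consisting of its first $L+1$ relevant executions then produces a sequential schedule of length at least $L+1$, contradicting $L$.

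In the finite case, let $T_0$ be the completion time of the last relevant execution in $\sched^*$. My plan is to define the potential $\Phi_t = |\{A : A.\activity = \true\}| + 2|\{A : A.\awaken = \true\}|$ and argue by inspection of $\alg'$ that after $T_0$ it can only decrease: every step that sets an \activity\ or \awaken\ bit to $\true$ lies inside a successful execution satisfying the definition of relevance (either via $A.\awaken = \true$ at the start or via some enabled $\alg$ action reached at Step~\ref{alg:protocol:compute1}), which would contradict the choice of $T_0$. Each successful irrelevant execution must therefore reach Step~\ref{alg:protocol:disabledabort} and strictly decreases $\Phi$, so their number after $T_0$ is bounded by $\Phi_{T_0} \le 3n$. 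For failed executions, I would use a charging argument against contention resolution: each round of \Lock\ or \Expand\ contention produces exactly one winner in finite time, and each losing attempt is charged to that winner, whose subsequent action must be either relevant (impossible after $T_0$) or successful-irrelevant (already bounded). Capping contenders per round by $n$ then bounds all failed attempts as well.

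Combining both cases, $\sched$ contains only finitely many events, contradicting the assumption of infinite total time. The hardest step will be the first case: cleanly justifying finite branching of the sequential schedule tree is model-dependent, and I would explicitly invoke the constant-size memory variant to rule out unbounded randomness. The charging arguments in the second case are routine but must carefully distinguish post-$T_0$ \Lock\ contention from \Expand\ contention, where any winner's action would itself be relevant and thus rule out \Expand\ contention post-$T_0$ entirely.
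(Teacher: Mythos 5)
Your proposal is correct in outline and follows the same skeleton as the paper's proof: assume an infinite asynchronous schedule, sanitize it via Lemma~\ref{lem:sanitize}, and split on whether there are infinitely or finitely many relevant executions of $\alpha'$. The two cases are closed out somewhat differently, and the comparison is instructive. In the infinite case, the paper simply asserts ``by supposition there exists a finite upper limit $f$ on the number of action executions in any sequential schedule'' and then extracts a sub-DAG on $f+1$ nodes by \emph{iteratively applying Lemma~\ref{lem:serialremove}} before invoking Lemma~\ref{lem:serializability}; your K\H{o}nig's-lemma argument actually justifies the uniform bound $L$ that the paper takes for granted (a genuine improvement, modulo checking finite branching), but you should not take a naive time-prefix of $\sched^*$ --- you need the iterated sink-removal of Lemma~\ref{lem:serialremove} to guarantee that the truncated schedule of $L+1$ executions is still a \emph{valid} sanitized schedule to which Lemma~\ref{lem:serializability} applies. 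In the finite case, the paper argues by a witness chain: among the infinitely many irrelevant executions after time $t$, some must succeed in \Lock\ (a failed \Lock\ witnesses a concurrent successful one), among those only finitely many can be $\alg$-disabled (they clear their own \activity\ bits), and any remaining one must fail in \Expand, which witnesses a concurrent \emph{relevant} execution after $t$ --- contradiction. Your potential-function bound of $3n$ on successful-irrelevant executions is a cleaner version of the paper's ``shrinks monotonically'' remark and correctly accounts for \awaken\ bits, but your charging step is slightly off: a \Lock\ can also fail by encountering an amoebot that is already locked (not only by losing a contention round), and the same losing amoebot can retry and fail repeatedly against a single winner's locked window, so ``capping contenders per round by $n$'' does not bound the failures charged to one winner. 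The fix is exactly the paper's move: every failure (of \Lock\ or of \Expand) witnesses a \emph{concurrent successful} execution; after $T_0$ there are only finitely many successful executions, each of finite duration, so past the time the last of them ends no execution can fail, and hence none can occur at all. With those two repairs your argument goes through and is essentially equivalent to the paper's.
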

\begin{proof}
    Suppose to the contrary that there exists an infinite asynchronous schedule $\sched$ of $\alg'$ starting in $C_0'$.
    % Since events are totally ordered in time, we assume w.l.o.g.\ that (\textit{i}) any two events in $\sched$ either happen simultaneously or are at least one time unit apart and (\textit{ii}) at least one amoebot is active (executing an action) at every time unit.
    First suppose that $\sched$ contains only a finite number of relevant action executions.
    Then there exists an earliest time $t$ after which no event associated with a relevant action execution is ever scheduled.
    Time $t$ is well-defined because (\textit{i}) every operation---and, by extension, every action execution---terminates in finite time and (\textit{ii}) there can be at most a finite number of irrelevant action executions initiated before time $t$ due to the fact that there are a finite number of amoebots, each amoebot executes at most one action per time, and any non-simultaneous events in $\sched$ are at least one time unit apart.
    Since $\sched$ is infinite and there always exists at least one active amoebot, there must exist an infinite number of action executions initiated after time $t$ and they must all be irrelevant.
    Recall that, by Lemma~\ref{lem:failure}, there are three types of irrelevant executions: those whose \Lock\ operation fails, those whose \Lock\ operation succeeds but that have $A.\awaken = \false$ and are $\alg$-disabled, and those whose \Lock\ operation succeeds but whose \Expand\ operation fails.
    
    It is easy to see that there must exist an execution of $\alpha'$ initiated after time $t$ whose \Lock\ operation succeeds; otherwise, all action executions initiated after time $t$ fail in their \Lock\ operation, a violation of the \Lock\ operation's deadlock freedom property.
    
    We next argue that some execution of $\alpha'$ initiated after time $t$ whose \Lock\ operation succeeds has $A.\awaken = \false$ and is $\alg$-enabled.
    Certainly, no execution of $\alpha'$ initiated after time $t$ with a successful \Lock\ operation could have $A.\awaken = \true$ as this execution would be relevant, contradicting our assumption on $t$.
    Any execution of $\alpha'$ that succeeds in its \Lock\ operation but is $\alg$-disabled sets its amoebot's activity bit to \false, disabling $\alpha'$.
    With a finite number of amoebots, there cannot be an infinite number of such executions.
    
    So consider any execution $(A, \alpha')$ initiated after time $t$ that succeeds in its \Lock\ operation, has $A.\awaken = \false$, and is $\alg$-enabled.
    This execution is irrelevant by supposition, so by Lemma~\ref{lem:failure}, its \Expand\ operation (say, into an adjacent node $v$) must fail.
    Convention~\ref{conv:valid} ensures that $A$ could not have called \Expand\ if it was expanded or if $v$ was occupied at the time of the corresponding guard evaluation, and $A$ cannot be involved in a movement initiated by some other amoebot because it is locked.
    The only way the \Expand\ operation of $(A, \alpha')$ could fail is if another amoebot $B$ successfully moves into $v$ during an execution $(B, \alpha')$ that is concurrent with $(A, \alpha')$.
    But if $(B, \alpha')$ succeeds in its movement operation, then all its operation executions must succeed by Lemma~\ref{lem:failure}; therefore, $(B, \alpha')$ is a relevant execution with an event occurring after time $t$, again contradicting our assumption on $t$.
    
    We conclude that $\sched$ must in fact contain an infinite number of relevant action executions.
    Moreover, when ordering these relevant action executions by the time their \Lock\ operations complete, there is at most a finite number of time units---and thus a finite number of irrelevant action executions---between any two consecutive relevant action executions.
    Thus, every relevant action execution has a well-defined, finite start time.
    
    Since $\sched$ contains an infinite number of relevant action executions, its sanitized version $\sched^*$ is also infinite.
    By Lemma~\ref{lem:sanitize} (which also holds for infinite schedules), $\sched^*$ is a valid asynchronous schedule that changes the system configuration exactly as $\sched$ does, except w.r.t.\ amoebots' activity bits.
    Let $D$ be the infinite DAG corresponding to $\sched^*$ (Lemma~\ref{lem:dag}).
    We argue next that Lemmas~\ref{lem:serialremove} and \ref{lem:serializability} apply to any snapshot of $\sched^*$ consistent with $D$.
    
    Consider the schedule $\hat{\sched}$ obtained by selecting the first $T \geq 1$ relevant action executions from $\sched^*$ ordered by the time their \Lock\ operations complete; if multiple action executions complete their \Lock\ operations simultaneously, we may assume any unique, canonical ordering of these action executions.
    Since all edges of the DAG $D$ of $\sched^*$ are forward in time w.r.t.\ completions of \Lock\ operations, $\hat{\sched}$ forms a consistent snapshot of $\sched^*$: for any edge $(A_i, \alpha_i') \to (A_j, \alpha_j')$ in $D$ with execution $(A_j, \alpha_j')$ contained in $\hat{\sched}$, we must have that $(A_i, \alpha_i')$ is also in $\hat{\sched}$.
    This snapshot property ensures that any memory accesses, contractions, and handovers execute in $\hat{\sched}$ in the same way as in $\sched^*$ since these only depend on the amoebots locked in the \Lock\ operations.
    Moreover, if $\hat{\sched}$ contains the first execution $(A_j, \alpha_j')$ to \Expand\ into a position after it is vacated during an execution $(A_i, \alpha_i')$ in $\sched^*$, then $(A_i, \alpha_i') \to (A_j, \alpha_j')$ is an edge in $D$ by DAG Rule~\ref{dag:move} and thus $(A_i, \alpha_i')$ is also contained in $\hat{\sched}$ because it is a consistent snapshot.
    Hence, $\hat{\sched}$ is a valid asynchronous schedule of $\alg'$ starting in $C_0'$.
    By Lemmas~\ref{lem:serialremove} and \ref{lem:serializability}, $\hat{\sched}$ can be mapped to a valid sequential schedule of $(A^E)'$ starting in $(C_0^E)'$ that contains at least $T$ action executions.
    
    This immediately implies that if there exists an infinite asynchronous schedule of $\alg'$ starting in $C_0'$, then there must also exist an infinite sequential schedule of $(A^E)'$ starting in $(C_0^E)'$.
    Otherwise, there exists a value of $T$ for which the above conversion fails, a contradiction.
    But this contradicts our original supposition: if every sequential schedule of $\alg$ starting in $C_0$ is finite, then every sequential schedule of $\alg^E$ starting in $C_0^E$ is finite by the termination condition of Convention~\ref{conv:expandrobust}, which in turn implies that every sequential schedule of $(A^E)'$ starting in $(C_0^E)'$ is finite by Lemma~\ref{lem:seqfinite}.
    This concludes the proof.
\end{proof}

We are now ready to prove Theorem~\ref{thm:concurrencycontrol}, concluding our analysis.

\begin{proof}[Proof of Theorem~\ref{thm:concurrencycontrol}]
    Every sequential execution of $\alg$ starting in $C_0$ terminates by supposition, so every asynchronous execution of $\alg'$ starting in $C_0'$ also terminates by Lemma~\ref{lem:progress}. 
    Consider any asynchronous schedule $\sched$ of $\alg'$ starting in $C_0'$ and let $C'$ be the configuration it terminates in.
    By Lemma~\ref{lem:sanitize}, the sanitized asynchronous schedule $\sched^*$ obtained from $\sched$ is valid and terminates in a configuration $C^*$ that is identical to $C'$, except $C^*$ may contain additional amoebots with \true\ activity bits.
    By Lemma~\ref{lem:serializability}, there exists a sequential schedule $\bar{\sched}$ of $(\alg^E)'$ starting in $(C_0^E)'$ that terminates in a configuration $(C^E)'$ that is identical to $C^*$, except $(C^E)'$ contains amoebots' expand flags and may also have additional amoebots with \true\ activity or awaken bits.
    Applying Lemma~\ref{lem:seqoutcome} to $\alg^E$ implies that there exists some sequential schedule of $\alg^E$ starting in $C_0^E$ that terminates in a configuration $C^E$ that is identical to $(C^E)'$, modulo amoebots' activity and awaken bits.
    Finally, because $\alg$ satisfies Convention~\ref{conv:expandrobust} by supposition, the correctness condition of expansion-robustness states that there exists a sequential execution of $\alg$ starting in $C_0$ that terminates in a configuration $C$ that is identical to $C^E$, modulo amoebots' expand flags.
    Therefore, $C$ and $C'$ are identical, modulo amoebots' activity and awaken bits, concluding the proof.
\end{proof}

\section{Discussion and Future Work} \label{sec:discuss}

% In this paper, we presented the canonical amoebot model, an abstraction of programmable matter that treats amoebot operations at the granularity of message passing and formalizes a hierarchy of assumptions regarding model variants.
% Our algorithmic results focused on the design of correct algorithms under an unfair asynchronous adversary, the most general of all possible timing assumptions.
% Our first approach introduced a lock-free algorithm for hexagon formation that is correct under an unfair asynchronous adversary.
% We complemented this approach with a lock-based concurrency control framework that transforms any algorithm $\alg$ that always terminates under an unfair sequential adversary and satisfies Conventions~\ref{conv:valid}--\ref{conv:monotone} into an algorithm $\alg'$ that always terminates under an unfair asynchronous adversary in some configuration that is reachable by a sequential execution of $\alg$.
% These results open up many interesting lines for future research, as we discuss in the Appendix.

An immediate application of the canonical amoebot model and its hierarchy of assumption variants is a systematic comparison of existing amoebot algorithms and their assumptions.
For example, when comparing two recent amoebot algorithms for leader election using the canonical hierarchy, we find that among other problem-specific differences, Bazzi and Briones~\cite{Bazzi2019-stationarydeterministic} assume an asynchronous adversary and common chirality while Emek et al.~\cite{Emek2019-deterministicleader} assume a sequential adversary and assorted orientations.
Such comparisons will provide valuable and comprehensive understanding of the state of amoebot literature and will facilitate clearer connections to related models of programmable matter.

The canonical amoebot model should also be extended to address \textit{fault tolerance} and \textit{self-stabilizing} algorithms.
This work assumed that all amoebots are reliable, though crash faults have been previously considered in the amoebot model for specific problems~\cite{Daymude2021-bioinspiredenergy,DiLuna2018-linerecovery}.
Faulty amoebot behavior is especially challenging for lock-based concurrency control mechanisms which are prone to deadlock in the presence of crash faults.
Additional modeling efforts will be needed to introduce a stable family of fault assumptions.

Finally, further study is needed on the design of concurrent amoebot algorithms.
Amoebots' communication and movement raise many issues of concurrency, ranging from conflicts of movement to operating based on stale information.
Our analysis of the \algHex\ algorithm produced one set of algorithm-agnostic invariants that yield correct asynchronous behavior without the use of locks (Lemmas~\ref{lem:hexsequential}--\ref{lem:hexsuccess}) while our concurrency control framework gives another set of sufficient conditions for obtaining correct behavior under an asynchronous adversary when using locks (Conventions~\ref{conv:valid}--\ref{conv:expandrobust}).

Of the three conventions used by the concurrency control framework, expansion-robustness (Convention~\ref{conv:expandrobust}) is the most restrictive and technically difficult to verify, though it is easier to understand and verify than the original ``monotonicity'' convention~\cite{Daymude2021-canonicalamoebot} that it replaced.
The framework's analysis relies on expansion-robustness to show that when an action execution is moved from its timing in an asynchronous schedule into the future where it is not concurrent with any other execution, it produces the same system configuration that it did originally, regardless of any new amoebots that may have moved into its neighborhood in the meantime.
In that light, it is easy to see that \textit{stationary} algorithms that do not use movement are trivially expansion-robust (Observation~\ref{obs:stationarycompatible}).
These include many of the existing algorithms for leader election~\cite{Bazzi2019-stationarydeterministic,Daymude2017-improvedleader,Derakhshandeh2015-leaderelection,DiLuna2020-shapeformation,Gastineau2019-distributedleader} and the recent algorithm for energy distribution~\cite{Daymude2021-bioinspiredenergy}.
However, many interesting collective behaviors for programmable matter require movement.
We proved that the \algHex\ algorithm is expansion-robust and compatible with the concurrency control framework (Theorem~\ref{thm:hexcompatible}).
Future work should investigate whether this is also true of other existing amoebot algorithms.

We emphasize that expansion-robustness is not simply a technicality of our approach but rather a general phenomenon for asynchronous amoebot systems.
Imagine a cycle alternating between contracted amoebots and empty positions and an asynchronous execution where all amoebots, having no neighbors, expand concurrently.
This forms a cycle of expanded amoebots.
However, any sequence of these expansions would result in at least one amoebot seeing an already expanded neighbor at the start of its action execution, which may prohibit its expansion and stop the system from reaching the original outcome (an expanded cycle).

This discussion highlights two open questions.
Do there exist amoebot algorithms that are not correct under an asynchronous adversary but are compatible with our concurrency control framework, establishing the necessity of lock-based approaches to concurrency control?
What are the necessary conditions for amoebot algorithm correctness in spite of asynchrony, both with and without locks?
We are hopeful that our approaches to concurrent algorithm design combined with answers to these open problems will advance the analysis of existing and future algorithms for programmable matter in the concurrent setting.

% The bibliography.
\bibliographystyle{plainurl}
\bibliography{ref}

\clearpage

\appendix

\section{Appendix: Amoebot Operation Pseudocode} \label{app:pseudocode}

In this appendix, we give formal distributed pseudocode for the amoebot operations.
Algorithm~\ref{alg:operationscomms} details the communication operations (Section~\ref{subsubsec:operationscomms}) and Algorithms~\ref{alg:operationsmoves} and~\ref{alg:operationshandovers} detail the movement operations (Section~\ref{subsubsec:operationsmoves}).
One possible implementation of the concurrency control operations (Section~\ref{subsubsec:operationsconcurrency}) is given in~\cite{Daymude2022-localmutual}.

\begin{algorithm}[tbh]
\caption{Communication Operations for Amoebot $A$} \label{alg:operationscomms}
\begin{algorithmic}[1]
    \Function{Connected}{$p$}
        \If {there is a neighbor connected via port $p$} \Return \true.
        \Else {} \Return \false.
        \EndIf
    \EndFunction
\end{algorithmic}
\begin{algorithmic}[1]
    \Function{Connected}{ }
        \State Let $k$ be the number of edges incident to the node(s) $A$ occupies.
        \State Snapshot the connectivity status of each port $p \in \{1, \ldots, k\}$.
        \State Let $c_p \gets N_i$ if neighbor $N_i$ is connected via port $p$ and $c_p \gets \false$ otherwise.
        \State \Return $[c_0, \ldots, c_{k-1}] \in \{N_1, \ldots, N_8, \false\}^k$.
    \EndFunction
\end{algorithmic}
\begin{algorithmic}[1]
    \Function{Read}{$p, x$}
        \State On being called:
            \Indent
                \If {$p = \bot$} \Return the value of $x$ in the public memory of $A$; success.
                \ElsIf {\Call{Connected}{$p$}} enqueue \texttt{read\_request}$(x)$ in the message buffer on $p$.
                \Else {} \textbf{throw} \texttt{disconnect-failure}.
                \EndIf
            \EndIndent
        \State On receiving \texttt{read\_request}$(x)$ via port $p'$:
            \Indent
                \State Let $x_{val}$ be the value of $x$ in the public memory of $A$.
                \State Enqueue \texttt{read\_ack}$(x, x_{val})$ in the message buffer on $p'$.
            \EndIndent
        \State On receiving \texttt{read\_ack}$(x, x_{val})$ via port $p$:
            \Indent
                \State \Return $x_{val}$; success.
            \EndIndent
        \State On disconnection via port $p$:
            \Indent
                \State \textbf{throw} \texttt{disconnect-failure}.
            \EndIndent
    \EndFunction
\end{algorithmic}
\begin{algorithmic}[1]
    \Function{Write}{$p, x, x_{val}$}
        \State On being called:
            \Indent
                \If {$p = \bot$} update the value of $x$ in the public memory of $A$ to $x_{val}$; \Return success.
                \ElsIf {\Call{Connected}{$p$}} enqueue \texttt{write\_request}$(x, x_{val})$ in the message buffer on $p$.
                \Else {} \textbf{throw} \texttt{disconnect-failure}.
                \EndIf
            \EndIndent
        \State On \texttt{write\_request}$(x, x_{val})$ being sent:
            \Indent
                \State \Return success.
            \EndIndent
        \State On disconnection via port $p$:
            \Indent
                \State \textbf{throw} \texttt{disconnect-failure}.
            \EndIndent
        \State On receiving \texttt{write\_request}$(x, x_{val})$ via port $p'$:
            \Indent
                \State Update the value of $x$ in the public memory of $A$ to $x_{val}$.
            \EndIndent
    \EndFunction
\end{algorithmic}
\end{algorithm}

\begin{algorithm}
\caption{Movement Operations for Amoebot $A$} \label{alg:operationsmoves}
\begin{algorithmic}[1]
    \Function{Contract}{$v$}
        \State On being called:
            \Indent
                \If {$A.\shape \neq \textsc{expanded}$} \textbf{throw} \texttt{shape-failure}.
                \ElsIf {$A$ is involved in a handover} \textbf{throw} \texttt{handover-failure}.
                \Else {} release all connections via ports on $v$ and begin contracting out of $v$.
                \EndIf
            \EndIndent
        \State On completing the contraction:
            \Indent
                \State Update $A.\shape \gets \textsc{contracted}$; \Return success.
            \EndIndent
    \EndFunction
\end{algorithmic}
\begin{algorithmic}[1]
    \Function{Expand}{$p$}
        \State Let $v_p$ denote the node that port $p$ faces.
        \State On being called:
            \Indent
                \If {$A.\shape \neq \textsc{contracted}$} \textbf{throw} \texttt{shape-failure}.
                \ElsIf {$A$ is involved in a handover} \textbf{throw} \texttt{handover-failure}.
                \ElsIf {\Call{Connected}{$p$}} \textbf{throw} \texttt{occupied-failure}.
                \Else {} begin expanding into $v_p$.
                \EndIf
            \EndIndent
        \State On collision with another amoebot:
            \Indent
                \State Perform contention resolution.
            \EndIndent
        \State On failing contention resolution:
            \Indent
                \State \textbf{throw} \texttt{occupied-failure}.
            \EndIndent
        \State On completing the expansion or on succeeding in contention resolution:
            \Indent
                \State Establish connections with any new neighbors adjacent to $v_p$.
                \State Update $A.\shape \gets \textsc{expanded}$; \Return success.
            \EndIndent
    \EndFunction
\end{algorithmic}
\end{algorithm}

\begin{algorithm}[p]
\caption{Movement Operations for Amoebot $A$ (cont.)} \label{alg:operationshandovers}
\begin{algorithmic}[1]
    \Function{Pull}{$p$}
        \State Let $v_p$ denote the node that port $p$ faces.
        \State On being called:
            \Indent
                \If {$A.\shape \neq \textsc{expanded}$} \textbf{throw} \texttt{shape-failure}.
                \ElsIf {$A$ is involved in a handover} \textbf{throw} \texttt{handover-failure}.
                \ElsIf {$\neg$\Call{Connected}{$p$}} \textbf{throw} \texttt{disconnect-failure}.
                \Else {} enqueue \texttt{pull\_request}$()$ in the message buffer on $p$.
                \EndIf
            \EndIndent
        \State On receiving \texttt{pull\_request}$()$ via port $p'$:
            \Indent
                \If {$A.\shape = \textsc{contracted}$ and $A$ is not involved in a move} set $m' \gets\texttt{pull\_ack}()$.
                \Else {} set $m' \gets\texttt{pull\_nack}()$.
                \EndIf
                \State Enqueue $m'$ in the message buffer on $p'$.
            \EndIndent
        \State On sending \texttt{pull\_ack}$()$:
            \Indent
                \State Begin expanding into $v_p$.
            \EndIndent
        \State On completing the expansion into $v_p$:
            \Indent
                \State Establish connections with any new neighbors adjacent to $v_p$.
                \State Update $A.\shape \gets \textsc{expanded}$.
            \EndIndent
        \State On receiving \texttt{pull\_ack}$()$ via port $p$:
            \Indent
                \State Release all connections via ports on $v_p$ except $p$ and begin contracting out of $v_p$.
            \EndIndent
        \State On receiving \texttt{pull\_nack}$()$ via port $p$ or on a disconnection via port $p$:
            \Indent
                \State \textbf{throw} \texttt{nack-failure}.
            \EndIndent
        \State On completing the contraction out of $v_p$:
            \Indent
                \State Update $A.\shape \gets \textsc{contracted}$; \Return success.
            \EndIndent
    \EndFunction
\end{algorithmic}
\begin{algorithmic}[1]
    \Function{Push}{$p$}
        \State Let $v_p$ denote the node that port $p$ faces.
        \State On being called:
            \Indent
                \If {$A.\shape \neq \textsc{contracted}$} \textbf{throw} \texttt{shape-failure}.
                \ElsIf {$A$ is involved in a handover} \textbf{throw} \texttt{handover-failure}.
                \ElsIf {$\neg$\Call{Connected}{$p$}} \textbf{throw} \texttt{disconnect-failure}.
                \Else {} enqueue \texttt{push\_request}$()$ in the message buffer on $p$.
                \EndIf
            \EndIndent
        \State On receiving \texttt{push\_request}$()$ via port $p'$:
            \Indent
                \If {$A.\shape = \textsc{expanded}$ and $A$ is not involved in a move} set $m' \gets\texttt{push\_ack}()$.
                \Else {} set $m' \gets\texttt{push\_nack}()$.
                \EndIf
                \State Enqueue $m'$ in the message buffer on $p'$.
            \EndIndent
        \State On sending \texttt{push\_ack}$()$:
            \Indent
                \State Release all connections via ports on $v_p$ except $p$ and begin contracting out of $v_p$.
            \EndIndent
        \State On completing the contraction out of $v_p$:
            \Indent
                \State Update $A.\shape \gets \textsc{contracted}$.
            \EndIndent
        \State On receiving \texttt{push\_ack}$()$ via port $p$:
            \Indent
                \State Begin expanding into $v_p$.
            \EndIndent
        \State On receiving \texttt{push\_nack}$()$ via port $p$ or on a disconnection via port $p$:
            \Indent
                \State \textbf{throw} \texttt{nack-failure}.
            \EndIndent
        \State On completing the expansion into $v_p$:
            \Indent
                \State Establish connections with any new neighbors adjacent to $v_p$.
                \State Update $A.\shape \gets \textsc{expanded}$; \Return success.
            \EndIndent
    \EndFunction
\end{algorithmic}
\end{algorithm}

\clearpage

\section{Appendix: Expansion Contention Resolution} \label{app:expandcontend}

Recall that when an amoebot's expansion collides with another movement, it must perform contention resolution such that exactly one contending amoebot succeeds in its expansion while all others fail.
In this appendix, we detail and analyze one possible implementation of such a contention resolution scheme inspired by randomized backoff mechanisms for contention resolution in wireless networks~\cite{Bender2005-adversarialcontention,Cali2000-ieee80211,Capetanakis1979-treealgorithms}.
We need one additional assumption: all amoebots know an upper bound $T$ on the time required for an amoebot to complete any movement.
For simplicity, we will assume geometric space (i.e., the triangular lattice $\Gtri$), though this mechanism would generalize to any bounded degree graph.

\begin{figure}[h]
    \centering
    \includegraphics[width=.6\textwidth]{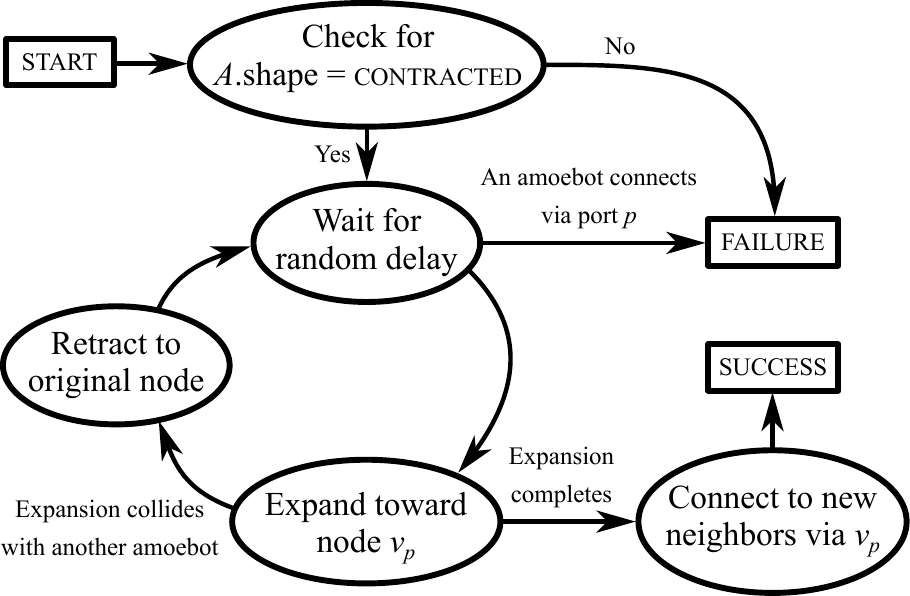}
    \caption{Execution flow of the \Expand\ operation with contention resolution for the calling amoebot $A$.}
    \label{fig:expandcontend}
\end{figure}

\begin{algorithm}[tbh]
\caption{\Expand\ Operation with Contention Resolution for Amoebot $A$} \label{alg:expandcontend}
\begin{algorithmic}[1]
    \Function{Expand}{$p$}
        \State Let $v_p$ denote the node that port $p$ faces.
        \State On being called:
            \Indent
                \If {$A.\shape \neq \textsc{contracted}$} \textbf{throw} \texttt{shape-failure}.
                \ElsIf {$A$ is involved in a handover} \textbf{throw} \texttt{handover-failure}.
                \Else {} wait for a delay of $0$.
                \EndIf
            \EndIndent
        \State After waiting for a delay:
            \Indent
                \If {$\neg$\Call{Connected}{$p$}} begin expanding into $v_p$.
                \Else {} \textbf{throw} \texttt{occupied-failure}.
                \EndIf
            \EndIndent
        \State On collision with another amoebot:
            \Indent
                \State Retract back out of $v_p$ and wait for a delay chosen u.a.r.\ from $[5T, 10T]$.
            \EndIndent
        \State On connection via port $p$:
            \Indent
                \State \textbf{throw} \texttt{occupied-failure}.
            \EndIndent
        \State On completing the expansion:
            \Indent
                \State Establish connections with any new neighbors adjacent to $v_p$.
                \State Update $A.\shape \gets \textsc{expanded}$; \Return success.
            \EndIndent
    \EndFunction
\end{algorithmic}
\end{algorithm}

The execution flow of our contention resolution mechanism is shown in \figtext~\ref{fig:expandcontend} and its pseudocode is given in Algorithm~\ref{alg:expandcontend}.
When $A$ detects a collision, it retracts to its original node and retries its expansion after waiting for a delay chosen uniformly at random from $[5T, 10T]$, where $T$ is an upper bound on the time required for an amoebot to complete an expansion or retraction.
In the remainder of this section, we verify the following claim.

\begin{lemma} \label{lem:expandcontend}
    Suppose a set of amoebots are contending to expand into the same node of $\Gtri$.
    If each amoebot waits for a delay chosen uniformly at random from $[5T, 10T]$ before its expansion attempt, then exactly one contender succeeds in $\bigo{\log n}$ attempts w.h.p.\footnote{An event occurs \textit{with high probability} (w.h.p.) if it occurs with probability at least $1 - 1/n^c$, where $n$ is the number of amoebots in the system and $c > 0$ is a constant.}
\end{lemma}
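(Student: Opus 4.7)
The plan is to reduce the analysis to a spacing-of-order-statistics calculation per round and then compose independent rounds. The first reduction is structural: in $\Gtri$ any node has exactly six neighbors, so the number $k$ of amoebots ever contending for a single node satisfies $k \leq 6$, which we treat as $\bigo{1}$ throughout.

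Next I would identify the winning criterion for a single round. Let $s_1 \leq s_2 \leq \ldots \leq s_k$ be the ordered times at which the $k$ contenders attempt to begin expanding in that round. If $s_2 - s_1 \geq T$ then the contender with start time $s_1$ completes its expansion by $s_1 + T \leq s_2$; every later contender then observes the newly established connection via port $p$ (its \Connected\ check returns \true) and throws \texttt{occupied-failure} (Step~5 of Algorithm~\ref{alg:expandcontend}), so the round produces a unique winner. It remains to lower bound the probability of this spacing event.

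To handle asynchrony, I would track for each contender $j$ the time $\tau_j$ at which it finishes retracting and begins drawing its delay. Since collisions only occur between expansions whose expansion windows overlap (otherwise a late arriver would have failed immediately on the initial \Connected\ check), the $\tau_j$ values span a range of at most $2T$: at most $T$ from the spread of collision detection times plus at most $T$ from retraction durations. The start times are $s_j = \tau_j + d_j$ with $d_j \sim U[5T,10T]$, so the intervals $[\tau_j + 5T, \tau_j + 10T]$ share a common sub-interval $I$ of length at least $5T - 2T = 3T$. Each $s_j$ lies in $I$ with probability at least $3/5$, and conditional on all $s_j \in I$, the $s_j$ are iid uniform on $I$. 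By the standard uniform-spacings identity, for $k$ iid uniforms on an interval of length $L \geq 3T$, $\Pr{X_{(2)} - X_{(1)} \geq T} = \Pr{X_{(1)} \geq T} = (1 - T/L)^k \geq (2/3)^6$, so each round succeeds with probability at least $(3/5)^6 \cdot (2/3)^6$, a positive constant $p$.

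Finally I would iterate over independent rounds: the delay draws in different rounds are independent, so the probability that none of the first $c \ln n$ rounds produces a winner is at most $(1-p)^{c \ln n} \leq n^{-pc}$. Choosing $c$ sufficiently large gives the w.h.p.\ bound, and once one winner completes its expansion every other contender fails on its next attempt with \texttt{occupied-failure}, giving exactly one success in $\bigo{\log n}$ attempts. The main obstacle will be the careful bookkeeping of asynchrony: one must argue that the start-time windows share a length-$3T$ sub-interval regardless of how contenders' collision and retraction times stagger, and that the delays drawn in successive rounds can be treated as independent across contenders. This is precisely where the uniform lower bound of $5T$ (rather than $0$) in the delay distribution does its work.
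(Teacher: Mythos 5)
Your overall strategy (a round-based minimum-spacing argument over order statistics) is a genuinely different decomposition from the paper's, which instead fixes one contender, bounds the probability that its attempt collides with each of the at most five others via the triangular distribution of $d_1 - d_2$, and multiplies over independent pairs. Your winning criterion ($s_{(2)} - s_{(1)} \geq T$ forces a unique winner, after which everyone else fails on the \Connected\ check) and the exchangeability-of-spacings computation are both sound in isolation. However, there is a genuine gap in the step that makes the order-statistics computation applicable: the claim that the times $\tau_j$ at which the contenders begin drawing their delays span at most $2T$. This rests on the assertion that ``collisions only occur between expansions whose expansion windows overlap,'' which is false under the model: the \Connected\ check only detects a \emph{completed} expansion (connections are established only upon completion), so a late arriver can pass its check and then collide with an amoebot that is mid-expansion or, crucially, mid-\emph{retraction}. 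This permits exactly the cascade the paper's proof spends a paragraph on: $A_1, A_2$ collide at time $t$ and retract by $t+T$; $A_3$ collides with a retracting $A_1$ just before $t+T$ and retracts by $t+2T$; and so on, so with six contenders the $\tau_j$ can span up to roughly $4T$. The common sub-interval $I$ of the windows $[\tau_j + 5T, \tau_j + 10T]$ then has length only about $T$ (or can be empty), your $(3/5)^6$ conditioning bound degrades, and the conditional probability $\Pr{X_{(2)} - X_{(1)} \geq T}$ on an interval of length $T$ is $0$, so the per-round constant $p$ evaporates in the worst case.

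A secondary, related issue is that the ``round'' abstraction is not well defined in a fully asynchronous execution: because of the cascade, contenders enter contention and finish retracting at staggered, adversarially influenced times, and different amoebots can be on different attempt numbers simultaneously, so the $k$ contenders do not synchronously enter a round together. The paper sidesteps both problems by (\emph{i}) analyzing each ordered pair of overlapping attempts directly, conditioning on arbitrary start times $t_i$ and durations $e_i$ so that no bound on their spread is needed, and (\emph{ii}) using the $5T$ lower bound on the delay to show that the entire cascade triggered by one collision dies out before any participant re-attempts, which is what licenses treating successive attempts of a fixed amoebot as independent. To repair your argument you would need either to prove a correct bound on the spread of the $\tau_j$ (accounting for expansion--retraction collisions), to widen the delay interval so that a length-$\Omega(T)$ common sub-interval survives a $4T$ stagger, or to replace the conditioning-on-$I$ step with a direct minimum-spacing bound for independent but non-identically distributed uniforms.
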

\begin{proof}
    We first bound the probability that two amoebots $A_1$ and $A_2$ collide in their respective expansion attempts into the same node.
    For each amoebot $A_i \in \{A_1, A_2\}$, let $t_i$ denote the start of its expansion attempt, $d_i$ denote its random delay, and $e_i$ denote the duration of its expansion if it were to succeed.
    The start time $t_i$ and expansion duration $e_i$ are fixed a priori by the adversary while the delay $d_i$ is chosen uniformly at random from the interval $[5T, cT]$, where $c > 5$ is a constant.
    So, in summary, amoebot $A_i \in \{A_1, A_2\}$ is waiting in the time interval $[t_i, t_i + d_i)$ and is expanding in the interval $[t_i + d_i, t_i + d_i + e_i]$.
    Thus, the expansions of amoebots $A_1$ and $A_2$ collide if and only if:
    \begin{align*}
        & [t_1 + d_1, t_1 + d_1 + e_1] \cap [t_2 + d_2, t_2 + d_2 + e_2] \neq \emptyset \\
        &\iff (t_1 + d_1 + e_1 \geq t_2 + d_2) \wedge (t_1 + d_1 \leq t_2 + d_2 + e_2) \\
        &\iff t_2 - t_1 - e_1 \leq d_1 - d_2 \leq t_2 - t_1 + e_2
    \end{align*}
    This implies:
    \begin{align*}
        & \Pr{\text{the expansions of $A_1$ and $A_2$ collide} \; | \; t_1, t_2, e_1, e_2} \\
        &= \Pr{t_2 - t_1 - e_1 \leq d_1 - d_2 \leq t_2 - t_1 + e_2} \\
        &= \Pr{d_1 - d_2 \leq t_2 - t_1 + e_2} - \Pr{d_1 - d_2 \leq t_2 - t_1 - e_1}
    \end{align*}

    Delays $d_1$ and $d_2$ are both uniform random variables over the interval $[5T, cT]$, so the difference $d_1 - d_2$ follows the symmetric triangular distribution with lower bound $(5 - c)T$, upper bound $(c - 5)T$, and mode $0$.
    W.l.o.g., suppose $t_1 < t_2$.
    There are two cases: when $t_2 - t_1 - e_1 \leq 0$ and when $t_2 - t_1 - e_1 > 0$.
    If we have $t_2 - t_1 - e_1 \leq 0$, then:
    \begin{align*}
        & \Pr{d_1 - d_2 \leq t_2 - t_1 + e_2} - \Pr{d_1 - d_2 \leq t_2 - t_1 - e_1} \\
        &= 1 - \frac{((c - 5)T - (t_2 - t_1 + e_2))^2}{((c - 5)T - (5 - c)T)((c - 5)T - 0)} - \frac{(t_2 - t_1 - e_1 - (5 - c)T)^2}{((c - 5)T - (5 - c)T)(0 - (5 - c)T)} \\
        &= \frac{2(c - 5)^2T^2 - ((c - 5)T - t_2 + t_1 - e_2)^2 - ((c - 5)T + t_2 - t_1 - e_1)^2}{2(c - 5)^2T^2} \\
        &= \frac{2(c - 5)^2T^2 - 2(c - 5)^2T^2 + 2(c - 5)Te_2 + 2(c - 5)Te_1 - 2t_2^2 + 4t_2t_1 - 2t_2e_2}{2(c - 5)^2T^2} \\
        &\phantom{=} \; + \frac{2t_2e_1 - 2t_1^2 + 2t_1e_2 - 2t_1e_1 - e_2^2 - e_1^2}{2(c - 5)^2T^2} \\
        &= \frac{2(c - 5)T(e_1 + e_2) - 2(t_2 - t_1)(e_2 - e_1) - 2(t_2 - t_1)^2 - e_1^2 - e_2^2}{2(c - 5)^2T^2} \\
        &< \frac{4(c - 5)T^2 + 2(c + 1)T^2}{2(c - 5)^2T^2} \\
        &= \frac{3(c - 3)}{(c - 5)^2},
    \end{align*}
    which is a constant probability when $c > \frac{13 + \sqrt{33}}{2} \approx 9.373$.
    The upper bound follows from:
    \begin{itemize}
        \item Since $T$ is the upper bound on the time required for an expansion, $e_1 + e_2 \leq 2T$.

        \item We assumed that $t_1 < t_2$, but we also have that if $t_2 > t_1 + d_1 + e_1$, then there cannot be a collision.
        Thus, $t_2 - t_1 \leq d_1 + e_1 \leq cT + T$ is a necessary condition for a collision.
        We also have that $-T \leq e_2 - e_1 \leq T$, so we conclude that $-2(t_2 - t_1)(e_2 - e_1) \leq 2(c + 1)T^2$.

        \item The last three numerator terms are all nonpositive, and thus can be upper bounded by $0$.
    \end{itemize}

    In the second case, if we have $t_2 - t_1 - e_1 > 0$, then:
    \begin{align*}
        & \Pr{d_1 - d_2 \leq t_2 - t_1 + e_2} - \Pr{d_1 - d_2 \leq t_2 - t_1 - e_1} \\
        &= 1 - \frac{((c - 5)T - (t_2 - t_1 + e_2))^2}{((c - 5)T - (5 - c)T)((c - 5)T - 0)} - 1 + \frac{((c - 5)T - (t_2 - t_1 - e_1))^2}{((c - 5)T - (5 - c)T)((c - 5)T - 0)} \\
        &= \frac{((c - 5)T - t_2 + t_1 + e_1)^2 - ((c - 5)T - t_2 + t_1 + e_2)^2}{2(c - 5)^2T^2} \\
        &= \frac{2(c - 5)Te_1 - 2(c - 5)Te_2 - 2t_2e_1 + 2t_2e_2 + 2t_1e_1 - 2t_1e_2 + e_1^2 - e_2^2}{2(c - 5)^2T^2} \\
        &= \frac{2(c - 5)T(e_1 - e_2) - 2(t_2 - t_1)(e_1 - e_2) + e_1^2 - e_2^2}{2(c - 5)^2T^2} \\
        &< \frac{2(c - 5)T^2 + 2(c + 1)T^2 + T^2}{2(c - 5)^2T^2} \\
        &= \frac{4c - 7}{2(c - 5)^2},
    \end{align*}
    which is a constant probability when $c > 6 + \sqrt{15/2} \approx 8.739$.
    Therefore, in any case, the probability that the expansions of $A_1$ and $A_2$ collide when their delays are drawn uniformly at random from the interval $[5T, cT]$ is at most a constant $p \in (0, 1)$ when $c > 9.373$.

    Due to the structure of the triangular lattice $\Gtri$, at most six amoebots may be concurrently expanding into the same node.
    We now establish that pairwise collisions of any of these amoebots' expansions are independent.
    Given each expansion attempt's starting time and expansion duration---which are fixed by the asynchronous execution---the interval of expansion is entirely determined by the delay.
    Since each delay is drawn independently and uniformly from $[5T, cT]$, each pair of expansions' time intervals and thus also their collision is independent.
    So, fixing an amoebot $A_1$,
    \begin{align*}
        & \Pr{\text{an expansion of $A_1$ succeeds} \; | \; t_1, e_1} \\
        &= \Pr{\text{the expansions of $A_1$ and $A_i$ do not collide} \; | \; t_1, t_i, e_1, e_i, \forall i \neq 1} \\
        &= \prod_{i \neq 1} (1 - \Pr{\text{the expansions of $A_1$ and $A_i$ collide} \; | \; t_1, t_i, e_1, e_i}) \\
        &> (1 - p)^5,
    \end{align*}
    which is a constant probability since $p$ is a constant probability.

    In order to amplify this success probability for the desired w.h.p.\ result, we must establish independence of expansion attempts.
    We have already shown that pairwise collisions of amoebots' expansions are independent, but this is insufficient to establish the independence of subsequent expansion attempts.
    In particular, $A_1$ and $A_2$ may collide while concurrently attempting to expand, causing them both to retract before reattempting their expansions.
    A third amoebot $A_3$ could then expand and collide with $A_1$ or $A_2$ while they are retracting, causing $A_3$ to also retract; a fourth amoebot $A_4$ could then expand and collide with $A_3$ while it retracts, and so on.
    In the worst case, if the expansions of $A_1$ and $A_2$ collide at time $t$, these cascading expansion-retraction collisions can continue until time $t + 5T$; this occurs if all retractions take the maximum time $T$ and each amoebot $A_i$ (for $i = 3, \ldots, 6$, since there are at most six competing amoebots) collides with retracting amoebot $A_{i-1}$ at the last possible moment.
    However, it is impossible for these cascading collisions to continue after $t + 5T$: the earliest an amoebot could reattempt its expansion is after time $t + 5T$ if $A_1$ or $A_2$ immediately retracted after colliding at time $t$ and then sampled the minimum possible delay, $5T$.
    Therefore, the expansion attempt of an amoebot $A_i$ is  independent of any of its previous attempts.
    So we have:
    \begin{align*}
        & \Pr{\text{no amoebot successfully expands after $k$ attempts}} \\
        &\leq \Pr{\text{$A_1$ collides in all $k$ expansion attempts}} \\
        &= \Pr{\text{$A_1$ collides in its $i$-th expansion attempt, $\forall i = 1, \ldots, k$}} \\
        &= \prod_{i=1}^k \left(1 - \Pr{\text{$A_1$ succeeds in its $i$-th expansion attempt} \; | \; t_1^i, e_1^i}\right) \\
        &< \left(1 - (1 - p)^5\right)^k
    \end{align*}
    Setting $k = \ln n / (1 - p)^5$, we have the probability that no amoebot successfully expands after $k$ attempts is at most:
    \[\left(1 - (1 - p)^5\right)^k \leq \exp\left\{-(1 - p)^5 \cdot \frac{\ln n}{(1 - p)^5}\right\} = \frac{1}{n}\]

    Once an amoebot's expansion succeeds, it connects to all its new neighbors causing any contending expansions to immediately fail.
    Therefore, we conclude that exactly one amoebot will successfully expand in at most $\ln n = \bigo{\log n}$ attempts with high probability.
\end{proof}

\end{document}